\DeclareMathOperator*{\Res}{Res}
\renewcommand{\Re}{\mathrm{Re}\,}
\renewcommand{\Im}{\mathrm{Im}\,}
\newcommand{\Ai}{\mathrm{Ai}}
\renewcommand{\vec}{\mathbf}
\newcommand{\q}{{\mathsf{q}}}
\newcommand{\msf}{\mathsf}
\newcommand{\re}{\Re}
\newcommand{\Boh}{\mathcal{O}}
\newcommand{\sgn}{\mathrm{sgn}}
\def\blue{\color{blue}}
\newtheorem{theorem}{Theorem}[section]
\newtheorem{lemma}[theorem]{Lemma}
\newtheorem{corollary}[theorem]{Corollary}
\theoremstyle{definition}
\theoremstyle{remark}
\newtheorem{remark}[theorem]{Remark}
\numberwithin{equation}{section}
\begin{document}
\title{The multiplicative constant in  asymptotics of higher-order analogues of the Tracy-Widom distribution }

\author{Dan Dai\footnotemark[1],~ Wen-Gao Long\footnotemark[2],~ Shuai-Xia Xu\footnotemark[3],~ Lu-Ming Yao\footnotemark[4],~ Lun Zhang\footnotemark[5]}

\renewcommand{\thefootnote}{\fnsymbol{footnote}}
\footnotetext[1]{Department of Mathematics, City University of Hong Kong, Tat Chee
Avenue, Kowloon, Hong Kong.\\ E-mail: \texttt{dandai@cityu.edu.hk}}
\footnotetext[2]{School of Mathematics and Statistics, Hunan University of Science and Technology, Xiangtan, Hunan, China.
E-mail: \texttt{longwg@hnust.edu.cn}} 
\footnotetext[3]{Institut Franco-Chinois de l'Energie Nucl\'{e}aire, Sun Yat-sen University,
	Guangzhou 510275, China. E-mail: \texttt{xushx3@mail.sysu.edu.cn}}
	\footnotetext[4]{Institute for Advanced Study, Shenzhen University, Shenzhen 518060, China. E-mail: \texttt{lumingyao@szu.edu.cn}}
\footnotetext[5] {School of Mathematical Sciences, Fudan University, Shanghai 200433, China. E-mail: \texttt{lunzhang@fudan.edu.cn }}

\date{\today}

\maketitle

\begin{abstract}
In this paper, we are concerned with higher-order analogues of the Tracy-Widom distribution, which describe the eigenvalue distributions in unitary random matrix models near critical edge points. The associated kernels are constructed by functions related to the even members of the Painlev\'{e} I hierarchy $\mathrm{P_{I}^{2k}}, k\in\mathbb{N}^{+}$, and are regarded as higher-order analogues of the Airy kernel. We present a novel approach to establish the multiplicative constant in the large gap asymptotics of the distribution, resolving an open problem in the work of Clayes, Its and Krasovsky. An important new feature of the expression is the involvement of an integral of the Hamiltonian associated with a special, real, pole-free solution for $\mathrm{P_{I}^{2k}}$. In addition, we show that the total integral of the Hamiltonian vanishes for all $k$, and establish a transition from the higher-order Tracy-Widom distribution to the classical one in the asymptotic regime. Our approach can also be adapted to calculate similar critical constants in other problems arising from mathematical physics.



\end{abstract}

\tableofcontents

\section{Introduction and statement of results}

Let $\mathcal{M}_n$ be the space of $n \times n$ Hermitian matrices $M=(M_{ij})_{1\leq i,j \leq n}$, equipped with the probability measure
\begin{equation}\label{eq:probmeasure}
\frac{1}{\mathcal{Z}_n}e^{- n\textrm{tr}\, V(M)}dM.
\end{equation}
Here,
\begin{equation}
dM=\prod_{i=1}^{n}dM_{ii}\prod_{i=1}^{n-1}\prod_{j=i+1}^{n}d\Re M_{ij}d\Im M_{ij}
\end{equation}
denotes the Lebesgue measure for Hermitian matrices,
\begin{equation}
 \mathcal{Z}_n=\int_{\mathcal{M}_n} e^{- n\textrm{tr}V(M)}dM
\end{equation}
is the normalization constant, and the potential $V$ is a real analytic function over $\mathbb{R}$ satisfying
$$
\lim_{|x|\to\infty}\frac{V(x)}{\log(1+x^2)}=+\infty.
$$

Due to the unitary invariant feature of \eqref{eq:probmeasure}, it is well-known that (cf. \cite{Deift1999,Meh}) the eigenvalues of $M$ form a determinantal point process characterized by the correlation kernel
\begin{equation}\label{def:Kn}
K_n(x,y):=e^{-\frac{n}{2}(V(x)+V(y))}\sum_{i=0}^{n-1}p_i(x)p_i(y),
\end{equation}
where $p_j(x)=\kappa_jx^j+\cdots$, $\kappa_j>0$, are orthonormal polynomials associated with the weight function $e^{-nV(x)}$ over $\mathbb{R}$, {\it i.e.},
\begin{equation}
\int_\mathbb{R}p_i(x)p_j(x)e^{-nV(x)}dx=\delta_{i,j}.
\end{equation}

Various eigenvalue statistics are encoded in the correlation kernel $K_n$. The assumption on $V$ ensures that the limiting mean eigenvalue distribution $\mu_V$ admits a density function $\rho_V$, which can be retrieved from the relation
\begin{equation}
\lim_{n\to\infty}\frac 1n K_n(x,x)=\rho_V(x);
\end{equation}
see \cite{Deift1999,DKM}. Moreover, $\mu_V$ is the unique equilibrium measure that minimizes the logarithmic energy functional \cite{Saff}
\begin{equation}
I_{V}(\mu)=\iint \log \frac{1}{ |x-y|}d\mu(x)d\mu(y)+\int V(x)d\mu(x),
\end{equation}
among all the probability measure $\mu$ on $\mathbb{R}$. The remarkable result in \cite{DKM} shows that $\textrm{supp}(\mu_V)$ consists of a finite union of intervals, and
\begin{equation}
\rho_V(x)=\frac{1}{\pi}\sqrt{\mathcal{Q}_V^-(x)},
\end{equation}
where $\mathcal{Q}_V$ is a real analytic function depending on $V$ and $\mathcal{Q}_V^-$ denotes its negative part.

In contrast to the limiting mean distribution, the local statistics of $M$ are \textit{universal} as the matrix size tends to infinity in the sense that it depends only on the local behaviors of the density function $\rho_V$; see \cite{Bleher99,DG07,DKMVZ99,Pastur97}. This particularly implies that the microscopic limits of $K_n$ will converge to several definite limiting kernels. For example, since generically $\rho_V$ vanishes as a squared root at the endpoints of $\textrm{supp}(\mu_V)$ \cite{Kui00}, one encounters the so-called soft-edge universality. More precisely, let $b_V$ be the rightmost endpoint of $\textrm{supp}(\mu_V)$, there exists a constant $c_V$ such that
\begin{equation}
\lim_{n\to \infty}\frac{1}{c_V n^{2/3}}K_n\left(b_V+\frac{x}{c_Vn^{2/3}}, b_V+ \frac{y}{c_Vn^{2/3}}\right)
=K_{\Ai}(x,y),
\end{equation}
uniformly for $x$ and $y$ in any compact subset of $\mathbb{R}$, where
\begin{equation}\label{def:Aikernel}
K_{\Ai}(x,y)=\frac{\Ai(x)\Ai'(y)-\Ai'(x)\Ai(y)}{x-y}
\end{equation}
is the Airy kernel with $\Ai$ being the Airy function. This, together with the determinantal structure for the eigenvalues of $M$,  also implies that (cf. \cite{DKMVZ99})
\begin{equation}
\lim_{n\to\infty}\textrm{Prob}\left(c_Vn^{\frac23}(\lambda_n-b_V)<s\right)=\det (I-\mathbb{K}^{\Ai}_{s}),
\end{equation}
where $\lambda_n$ is the largest eigenvalue of $M$ and $\mathbb{K}^{\Ai}_{s}$ stands for the trace-class operator acting on $L^2(s,\infty)$ with the Airy kernel \eqref{def:Aikernel}.

The distribution $F_{\mathrm{TW}}(s):=\det (I-\mathbb{K}^{\Ai}_{s})$ is the celebrated Tracy-Widom distribution, whose appearance in a variety of physical, combinatorial and probabilistic models highlights the underlying connections among seemingly different areas; cf. \cite{CDI22}. An interesting result established in \cite{TW94} shows that the Tracy-Widom distribution admits the following integral representation:
\begin{equation}\label{eq:TWformula}
F_{\mathrm{TW}}(s)=\exp\left(-\int_s^{\infty}(x-s)q(x)^2 d x\right),
\end{equation}
where $q$ is the Hastings-McLeod solution \cite{HM} of the Painlev\'{e} II equation
\begin{equation}\label{def:PII}
q''(x)=xq(x)+2q(x)^3,
\end{equation}
fixed by the boundary condition
\begin{equation}\label{eq:HMasy}
q(x)\sim \left\{
           \begin{array}{ll}
             \Ai(x), & \hbox{$ \quad x\to +\infty$,} \\
             \sqrt{-x/2}, & \hbox{$\quad x\to -\infty$.}
           \end{array}
         \right.
\end{equation}
Combining \eqref{eq:TWformula} and \eqref{eq:HMasy}, the large gap asymptotics of the Airy-kernel determinant reads
\begin{equation} \label{eq:TW-large gap asy}
  \log F_{\mathrm{TW}}(s)=\log\det(I-\mathbb{K}^{\mathrm{Ai}}_{s}) =-\frac{|s|^3}{12}-\frac{1}{8}\log|s|+\chi^{(0)}+\mathcal{O}(|s|^{-\frac32}), \quad s\to -\infty,
\end{equation}
where the constant term $\chi^{(0)}$ is conjectured in \cite{TW94} to be
\begin{equation}\label{def:chi0}
\chi^{(0)}=\frac{1}{24}\log 2+\zeta'(-1)
\end{equation}
with $\zeta(z)$ being the Riemann zeta function. This conjecture was independently proved in \cite{BRD08,DIK2008}; see also \cite{KM24} for the Airy-kernel determinant on two large intervals. 

By tuning the potential $V$ in \eqref{eq:probmeasure}, it is possible that the limiting mean density $\rho_V$ vanishes faster than a square root at the rightmost endpoint $b_V$, which leads to singular cases. According to \cite{DKM,Kui00}, $\mathcal{Q}_V$ has a zero of order $4k+1$, $k \in \mathbb{N}=0,1,2,\ldots$, at $b_V$, which means that
\begin{equation}\label{eq:singrho}
\rho_V(x) \sim \textsf{c}_V |x-b_V|^{\frac{4k+1}{2}}, \qquad x \to b_V,
\end{equation}
for some positive constant $\textsf{c}_V$. At the singular edge points, the (multiple) scaling limit of the correlation kernel will be built from functions relevant to the Painlev\'{e} I hierarchy, as conjectured in the physical literature \cite{BB91,BMP90} and rigorously proved in \cite{CV07} for the case $k=1$. More precisely, to obtain the most general result, one could deform the singular potential to be
\begin{equation}
 \widehat{V}(x) = V(x)+\sum_{i=0}^{2k-1}T_iV_i(x)
\end{equation}
for some functions $V_i$, where $T_i$ are constants. In the multiple scaling regime, {\it i.e.}, $n\to \infty$ and simultaneously $T_i \to 0$ at an appropriate rate in $n$, it is expected that
\begin{equation}\label{eq:singularlim}
\lim_{n\to \infty}\frac{1}{c_V n^{2/(4k+3)}}K_n\left(b_V+\frac{u}{c_Vn^{2/(4k+3)}}, b_V + \frac{v}{c_Vn^{2/(4k+3)}}\right)=K^{(k)}(u,v),
\end{equation}
where the limiting kernel $K^{(k)}(u,v)=K^{(k)}(u,v;x,t_1,\ldots,t_{2k-1})$
depends on the real parameters $x,t_1, \ldots, t_{2k-1}$ related to the scaling of $T_0,\ldots,T_{2k-1}$. Since the function $K^{(k)}$ is given explicitly in terms of solutions of the Lax pair associated with a distinguished solution of $2k$-th member of the first Painlev\'{e} hierarchy (see \eqref{def:limKer} below for the precise definition), we call $K^{(k)}$ the $\mathrm{P_{I}^{2k}}$ kernel.

As a consequence of \eqref{eq:singularlim}, it is readily seen that, as $n\to \infty$ and after proper scaling, the fluctuations of the largest eigenvalue of $M$ around the edge $b_V$ in the singular cases is governed by the higher-order analogues of the Tracy-Widom distribution $\det(I-\mathbb{K}_{s}^{(k)})$, where $\mathbb{K}_{s}^{(k)}$ is the trace-class operator acting on $L^2(s,\infty)$ with the $\mathrm{P_{I}^{2k}}$ kernel \eqref{eq:singularlim}. A natural question then is to explore the integrable structure and large gap asymptotics of this determinant, following the classical results \eqref{eq:HMasy} and \eqref{eq:TW-large gap asy} of the Tracy-Widom distribution. The investigation of this direction has been initiated by Claeys, Its and Krasovsky in \cite{CIK10}. It comes out that $\frac{d}{ds}\det(I-\mathbb{K}_{s}^{(k)})$  can be explicitly expressed in terms of a special smooth solution to the Painlev\'{e} II hierarchy, which generalizes the Tracy-Widom formula \eqref{eq:TWformula}. 
In addition, as $s\to-\infty$, it follows from \cite[Theorem 1.5]{CIK10} that 
\begin{equation}\label{eq:largegapCIK}
\begin{split}
		\log \det(I-\mathbb{K}_{s}^{(k)}) &= \frac{1}{4}\alpha_k^2\frac{s^{4k+3}}{4k+3}+\frac{\alpha_k}{2(2k+2)}xs^{2k+2}+\sum_{m=2}^{4k+1}a_m |s|^{m} +\frac{x^2s}{4} \\
  & \quad -\frac{2k+1}{8}\log|s|+C^{(k)}+\mathcal{O}(|s|^{-2}),
   \end{split}
\end{equation}
where
\begin{equation} \label{eq-def-Ak} 
	\alpha_k:=
	\frac{2\Gamma\left(2k+\frac{3}{2}\right)}{\Gamma\left(2k+2\right)\Gamma\left(\frac{3}{2}\right)}
\end{equation}
with $\Gamma(z)$ being the gamma function, 
$a_m$ are functions of $x,t_1,\ldots,t_{2k-1}$ and vanish when $t_1=t_{2}=\ldots=t_{2k-1}=0$. The $s$-independent constant $C^{(k)}$ is unknown except for $k=0$. For our purpose, the variable and parameters used here are slightly different from those used in \cite{CIK10}, and they are related to each other through simple scalings 
\begin{equation}
\begin{split}
	&\det(I-\mathbb{K}_{s}^{(k)})(x,\ldots,t_j,\ldots,t_{2k-1})
	\\
	&\qquad =\det\left(I-\mathbb{K}_{2^{\frac{2}{4k+3}}s}^{(k)}\right)_{\mathrm{CIK}}\left(-2^{-\frac{4k+4}{4k+3}}x,\ldots,(-1)^{j-1}2^{1-\frac{2j+1}{4k+3}}t_j,\ldots,2^{\frac{2}{4k+3}}t_{2k-1}\right),
 \end{split}
\end{equation}
where $\det(I-\mathbb{K}_{s}^{(k)})_{\mathrm{CIK}}$ stands for the determinant in \cite{CIK10}. Thus, it follows that
\begin{equation}
   \det(I-\mathbb{K}_{s}^{(0)})=F_{\textrm{TW}}(2^{\frac23}s). 
\end{equation}

Clearly, there is no accurate description of the large gap asymptotics for the higher-order Tracy-Widom distribution without the explicit formula of $C^{(k)}$. Evaluation of the constant term in asymptotics of various distributions, the Painlev\'{e} equations or Hankel/Toeplitz determinants arising from mathematical physics, however, is a great challenge with a long history; cf. \cite{DIK13,ILP18,IP16,IP18,Kra09}. The constant term appearing in the asymptotics of sine determinant is also known as the Widom-Dyson constant, which was first obtained by Dyson \cite{Dyson76} based on an earlier work of Widom \cite{Widom}. A rigorous derivation of the Widom-Dyson constant was given independently in \cite{Ehr06,Kra04}; see also \cite{BK24,DIKZ07}. In addition to the Airy-kernel and sine-kernel determinants, the multiplicative constant problems were solved in \cite{DKV11,Ehr10} for the Bessel-kernel determinants, in \cite{CLM21a,CLM21b} for the determinants of some generalized Bessel kernels, in \cite{BI14,DXZ18,XD} for the determinants of Painlev\'{e}-type kernels and in \cite{BS24,Ch24} for two-dimensional point processes. 

It is the aim of this work to resolve the constant problem for the higher-order Tracy-Widom distribution by a novel approach. We present our main results in what follows.

%

\subsection{Main results}\label{sec:results}
We start with the introduction of the  Painlev\'{e} I hierarchy $\mathrm{P_{I}^m}$; cf. \cite{Gordoa,Kud97,Mugan,Shim04}. The $m$-th member
of $\mathrm{P_{I}^m}$ is a 
nonlinear ordinary differential equation of order $2m$
defined by 
\begin{equation}\label{def:PIm}
	x+\mathcal{L}_m(q)+\sum_{j=1}^{m-1}t_j\mathcal{L}_{j-1}(q)=0,\qquad t_1,\ldots,t_{m-1}\in\mathbb{R},
\end{equation}
where the operator $\mathcal{L}$ satisfies the Lenard-Magri recursion relation
\begin{align}\label{LOdef}
	\begin{cases}
		\frac{d}{dx}\mathcal{L}_{k+1}(q)=\bigg(\frac14 \frac{d^3}{dx^3}-2q\frac{d}{dx}-q_x \bigg)\mathcal{L}_{k}(q), \quad k=0,\ldots,m-1,\\
		\mathcal{L}_{0}(q)=-4q,\quad \mathcal{L}_{j}(0)=0, \quad j=1,\ldots,m.
	\end{cases}
\end{align}
If $m=1$ in \eqref{def:PIm}, one recovers the Painlev\'{e} I equation $q_{xx}=6q^2+x$, and the equation for $m=2$ reads
	\begin{equation}\label{eq-PI2-this-paper}
	q_{xxxx}=4x-40q^3+10q_{x}^2+20qq_{x}-16t_1q.
\footnote{After the rescalings 
	$U=-60^{2/7}q$, $X=60^{-1/7}x$, and $T=-4\times 60^{-3/7}t_1$,
	this equation reduces to 
	\begin{equation*}\label{eq-PI2-previous-paper}
		\frac{1}{240}U_{XXXX}+\frac{1}{24}(U_{X}^2+2UU_{XX})+
		\frac{1}{6}U^3+X-TU=0,
	\end{equation*}which is the $\mathrm{P_{I}^2}$ equation studied in several literature \cite{Claeys-2012,CV07, Grava-Kapaev-Klein-2015}.}
\end{equation}

The relevance to our work is the even member of the Painlev\'{e} I hierarchy. It follows from a series of works \cite{Claeys-2012,Claeys-Vanlessen-2007,Kap95} that there exists a real and pole-free solution $\mathsf{q}$ to each equation $\mathrm{P_{I}^{2k}}$ with the boundary condition
\begin{equation}\label{eq:asyq}
	\q(x)=\frac{1}{2}\alpha_k^{-\frac{1}{2k+1}}x^{\frac{1}{2k+1}}
 +\mathcal{O}\left(|x|^{-\frac{1}{2k+1}}\right),
 \qquad x\to\pm\infty,
\end{equation}
where $\alpha_k$ is given in \eqref{eq-def-Ak}. 
 For $\mathrm{P_{I}^{2}}$, $\q$ is called the tritronqu\'{e}e solution in \cite{Grava-Kapaev-Klein-2015}, and it is worth noting that this family of special functions is essential to describe the critical behaviors for the solutions of a large class of Hamiltonian PDEs \cite{Claeys-2012,Dub06,Dub08,Dub09}; see also \cite{CG09,CG12} for the studies in the context of Korteweg-de Vries equation and its hierarchy. It comes out the Hamiltonian associated with $\q$ will appear in the evaluation of the multiplicative constant $C^{(k)}$. 

\begin{theorem}
\label{Thm-asym-PI2k-determinant}
For $k=1,2,\ldots$, let $\mathbb{K}_{s}^{(k)}$ be the trace-class operator acting on $L^2(s,\infty)$ with the kernel $K^{(k)}(u,v)$  \eqref{eq:singularlim} and define
\begin{equation} \label{F-def}
F(s;x,t_1,\ldots,t_{2k-1}):=\log\det(I-\mathbb{K}_{s}^{(k)}).
\end{equation}
Assume that $t_{1}=t_{2}=\ldots=t_{2k-1}=0$, we have, as $s\to-\infty$,  
\begin{align}\label{eq-Fredholm-asym}
F(s;x)&=F(s;x,0,\ldots,0)
\nonumber \\
& =\frac{1}{4(4k+3)}\alpha_k^2s^{4k+3}+\frac{\alpha_k}{2(2k+2)}xs^{2k+2}+\frac{x^2s}{4}-\frac{1}{8}\log\left|\alpha_ks^{2k+1}+x\right|
\nonumber \\
		&~~~-I_{h}(x)+\frac{(2k+1)^2}{2(2k+2)(4k+3)}\alpha_k^{-\frac{1}{2k+1}}\cdot x^{\frac{4k+3}{2k+1}}+\frac{k\log(x^2+1)}{24(2k+1)} \nonumber\\
&~~~+\frac{\log(2k+1)}{24}+\frac{\log{\alpha_k}}{24(2k+1)}+\chi^{(0)}+\mathcal{O}(|s|^{-\epsilon_{0}}), 
\end{align}
uniformly valid for  
$
x\in\left[-c_{1}|s|^{2k+1},\alpha_k|s|^{2k+1}-c_{2}|s|^{\frac{2k}{3}+\epsilon}\right]
$ with $c_{1}$, $c_{2}$ being arbitrarily fixed real positive numbers and fixed $\epsilon\in(0,\frac{4k}{3}+1)$, where $\epsilon_{0}=\min\{\frac{1}{6},2\epsilon\}$. Here, the constants $\alpha_k$ and $\chi^{(0)}$ are defined in \eqref{eq-def-Ak} and \eqref{def:chi0}, respectively, 
\begin{equation}\label{def-J0-two-representation}
	I_{h}(x)
	:=-\int_{-\infty}^{x}\left[h(\mu)-h_{\mathrm{Asy}}(\mu)\right]d\mu=\int_{x}^{+\infty}\left[h(\mu)-h_{\mathrm{Asy}}(\mu)\right]d\mu,
\end{equation}
where $h$ is the Hamiltonian associated with the special solution $\mathsf{q}$ of the Painlev\'{e} I hierarchy $\mathrm{P_{I}^{2k}}$ \eqref{def:PIm} and 
\begin{equation}\label{eq-asym-h-infty}
	h_{\mathrm{Asy}}(x):=\frac{(2k+1)}{2(2k+2)}\alpha_k^{-\frac{1}{2k+1}}\cdot x^{\frac{2k+2}{2k+1}}+\frac{kx}{12(2k+1)(x^2+1)}.
\end{equation}
\end{theorem}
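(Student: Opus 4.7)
The plan is to combine a differential identity for $\partial_{x}F(s;x)$ with a nonlinear steepest descent analysis of the underlying Riemann--Hilbert problem, and then pin down the $s$-independent piece by a matching argument with the classical Tracy--Widom asymptotic \eqref{eq:TW-large gap asy} at the upper end of the admissible range of $x$. The novel ingredient, namely the Hamiltonian integral $I_{h}(x)$, enters precisely at the integration step once the leading part of $\partial_{x}F$ is identified with $-h(x)$.

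First, starting from the tau-function interpretation of $F(s;x)$ and the Lax pair for the $\mathrm{P_{I}^{2k}}$ hierarchy developed in \cite{CIK10}, I would derive an asymptotic differential identity of the form
\[
\partial_{x}F(s;x)=-h(x)+\partial_{x}\bigl[\text{explicit polynomial and log terms in }s,x\bigr]+\mathcal{O}(|s|^{-\epsilon_{0}}),
\]
uniform for $x$ in the prescribed range. The leading term $-h(x)$ should appear after subjecting the RH problem for $\det(I-\mathbb{K}_{s}^{(k)})$ to the usual chain of Deift--Zhou transformations --- $g$-function normalisation, opening of lenses, construction of a global parametrix and of an Airy parametrix at the soft edge --- because the resulting small-norm problem has a leading trace equal to the Hamiltonian of the distinguished pole-free solution $\mathsf{q}$ of $\mathrm{P_{I}^{2k}}$.

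Next, I would integrate this identity in $x$. Since $h(\mu)$ grows like a power at $\pm\infty$ together with a $1/\mu$-type tail, subtract off the explicit asymptotic $h_{\mathrm{Asy}}$ from \eqref{eq-asym-h-infty}; the regularised integrand $h-h_{\mathrm{Asy}}$ is integrable, and its antiderivative is exactly $-I_{h}(x)$. The remaining $h_{\mathrm{Asy}}$-integral contributes the explicit terms $\tfrac{(2k+1)^{2}}{2(2k+2)(4k+3)}\alpha_{k}^{-1/(2k+1)}x^{(4k+3)/(2k+1)}$ and $\tfrac{k}{24(2k+1)}\log(x^{2}+1)$ of \eqref{eq-Fredholm-asym}. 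At this point one arrives at
\[
F(s;x)=-I_{h}(x)+\bigl[\text{explicit }s,x\text{ terms}\bigr]+G(s)+\mathcal{O}(|s|^{-\epsilon_{0}}),
\]
for some $x$-independent function $G(s)$.

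Finally, $G(s)$ is fixed by matching with the classical Tracy--Widom asymptotic. At the upper boundary $x=\alpha_{k}|s|^{2k+1}-c_{2}|s|^{2k/3+\epsilon}$, an explicit affine rescaling degenerates the $\mathrm{P_{I}^{2k}}$ kernel into the Airy kernel, so $F(s;x)$ there must reduce, modulo controlled errors, to \eqref{eq:TW-large gap asy} whose $s$-independent part is the Deift--Its--Krasovsky constant $\chi^{(0)}$. With the $x$-dependent side already computed through $I_{h}$, this comparison determines $G(s)$ and produces the remaining $k$-dependent contributions $\tfrac{\log(2k+1)}{24}+\tfrac{\log\alpha_{k}}{24(2k+1)}+\chi^{(0)}$. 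The hard part will be controlling the steepest descent uniformly through this transition regime, where the $\mathrm{P_{I}^{2k}}$ parametrix must be replaced by an Airy parametrix and every multiplicative constant tracked through the affine rescaling; a companion ingredient is the identity $\int_{\mathbb{R}}[h-h_{\mathrm{Asy}}]\,d\mu=0$, which guarantees the equivalence of the two representations of $I_{h}(x)$ in \eqref{def-J0-two-representation} and which can be established independently from the Hamiltonian structure of $\mathsf{q}$.
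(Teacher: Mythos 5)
Your first step---deriving a differential identity $\partial_x F(s;x)=h(x)+[\text{explicit terms}]+\text{error}$ from the RH problem and then integrating in $x$ to produce the Hamiltonian integral $I_h(x)$ and the explicit $x$-dependent terms---is aligned with what the paper does (compare Lemma \ref{lem-asym-dF/dmu}; note that the sign there is $+h(x)$, not $-h(x)$, and the Hamiltonian enters through the model RH data $(\Psi_{-1})_{12}=h$, not through a trace of the small-norm residual as you suggest). However, the way you propose to determine the $x$-independent remainder $G(s)$ is where a genuine gap appears. You suggest ``matching with the classical Tracy--Widom asymptotic'' by evaluating $F$ at the upper boundary $x\approx\alpha_k|s|^{2k+1}-c_2|s|^{2k/3+\epsilon}$ and invoking the fact that the $\mathrm{P_I^{2k}}$ kernel degenerates to the Airy kernel there. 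But kernel-level degeneration does not by itself yield the constant term in the asymptotics of the \emph{Fredholm determinant}: you would need a uniform, quantitative transition estimate for $\log\det(I-\mathbb K_s^{(k)})$ across the double-scaling regime, with error $o(1)$ after subtracting the diverging cubic and logarithmic pieces. This is precisely the hardest technical content of the problem, and your proposal leaves it unproved. (Indeed the paper's transition result, Corollary 1.5 of the form \eqref{eq:Ftansition}, is a \emph{consequence} of Theorem \ref{Thm-asym-PI2k-determinant}, so invoking it to pin down $G(s)$ would be circular unless you first establish it by independent means.)

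The paper handles this step quite differently and more concretely: it sets $x_0=\pm|s|^{2k+1}$, writes $F(s;x)$ as an integral along an L-shaped contour in the $(x,s)$-plane, and evaluates the vertical piece $\int_s^{+\infty}\partial_\tau F(\tau;x_0)\,d\tau$ by splitting it into the three regions of Figure \ref{Fig:asy-region}. The constant $\chi^{(0)}$ emerges from the transition sub-integral over $(s_1,s_2)$, where Lemma \ref{lem-dF/dtau-case-II} (proved via the Painlev\'e XXXIV parametrix of Appendix \ref{sec-appendix-p34-parametrix}, not an Airy parametrix) shows $\partial_\tau F\approx(\partial_\tau\chi)\,H_{\mathrm{P_{II}}}(\chi)$; the resulting integral of $H_{\mathrm{P_{II}}}$ is then evaluated by the Tracy--Widom formula \eqref{eq-tracy-widom-formula} and its large-argument expansion. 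Your proposal would need to reconstruct this same mechanism; as written, it asserts the conclusion of the transition analysis without supplying it. A secondary issue: you treat the total-integral identity $\int_{\mathbb R}[h-h_{\mathrm{Asy}}]=0$ as a companion \emph{input} justifying the two representations of $I_h$, whereas in the paper it is an \emph{output} (Corollary 1.3) obtained by running the argument once with $x_0=+|s|^{2k+1}$ and once with $x_0=-|s|^{2k+1}$ and comparing; this makes the choice of integration reference point in your step (b) essential to spell out.
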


The Hamiltonian $h(x)=h(x,t_1,\ldots,t_{2k-1})$ is related to  $\mathsf{q}$  through the relation
$dh(x)/dx=\mathsf{q}(x)$, and following the analysis in \cite{Claeys-2012} (see also Appendix \ref{PIasy} below), we have, if $t_{1}=t_{2}=\ldots=t_{2k-1}=0$,
\begin{align}\label{eq-h(x)-asym-expansio}
	h(x)=h_{\mathrm{Asy}}(x)+\mathcal{O}\left(|x|^{-\frac{8k+5}{4k+2}}\right), \qquad x \to \pm \infty,
\end{align}
where $h_{\mathrm{Asy}}$ is defined in \eqref{eq-asym-h-infty}. Thus, the function $I_h$ in \eqref{def-J0-two-representation} is well-defined. We also mention that it is possible to improve the error estimate in \eqref{eq-Fredholm-asym}. Since the main focus of this work is on the constant term and uniformity of the expansion in $x$, we do not pursue an optimal error estimate here. 

We assume that all the parameters $t_j$, $j=1,\ldots,2k-1$, vanish in Theorem \ref{Thm-asym-PI2k-determinant} to simplify the asymptotic formula. For general non-zero $t_j$, it is expected that there will be extra terms for the powers of $s$ and $x$, whose coefficients are dependent on $t_j$. Our analysis is also applicable to handle this case. Below, we present the result for $k=1$; see Remark \ref{rem-tj-not-vanishe} below for some explanations.
\begin{corollary}\label{cor-asym-PI2k-determinant}
Let $h(\mu,t_{1})$ be the Hamiltonian corresponding to the $\mathrm{P_{I}^2}$ tritronqu\'{e}e solution $\mathsf{q}$. As $s\to-\infty$, we have
\begin{align}\label{eq-Fredholm-asym-k=1}
	\log\det(I-K_{s}^{(1)})&=\frac{25}{448}s^7+\frac{t_1}{4}s^5+\frac{5x}{32}s^4+\frac{t_1^2}{3}s^3+\frac{t_1x}{2}s^2+\frac{x^2}{4}s-\frac{\log\left|\frac{5}{4}s^3+x+2t_1s\right|}{8} 
 \nonumber \\
		& ~~~ -J_{0}(x,t_1)+J_{1}(x,t_1)+\frac{\log(x^2+1)}{72}
  \nonumber \\
   & ~~~ +\frac{\log{3}}{24}+\frac{\log{(5/4)}}{72}+\chi^{(0)}+o(1),
\end{align}
uniformly valid for $t_1$ in any compact subset of $\mathbb{R}$ and $x\in\left[-c_{1}|s|^3,\frac{5}{4}|s|^3-c_{2}|s|^{\frac{2}{3}+\epsilon}\right]$ with $c_{1}, c_{2}>0$ and $\epsilon\in(0,\frac{7}{3})$ being fixed, where $\chi^{(0)}$ is defined in \eqref{def:chi0},
\begin{equation}
    J_{0}(x,t_{1})=\int_{x}^{+\infty}\left[h(\mu,t_{1})-h_{\mathrm{Asy}}(\mu,t_{1})\right]d\mu
\end{equation}
with 
\begin{equation}
    h_{\mathrm{Asy}}(x,t_{1})=\frac{3}{8}\left(\frac{4}{5}\right)^{\frac{1}{3}}x^{\frac{4}{3}}-\frac{t_1}{2}\left(\frac{4}{5}\right)^{\frac{2}{3}}x^{\frac{2}{3}}+\frac{4t_1^2}{15}-\frac{8}{135}\left(\frac{4}{5}\right)^{\frac{1}{3}}t_1^3x^{-\frac{1}{3}}+\frac{x}{36(x^2+1)}
\end{equation}
and
\begin{equation}
	J_{1}(x,t_1)=\frac{9}{56}\left(\frac{4}{5}\right)^{\frac{1}{3}}x^{\frac{7}{3}}-\frac{3t_1}{10}\left(\frac{4}{5}\right)^{\frac{2}{3}}x^{\frac{5}{3}}+\frac{4t_1^2}{15}x-\frac{8}{45}\left(\frac{4}{5}\right)^{\frac{1}{3}}t_1^3x^{\frac{1}{3}}.
\end{equation}

\end{corollary}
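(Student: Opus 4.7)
The plan is to carry out the proof of Corollary \ref{cor-asym-PI2k-determinant} in parallel with that of Theorem \ref{Thm-asym-PI2k-determinant}, now keeping $t_1$ as an additional real parameter throughout the analysis. Since the $\mathrm{P_I^2}$ kernel $K^{(1)}(u,v;x,t_1)$ is built from a Lax-pair solution associated with the tritronqu\'ee solution $\mathsf{q}(\cdot,t_1)$, the Riemann--Hilbert problem underlying $\det(I-\mathbb{K}_s^{(1)})$ depends analytically on $t_1$. First I would apply the Deift--Zhou steepest descent analysis to this RHP uniformly in $t_1$ over a compact set, carefully tracking how $t_1$ enters the $g$-function, the phase, and the local parametrices.

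The new polynomial contributions in $s$, namely $\frac{t_1}{4}s^5$, $\frac{t_1^2}{3}s^3$ and $\frac{t_1 x}{2}s^2$, should arise directly from the $t_1\mathcal{L}_0(q)$ term in the $\mathrm{P_I^2}$ equation \eqref{eq-PI2-this-paper}, which modifies the polynomial part of the phase function after the opening of the lenses. The shifted logarithmic term $-\frac{1}{8}\log|\frac{5}{4}s^3+x+2t_1 s|$ reflects the fact that the relevant endpoint of the associated equilibrium measure is perturbed from the $t_1=0$ value by an additive correction $-2t_1 s$, since the vanishing condition that determines this endpoint picks up an extra $2t_1 s$ under the $t_1$-deformation. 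An Airy parametrix placed at this moving endpoint then produces exactly the $-\frac{1}{8}\log$ contribution.

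The $x$-dependent piece $-J_{0}(x,t_1) + J_{1}(x,t_1) + \frac{1}{72}\log(x^2+1)$ would be obtained by integrating a differential identity of the form $\partial_{x}F(s;x,t_1) = -h(x,t_1) + R(s,x,t_1)$, with $R$ an explicit polynomial in $s$, from some reference value of $x$. Using the $t_1$-dependent refinement of \eqref{eq-h(x)-asym-expansio} read off from the Lax-pair expansion as in \cite{Claeys-2012}, one then splits $h(\mu,t_1)=h_{\mathrm{Asy}}(\mu,t_1)+(h-h_{\mathrm{Asy}})(\mu,t_1)$: the convergent remainder contributes $-J_0(x,t_1)$, while the explicit asymptotic piece integrates to $J_1(x,t_1)$ together with the $\frac{1}{72}\log(x^2+1)$ correction. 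The remaining $s$- and $t_1$-independent constant, along with the scaling contributions $\frac{\log 3}{24}+\frac{\log(5/4)}{72}$, is then pinned down by matching against the Tracy--Widom asymptotics \eqref{eq:TW-large gap asy} in the limit $x\to -\infty$ inside the admissible range, where the $\mathrm{P_I^2}$ kernel degenerates to a rescaled Airy kernel.

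The main obstacle will be establishing that the error estimate $o(1)$ is uniform jointly in $t_1$ (on compacts) and in $x$ over the full range $\left[-c_1|s|^3,\frac{5}{4}|s|^3-c_2|s|^{2/3+\epsilon}\right]$. Near the upper endpoint $x\approx\frac{5}{4}|s|^3$, two critical points of the $g$-function coalesce and the standard Airy parametrix must be replaced by a Painlev\'e--II--type model requiring delicate uniform asymptotics. A secondary technical point is verifying the precise form of the $t_1$-expansion of $h_{\mathrm{Asy}}(\cdot,t_1)$, in particular the $t_1^3$ correction, which requires pushing the WKB/Lax-pair expansion for $\mathsf{q}(\cdot,t_1)$ to an order beyond that recorded in \eqref{eq-h(x)-asym-expansio}.
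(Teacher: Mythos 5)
The overall plan of tracking the $t_1$-dependence through the Riemann--Hilbert analysis of Theorem~\ref{Thm-asym-PI2k-determinant} is indeed the paper's intended route for this corollary (cf.\ Remark~\ref{rem-tj-not-vanishe}, which records the $t_1$-corrections to $b_0$, $b_1$, and $d_1$ when $k=1$, and notes that they are relatively small as $s\to-\infty$). Your identification of $\frac{5}{4}s^3+x+2t_1 s$ with the $t_1$-perturbed endpoint quantity $-\lambda^3 b_0$ is correct. However, your proposal has a genuine gap in how the additive constant is determined, and it misses the central mechanism of the paper's method.

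You claim that the remaining constant is ``pinned down by matching against the Tracy--Widom asymptotics in the limit $x\to-\infty$ inside the admissible range, where the $\mathrm{P_I^2}$ kernel degenerates to a rescaled Airy kernel.'' No such degeneration occurs at $x\to-\infty$; the Airy/Tracy--Widom regime emerges near the opposite endpoint $x\approx\frac{5}{4}|s|^3$, where $\alpha_1 s^3+x+2t_1s$ is small (compare \eqref{eq:Ftansition}). More importantly, the paper's strategy is not a boundary matching in $x$ at all: the constant is captured through the two-variable integration identity \eqref{eq: F-new-int}, $F(s;x)=-\int_x^{x_0}\partial_\mu F\,d\mu-\int_s^{+\infty}\partial_\tau F\,d\tau$, with the crucial choice $x_0=\pm|s|^{2k+1}$. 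The $\tau$-integral then sweeps across the algebraic growth, transition, and exponential decay regions of Figure~\ref{Fig:asy-region}, and $\chi^{(0)}$ arises precisely from the transition region as $\int H_{\mathrm{P_{II}}}$, via the Painlev\'e~XXXIV parametrix of Appendix~\ref{sec-appendix-p34-parametrix} and the Tracy--Widom integral \eqref{eq-tracy-widom-formula}. Your proposal integrates only $\partial_x F$ and then leaves $F(s;x_0)$ to be supplied by an external matching; but computing $F(s;x_0)$, which requires asymptotics of $\partial_s F$ uniform through all three regions, is the heart of the argument, and the matching you invoke cannot supply it.

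A secondary inaccuracy: the $-\frac{1}{8}\log\bigl|\frac{5}{4}s^3+x+2t_1 s\bigr|$ term is not produced by an Airy parametrix at a moving endpoint. In the algebraic growth region the local parametrix at the scaled point $\zeta=s$ is a modified \emph{Bessel} parametrix (owing to the logarithmic singularity of the Fredholm Riemann--Hilbert problem there), and the log term comes from integrating the $-\tfrac{1}{8}\bigl(\alpha_1 s^3+x+2t_1 s\bigr)^{-1}$ correction in the $t_1$-version of Lemma~\ref{lem-asym-dF/dmu}. The Airy parametrix enters only in the exponential decay region.
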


If $x$ belongs to a compact subset of $\mathbb{R}$, one can further simplify \eqref{eq-Fredholm-asym} by noting that $\log\left|\alpha_ks^{2k+1}+x\right|=\log\left|\alpha_ks^{2k+1}\right|+\mathcal{O}(|s|^{-2k-1})$ as $s\to -\infty$. This particularly leads to \eqref{eq:largegapCIK} with $a_m=0$, $m=2,\ldots,4k+1$, and 
\begin{align}
C^{(k)}&=\int_{-\infty}^{x}\left[h(\mu)-h_{\textrm{Asy}}(\mu)\right]d\mu+\frac{(2k+1)^2}{2(2k+2)(4k+3)}\alpha_k^{-\frac{1}{2k+1}}\cdot x^{\frac{4k+3}{2k+1}}\nonumber
\\
&~~~+\frac{k\log(x^2+1)}{24(2k+1)}+\frac{\log{(2k+1)}}{24}-\frac{1+3k}{{12} (2k+1)}\log{\alpha_k}+\chi^{(0)}.
\end{align}

Two applications of our main theorem are given in sequel. In view of \eqref{def-J0-two-representation}, it is immediate that the total integral of the Hamiltonian corresponding to the special solution $\mathsf{q}$ vanish.
\begin{corollary}
With the same Hamiltonian $h$ as in Theorem \ref{Thm-asym-PI2k-determinant},  
we have
\begin{equation} \label{eq:total-int-H}
	\int_{-\infty}^{+\infty}\left[h(\mu)-h_{\mathrm{Asy}}(\mu)\right]d\mu=0.
\end{equation} 	
\end{corollary}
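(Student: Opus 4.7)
The plan is to deduce the vanishing of the total integral as a direct consequence of Theorem \ref{Thm-asym-PI2k-determinant}. I first observe that the claim is equivalent to the equality of the two representations of $I_{h}(x)$ in \eqref{def-J0-two-representation}, since their difference is precisely $\int_{-\infty}^{+\infty}[h(\mu)-h_{\mathrm{Asy}}(\mu)]d\mu$. Both representations are individually well-defined: the decay estimate \eqref{eq-h(x)-asym-expansio} shows that $h-h_{\mathrm{Asy}}$ is of order $|\mu|^{-(8k+5)/(4k+2)}$ at $\pm\infty$, so each one-sided integral converges absolutely.

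The key step is to apply the expansion \eqref{eq-Fredholm-asym} with both representations of $I_{h}(x)$ and compare. Since $F(s;x)=\log\det(I-\mathbb{K}_{s}^{(k)})$ is a single well-defined quantity with a unique asymptotic expansion as $s\to-\infty$, the two resulting expressions must coincide. Concretely, I would send $x\to+\infty$ within the admissible range of the theorem: the representation $I_{h}(x)=\int_{x}^{+\infty}[h-h_{\mathrm{Asy}}]d\mu$ yields $I_{h}(x)\to 0$, leaving the $s$-independent constant as $\chi^{(0)}$ plus the explicit polynomial and logarithmic terms in $x$, whereas the representation $I_{h}(x)=-\int_{-\infty}^{x}[h-h_{\mathrm{Asy}}]d\mu$ yields $I_{h}(x)\to-\int_{-\infty}^{+\infty}[h-h_{\mathrm{Asy}}]d\mu$, contributing an extra $\int_{-\infty}^{+\infty}[h-h_{\mathrm{Asy}}]d\mu$ to the constant. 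Matching the two versions of the asymptotics forces this integral to vanish.

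The main obstacle is not in the corollary itself but in the underlying Theorem \ref{Thm-asym-PI2k-determinant}: one needs to verify that the expansion holds with either representation and that the $s$-independent constant is precisely $\chi^{(0)}$. This is handled by the asymptotic analysis of the underlying Riemann--Hilbert problem, together with the identification of the $s$-independent constant via matching with the classical Tracy--Widom asymptotics \eqref{eq:TW-large gap asy} in an appropriate regime of $x$. Once that identification is in place, the corollary follows immediately by comparing the two forms of the expansion, with no further analysis of the Hamiltonian $h$ required.
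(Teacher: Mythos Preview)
Your proposal is correct and essentially matches the paper's approach. The paper likewise observes that the corollary is immediate from the equality of the two representations of $I_h(x)$ in \eqref{def-J0-two-representation}, and that this equality is established in the proof of Theorem~\ref{Thm-asym-PI2k-determinant} by carrying out the computation of $F(s;x)$ with the two choices $x_0=\pm|s|^{2k+1}$ and noting that the results must coincide; your additional limiting argument with $x\to+\infty$ is a harmless variant of the same comparison.
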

This formula generalizes the result in \cite{Dai-Long}, where the above equality is proved for the $\mathrm{P_{I}^{2}}$ case.

The other application of Theorem \ref{Thm-asym-PI2k-determinant} follows from the uniformity of \eqref{eq-Fredholm-asym} in $x$. By a proper scaling of $x$ in terms of $s$, one could recover the asymptotics of Tracy-Widom distribution, as stated in the following corollary. 
\begin{corollary}
With $F(s;x)$ defined in \eqref{eq-Fredholm-asym}, we have, as $s\to -\infty$, 
\begin{equation}\label{eq:Ftansition}
		F\left(s;\alpha_k|s|^{2k+1}+(2k+1)^{\frac{1}{3}}\alpha_k^{\frac{1}{3}}\tilde{s}|s|^{\frac{2k}{3}}\right)=-\frac{1}{12}|\tilde{s}|^3-\frac{1}{8}\log|\tilde{s}|+\chi^{(0)}+o(1),
	\end{equation}
	uniformly for $|s|^{\epsilon}\ll-\tilde{s}\ll |s|^{\frac{k}{3}+\frac{1}{4}}$, where $\epsilon$ is any real number belonging to $\left(0,\frac{k}{3}+\frac{1}{4}\right)$, the constants $\alpha_k$ and $\chi^{(0)}$ are defined in \eqref{eq-def-Ak} and \eqref{def:chi0}, respectively. 
\end{corollary}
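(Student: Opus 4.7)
The plan is a direct substitution into the uniform expansion of Theorem \ref{Thm-asym-PI2k-determinant}, followed by careful tracking of cancellations. Set $y_0:=\alpha_k|s|^{2k+1}$ and $a:=(2k+1)^{1/3}\alpha_k^{1/3}\tilde{s}|s|^{2k/3}$, so that $x=y_0+a$ is exactly the prescribed value. First, the admissibility of this $x$ for Theorem \ref{Thm-asym-PI2k-determinant} follows from the identity $\alpha_k|s|^{2k+1}-x=-a$: under the lower bound $|\tilde{s}|\gg|s|^{\epsilon}$ one has $\alpha_k|s|^{2k+1}-x\geq c_2|s|^{2k/3+\epsilon}$, while under the upper bound $|\tilde{s}|\ll|s|^{k/3+1/4}$ the ratio $a/y_0=\mathcal{O}(|s|^{-k-3/4})$ tends to zero, forcing $x\to+\infty$.

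The core of the calculation is the binomial expansion of the term $\frac{(2k+1)^2}{2(2k+2)(4k+3)}\alpha_k^{-1/(2k+1)}x^{\beta}$ in \eqref{eq-Fredholm-asym}, with $\beta:=(4k+3)/(2k+1)=2+1/(2k+1)$. Writing
\begin{equation*}
x^\beta=y_0^\beta\sum_{j\ge 0}\binom{\beta}{j}\left(\frac{a}{y_0}\right)^j
\end{equation*}
and pairing the resulting terms with the remaining polynomial contributions $\frac{\alpha_k^2 s^{4k+3}}{4(4k+3)}$, $\frac{\alpha_k x s^{2k+2}}{2(2k+2)}$, and $\frac{x^2 s}{4}$ in \eqref{eq-Fredholm-asym}, one finds cancellations at three successive orders. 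The $j=0$ level collects the coefficient of $|s|^{4k+3}$, which reduces to
\begin{equation*}
\alpha_k^2\Bigl[-\tfrac{1}{4(4k+3)}+\tfrac{1}{4(k+1)}-\tfrac{1}{4}+\tfrac{(2k+1)^2}{2(2k+2)(4k+3)}\Bigr]=0
\end{equation*}
after a short algebraic simplification; similarly, the $j=1$ and $j=2$ terms exactly cancel the $|s|^{2k+2}a$ and $|s|a^2$ contributions coming from the other polynomial terms. The $j=3$ term is the surviving non-trivial piece: using $a^3=(2k+1)\alpha_k\tilde{s}^3|s|^{2k}$, $y_0^{\beta-3}=\alpha_k^{\beta-3}|s|^{-2k}$, and $\binom{\beta}{3}=\frac{(4k+3)(2k+2)}{6(2k+1)^3}$, all prefactors (including the compensating power of $\alpha_k$) collapse to yield $\tilde{s}^3/12=-|\tilde{s}|^3/12$. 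For $j\ge 4$, the general term is bounded by $|s|^{(4k+3)(1-j/4)}$ times a constant, which is $o(1)$ thanks to the upper bound on $|\tilde{s}|$.

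For the logarithmic terms, since $s<0$ one has $\alpha_k s^{2k+1}+x=a$, so
\begin{equation*}
-\tfrac{1}{8}\log|\alpha_k s^{2k+1}+x|=-\tfrac{1}{8}\log|\tilde{s}|-\tfrac{k}{12}\log|s|-\tfrac{1}{24}\log[(2k+1)\alpha_k].
\end{equation*}
As $x\to+\infty$, $\frac{k\log(x^2+1)}{24(2k+1)}=\frac{k}{12}\log|s|+\frac{k}{12(2k+1)}\log\alpha_k+o(1)$, which cancels the $\log|s|$ piece. The remaining $\log\alpha_k$ and $\log(2k+1)$ contributions vanish against the constants $\frac{\log(2k+1)}{24}+\frac{\log\alpha_k}{24(2k+1)}$ in \eqref{eq-Fredholm-asym} via the identity $-\frac{1}{24}+\frac{k}{12(2k+1)}+\frac{1}{24(2k+1)}=0$, leaving only $-\tfrac{1}{8}\log|\tilde{s}|$. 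Finally, since $x\to+\infty$, the bound \eqref{eq-h(x)-asym-expansio} gives $I_h(x)=\mathcal{O}(x^{-(4k+3)/(4k+2)})=o(1)$, and the error $\mathcal{O}(|s|^{-\epsilon_0})$ in \eqref{eq-Fredholm-asym} is absorbed into $o(1)$. Collecting everything yields the claimed expansion with the constant $\chi^{(0)}$ inherited directly from \eqref{eq-Fredholm-asym}.

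The main obstacle is purely combinatorial rather than analytical: the cubic scaling built into the substitution and the matching powers of $\alpha_k$ in the prefactors are precisely what causes the coefficients of $|s|^{4k+3}$, $|s|^{2k+2}a$, $|s|a^2$, $\log|s|$, $\log\alpha_k$, and $\log(2k+1)$ to vanish simultaneously. Once the bookkeeping of the six contributing terms of \eqref{eq-Fredholm-asym} is set up carefully, no new analytical input beyond Theorem \ref{Thm-asym-PI2k-determinant} is required.
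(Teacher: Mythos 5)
Your proposal is correct and follows precisely the route the paper intends (the paper states only that the corollary ``is straightforward when replacing $x$ by $\alpha_k|s|^{2k+1}+(2k+1)^{1/3}\alpha_k^{1/3}\tilde{s}|s|^{2k/3}$ in \eqref{eq-Fredholm-asym}'' without further detail). You have carried out the substitution and verified the cancellations that the paper leaves implicit, including the correct identification of $j=3$ as the surviving order producing $-|\tilde{s}|^3/12$ and the vanishing of the combined $\log|s|$, $\log\alpha_k$, and $\log(2k+1)$ contributions.
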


The above corollary is straightforward when replacing $x$ by $\alpha_k|s|^{2k+1}+(2k+1)^{\frac{1}{3}}\alpha_k^{\frac{1}{3}}\tilde{s}|s|^{\frac{2k}{3}}$ in \eqref{eq-Fredholm-asym}. It particularly reveals the transition from higher-order analogues of Tracy-Widom distribution to the classical Tracy-Widom distribution in the large gap asymptotic regime. It is also worth noting that a transition from the $\mathrm{P_{I}^{2}}$ kernel to the Airy kernel was established in \cite{Claeys-2006-thesis}, which should compare with \eqref{eq:Ftansition} for $k=1$. 


\subsection{Strategy of the proof}
With $F(s;x)$ defined in  \eqref{eq-Fredholm-asym}, it is readily seen that $F(+\infty; x) = 0$, and thus
\begin{equation}\label{eq: F-original-int}
	F(s; x) = -\int_s^{+\infty} \frac{\partial F}{\partial \tau} (\tau; x) d \tau.  
\end{equation} 
The large gap asymptotics of $F$ then follows from the asymptotics of $\frac{\partial F}{\partial s} (s;x)$ as $s \to -\infty$. Particularly, due to the integrable structure of the $\mathrm{P_{I}^{2k}}$ kernel (see \eqref{def:limKer} below), $\frac{\partial F}{\partial s} (s; x) $ is related to a Riemann-Hilbert (RH) problem under a general framework established in \cite{DIZ97}. By performing the powerful Deift-Zhou nonlinear steepest descent analysis \cite{DX93} for the associated RH problem, one finally achieves the goal. This is exactly the idea adapted in \cite{CIK10} to establish \eqref{eq:largegapCIK}. 

An insurmountable obstacle occurs if one wants to further evaluate the constant of integration $C^{(k)}$ by the same strategy. The challenge lies in the fact that one needs to understand detailed information of $\frac{\partial F}{\partial s} (s;x)$ across an infinite interval $(s, +\infty)$ for that purpose. This seems impracticable since the function $\frac{\partial F}{\partial s}(s; x)$ is highly transcendental (see \cite[Theorem 1.12]{CIK10}). This limitation also explains why most of the similar constant problems arising from mathematical physics remain open.

To overcome this difficulty, the key idea in our approach is to investigate uniform asymptotics of the partial derivatives of $F(s; x)$ with respect to both $s$ and $x$. The motivation follows from the observation that 
\begin{equation} \label{eq: F-new-int}
	F(s; x) = - \int_{x}^{x_0} \frac{\partial F}{\partial \mu}(s; \mu) d\mu -\int_{s}^{+\infty} \frac{\partial F}{\partial\tau}(\tau; x_0) d\tau,
\end{equation}
which is valid for any real $x_0$. In particular, the arbitrariness of $x_0$ provides the flexibility of choosing the variable $|x_0|$ to be large alongside the variable $|s|$. This is essential in the sense that the behaviors of $\frac{\partial F}{\partial x}(s; x)$ and $\frac{\partial F}{\partial s}(s; x)$ degenerate in the asymptotic regime, which can be readily established through their connections with RH problems. It comes out that both $\frac{\partial F}{\partial s}(s; x)$ and $ \frac{\partial F}{\partial x}(s; x)$ exhibit qualitatively different asymptotic behaviors in three different regions of the $(x,s)$-plane as illustrated in Figure \ref{Fig:asy-region} and explained next. 

\begin{figure}[h]
	\begin{center}
	\includegraphics[width=14cm]{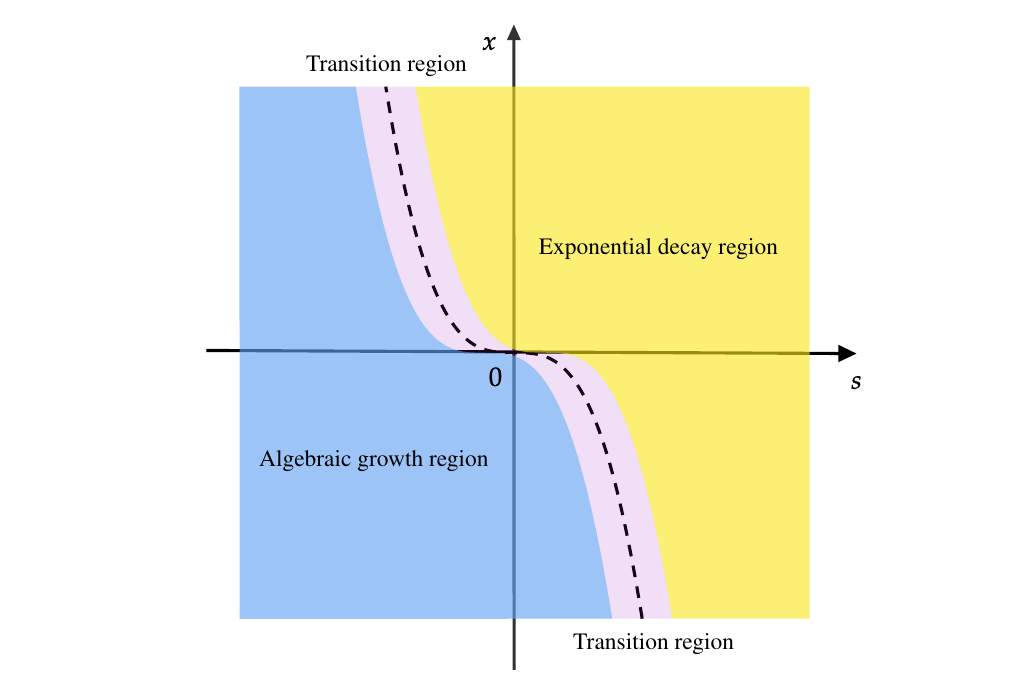}
	\caption{The three regions in asymptotic studies of $\frac{\partial}{\partial s} F(s; x)$ and $ \frac{\partial}{\partial x} F(s; x)$. The exponentially decay and algebraic growth regions are separated by the transitional region in the middle. The dashed curve stands for the critical curve $x = -\alpha_k s^{2k+1}$, where $\alpha_k$ is given in \eqref{eq-def-Ak}.} \label{Fig:asy-region}
	\end{center}
\end{figure}

\begin{itemize}
    \item (Algebraic growth region) This region lies on the left hand side of the critical curve $x = -\alpha_{k}s^{2k+1}$; see the blue part of Figure \ref{Fig:asy-region}. The partial derivatives of $F$ grow like powers of $s$ in this region; see Lemmas \ref{lem-asym-dF/dmu} and \ref{lem-dF/dtau-case-I} for detailed descriptions.
    \item (Exponential decay region) This region lies on the right hand side of the critical curve $x = -\alpha_{k}s^{2k+1}$; see the yellow part of Figure \ref{Fig:asy-region}. The partial derivatives of $F$ decay exponentially in this region; see Lemma \ref{lem-dF/dtau-case-III} 
     for a detailed description.
    \item (Transition region) This region bridges the aforementioned two regions, which surrounds the critical curve $x = -\alpha_{k}s^{2k+1}$. Transition asymptotics in this region involves the Hamiltonian associated with the Hastings-McLeod solution of the Painlev\'{e} II equation; see Lemma \ref{lem-dF/dtau-case-II} for a detailed description.
\end{itemize}

As long as uniform asymptotics of the partial derivatives of $F(s;x)$ are available, we are able to determine the multiplicative constant in the asymptotics of $F(s;x)$ after substituting the integrand in \eqref{eq: F-new-int} by their asymptotic approximations. More precisely, we will set $x_0 = \pm |s|^{2k+1}$ in \eqref{eq: F-new-int}, replace the contour of integration therein by the lines depicted in Figure \ref{Fig:int-contour} and conduct the following derivation.

\begin{figure}[h]
	\begin{center}
	\includegraphics[width=16cm]{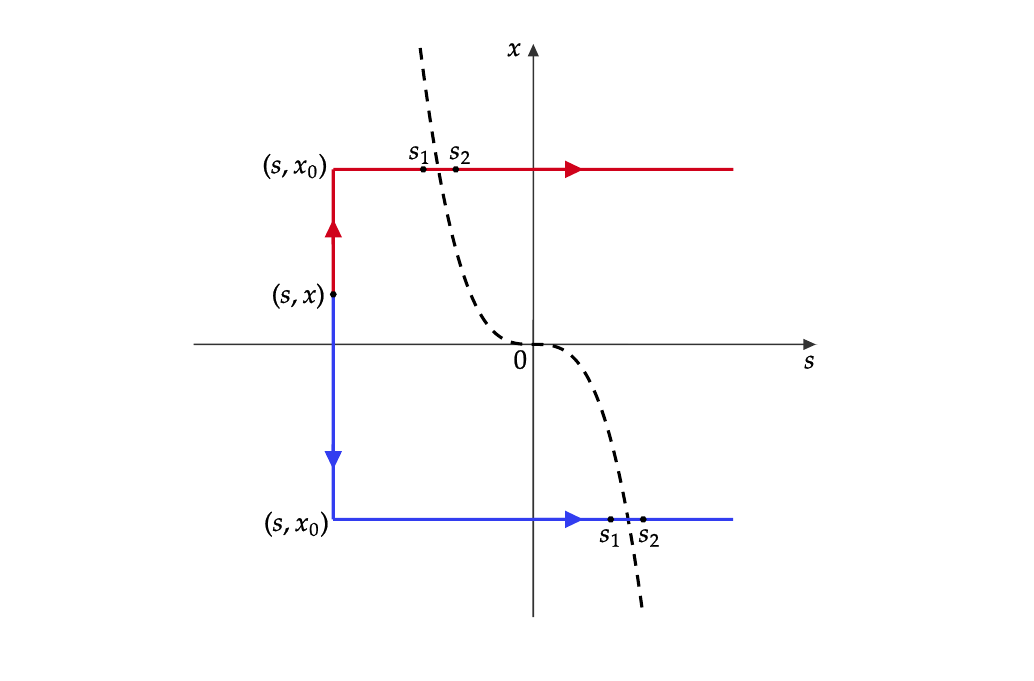}
			%
		\caption{The contours of integration in the $(x,s)$-plane. We choose the red lines if $x_0>0$ and the blue lines if $x_0<0$. The dashed curve is the critical curve $x = -\alpha_k s^{2k+1}$, where $\alpha_k$ is given in \eqref{eq-def-Ak}.} \label{Fig:int-contour}
	\end{center}
\end{figure}

\begin{itemize}
   
\item 
To evaluate the integral $\int_{x}^{x_0}\frac{\partial F}{\partial \mu}(s;\mu)d\mu$, we note that $(s,\mu)$  with $\mu\in(x,x_0)$ is always contained in the algebraic growth region of Figure \ref{Fig:asy-region}. The asymptotic approximation then follows from the result in Lemma \ref{lem-asym-dF/dmu}.

\item 
To evaluate the integral $\int_{s}^{+\infty}\frac{\partial F}{\partial\tau}(\tau;x_0)d\tau$, we split the interval of integration  into three parts: $(s, s_1)$, $(s_1, s_2)$ and $(s_2, +\infty)$, as shown in Figure \ref{Fig:int-contour}. These three sub-intervals lie in the algebraic growth region, the transition region and the exponential decay region, respectively. Uniform asymptotics of the $\frac{\partial}{\partial s} F(s; x)$ for each region
are available in Lemmas \ref{lem-dF/dtau-case-I}--\ref{lem-dF/dtau-case-III}. It is worth noting that the integral over $(s_1,s_2)$ in the transition region will contribute to the nontrivial constant term $\chi^{(0)}$ in \eqref{def:chi0} in the asymptotics of $F(s;x)$.


\end{itemize}

In the proof of Theorem \ref{Thm-asym-PI2k-determinant}, it suffices to choose either the red or the blue lines of integration depicted in Figure \ref{Fig:int-contour}. Given that the final asymptotic result remains the same regardless of the chosen contour, we can compare the asymptotic results obtained with $x_0 = -|s|^{2k+1}$ and $x_0 = |s|^{2k+1}$. This comparison leads to the total integral for the Hamiltonian in \eqref{eq:total-int-H}, which is a by-product of our major findings.

Finally, we believe that our approach has the advantage of being applicable to evaluate the similar multiplicative constant encountered in the studies of large gap asymptotics. Some examples include the (hard-edge) Pearcey determinant \cite{DXZ21,YZ24a}, the higher order Airy determinant \cite{CTG21} and the determinants associated with transcendental kernels constructed from the Painlev\'{e} II hierarchy. We plan to report the relevant results in future publications.


\paragraph{Organization of the paper}
The rest of the paper is organized as follows. In Section \ref{Sec: main-proof}, we state four key lemmas about the uniform asymptotics of partial derivatives of $F$ in different asymptotic regions. A combination of these lemmas will lead to the proof of Theorem \ref{Thm-asym-PI2k-determinant}. In Section \ref{Sec:Pre}, a few preliminary work is made to prepare for the asymptotic analysis. This includes formulation of RH problems and derivations of two differential identities. The Deift-Zhou steepest analysis of the associated RH problems is conducted in Sections \ref{Sec:Alg}--\ref{Sec:Exp}, which serves to validate the lemmas presented in Section \ref{Sec: main-proof}. Additional generalizations of the existing asymptotic results are provided in Appendices \ref{sec-appendix-p34-parametrix}--\ref{PIasy} to complete the analysis presented in this paper. 

\paragraph{Notations} Throughout this paper, the following notations are frequently used.
	\begin{itemize}
		\item If $A$ is a matrix, then $(A)_{ij}$ stands for its $(i,j)$-th entry and $A^{\msf T}$ stands for its transpose. An unimportant entry of $A$ is denoted by $\star$. We use $I$ to denote the identity matrix.

		\item We denote by $U(z_0; r)$ the open disc centered at $z_0$ with radius $r > 0$, {\it i.e.},
		\begin{equation}\label{def:dz0r}
			U(z_0; r) := \{ z\in \mathbb{C} \mid |z-z_0|<r \},
		\end{equation}
		and by $\partial U(z_0; r)$ its boundary. The orientation of $\partial U(z_0; r)$ is taken in a clockwise manner.
		\item As usual, the Pauli matrices $\{\sigma_j\}_{j=1}^3$ are 
		\begin{equation}\label{def:Pauli}
			\sigma_1=\begin{pmatrix}
				0 & 1 \\
				1 & 0
			\end{pmatrix},
			\qquad
			\sigma_2=\begin{pmatrix}
				0 & -i \\
				i & 0
			\end{pmatrix},
			\qquad
			\sigma_3=
			\begin{pmatrix}
				1 & 0 \\
				0 & -1
			\end{pmatrix}.
		\end{equation}
\item If a function $f(z)$ admits a pole at $z_0$, we use $\Res(f(z),z_0)$ to denote its residue at $z_0$. 
  \item We will perform Deift-Zhou nonlinear steepest descent analysis for a RH problem several times. We adopt the same notations (such as $R$, $v_{R}$, $\ldots$) during each analysis. They should be understood in different contexts, and we trust that this will not lead to any confusion.
	\end{itemize}

\section{Key lemmas and proof of Theorem \ref{Thm-asym-PI2k-determinant}} \label{Sec: main-proof}


Recall \eqref{eq: F-new-int} and denote the integrals therein by 
\begin{equation} \label{eq-integral-representation-F}
I_{1}(s;x,x_{0}):= \int_{x}^{x_{0}}\frac{\partial F}{\partial \mu}(s;\mu)d\mu, \qquad I_{2}(s;x_{0}) = \int_{s}^{+\infty}\frac{\partial F}{\partial\tau}(\tau;x_{0})d\tau.
\end{equation}
As mentioned before, we take $x_{0}=\pm |s|^{2k+1}$ and derive the asymptotics of $I_{1}(s;x,x_{0})$ and $I_{2}(s;x_{0})$ as $s\to-\infty$. To achieve this goal, we need the following key lemmas, whose proofs will be postponed in Sections \ref{sec:case-I-proof}, \ref{sec:case-II-proof} and \ref{sec:case-III-proof}, respectively. These lemmas provide essential asymptotics for the partial derivatives of $F(s;x)$ in the $(x,s)$-plane, under which Theorem \ref{Thm-asym-PI2k-determinant} will be proved.

\begin{lemma}[Large $s$ asymptotics of $\frac{\partial F}{\partial x}$ in the algebraic growth region] \label{lem-asym-dF/dmu}
Let $\alpha_{k}$ be defined in \eqref{eq-def-Ak} and $h(x)$ be the Hamiltonian associated with the special solution $\mathsf{q}$ of the Painlev\'{e} I hierarchy $\mathrm{P_{I}^{2k}}$ \eqref{def:PIm}. As $s\to-\infty$, we have 
\begin{equation}\label{eq-asym-dF/dmu}
\begin{split}
\frac{\partial F}{\partial x}(s;x)=h(x)+\frac{\alpha_{k}}{4k+4}s^{2k+2}+\frac{x s}{2}-\frac{1}{8\left(\alpha_{k}s^{2k+1}+x\right)}\left(1+\mathcal{O}(|s|^{-3\epsilon})\right),
\end{split}
\end{equation}
uniformly for all $x\in\left[-c_{1}|s|^{2k+1},\alpha_{k}|s|^{2k+1}-c_{2}|s|^{\frac{2k}{3}+\epsilon}\right]$, where $c_{1},c_{2}$ and $\epsilon$ are three arbitrarily fixed positive constants with $\epsilon\in(0,\frac{4k}{3}+1)$.
\end{lemma}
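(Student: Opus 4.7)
The plan is to derive the asymptotics from a differential identity expressing $\partial_x F(s;x)$ through the solution of the Riemann--Hilbert problem associated with the $\mathrm{P_{I}^{2k}}$ kernel (to be formulated in Section~\ref{Sec:Pre}), and then to extract that asymptotics by running the Deift--Zhou nonlinear steepest descent in the algebraic growth region. A natural form of the identity for an integrable kernel of Painlev\'{e} type is
\[
\partial_x F(s;x) \;=\; h(x) + \Phi\!\left(Y(\cdot\,;s,x)\right),
\]
in which $h(x)$ is the Hamiltonian of the bare $\mathrm{P_{I}^{2k}}$ hierarchy, read off from the large-$z$ expansion of its Lax solution, and $\Phi$ is a residue-type functional at infinity of the global RHP solution $Y$ that encodes the additional jumps implementing the Fredholm structure. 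The $h(x)$ term is untouched by subsequent normalisations of $Y$, so all the work sits in producing an explicit expansion of $\Phi$.

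To that end I would carry out the usual chain of transformations $Y\mapsto T\mapsto S\mapsto R$. The $g$-function will be built from an equilibrium problem whose support is an $(s,x)$-dependent interval $[\beta_-,\beta_+]$; on the allowed range of $x$, its length is comparable to a positive power of $\alpha_{k}|s|^{2k+1}+x$ and is therefore bounded below by a positive power of $|s|^{\epsilon}$. After opening lenses along $[\beta_-,\beta_+]$, the resulting model problem is solved by a Szeg\H{o}-type outer parametrix plus an Airy parametrix at each of the soft edges $\beta_\pm$, and the ratio $R$ satisfies a small-norm RH problem on the whole allowed range. Unwinding the transformations inside the differential identity then produces the announced expansion: the $g$-function derivative at infinity yields the polynomial piece $\tfrac{\alpha_k}{4k+4}s^{2k+2}+\tfrac{xs}{2}$, the sub-leading term of the Airy parametrix produces the universal edge correction $-\tfrac{1}{8(\alpha_k s^{2k+1}+x)}$, and $R-I$ absorbs the relative error $\mathcal{O}(|s|^{-3\epsilon})$, while $h(x)$ is carried through from the identity itself.

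The hardest part will be uniformity of all these estimates as $x$ approaches the right boundary $\alpha_{k}|s|^{2k+1}$. There the two soft edges $\beta_\pm$ tend to coalesce, the $g$-function degenerates, and the Airy parametrix discs shrink, so every estimate must be tracked in quantitative dependence on the small parameter $\alpha_{k}s^{2k+1}+x$. The buffer $c_{2}|s|^{2k/3+\epsilon}$ built into the statement of the lemma is precisely what is needed to keep $\beta_{+}-\beta_{-}$ well above the local Airy length scale; with this in force I would bound the jump of $R$ by a quantity of order $(\alpha_{k}|s|^{2k+1}+x)^{-3(2k+1)/(4k+2)}$, which translates into the advertised multiplicative error $\mathcal{O}(|s|^{-3\epsilon})$ in \eqref{eq-asym-dF/dmu}. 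Verifying these quantitative bounds, and in particular confirming that the Airy local model contributes the precise coefficient $\tfrac{1}{8}$ in front of $(\alpha_{k}s^{2k+1}+x)^{-1}$, will be the technical core of Section~\ref{Sec:Alg}.
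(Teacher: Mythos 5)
Your high-level plan agrees with the paper's: write $\partial_x F(s;x) = -(Y_{-1})_{12}$, relate $Y_{-1}$ to $\Psi_{-1}$ via $X_{-1}=\Psi_{-1}+Y_{-1}$ (so the Hamiltonian $h=(\Psi_{-1})_{12}$ is split off cleanly), then extract $(X_{-1})_{12}$ by a $g$-function normalization and a small-norm argument. But the concrete implementation you propose is wrong in several ways that would break the calculation of the $-\frac{1}{8(\alpha_ks^{2k+1}+x)}$ term.

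First, the local model is not Airy: the endpoint $\zeta=s$ of the Fredholm interval is a hard-edge-type point where the RH problem has a logarithmic singularity (condition (d) for $X$), and after the first transformation the jump structure near this point consists of $\left(\begin{smallmatrix}1&0\\ \ast&1\end{smallmatrix}\right)$ on two slanted rays and $\left(\begin{smallmatrix}0&1\\-1&0\end{smallmatrix}\right)$ on the half-line $(-\infty,0)$ with \emph{no} jump on $(0,\infty)$. That is precisely the (modified) Bessel parametrix, not Airy (exactly as in the Deift--Its--Krasovsky computation for the Airy-kernel determinant). There is also no two-sided support $[\beta_-,\beta_+]$, no Szeg\H{o} function, no lens opening (the ray jumps already decay exponentially), and no pair of coalescing soft edges; the degeneration at $x\to\alpha_k|s|^{2k+1}$ is that the scale parameter $b_0$ entering the conformal map $f_1$ goes to zero, and in the transition region the Bessel model is replaced by a Painlev\'{e} XXXIV parametrix (Section~5), not by merging Airy points. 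Second, your error estimate is off: the expansion parameter of the Bessel small-norm problem is $\bigl(\lambda^{(4k+3)/2}b_0^{3/2}\bigr)^{-1}\sim|s|^k/|\alpha_ks^{2k+1}+x|^{3/2}$, not $|\alpha_ks^{2k+1}+x|^{-3/2}$; and the announced $\mathcal{O}(|s|^{-3\epsilon})$ is the \emph{square} of this, because the diagonal/anti-diagonal parity of the coefficients $R_j$ forces the next contribution to $(T_{-1})_{12}$ to come from $R_3$, not $R_2$. Without the Bessel local model, the prefactor $\frac{1}{8}$ and the $|s|^{-3\epsilon}$ rate would not come out.
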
 

\begin{lemma}[Large $x$ asymptotics of $\frac{\partial F}{\partial s}$  in the algebraic growth region] \label{lem-dF/dtau-case-I}
With $\alpha_{k}$ given in \eqref{eq-def-Ak}, we have
\begin{equation} \label{eq-dF/dtau-case-I}
\begin{split}
\frac{\partial F}{\partial s}(s;x)= \frac{1}{4}\left(\alpha_{k}s^{2k+1}+x\right)^2-\frac{(2k+1)\alpha_{k}s^{2k}}{8\left(\alpha_{k}s^{2k+1}+x\right)}\left(1+\mathcal{O}(|x|^{-\frac{k+\frac{1}{2}}{2k+1}})\right)
\end{split}
\end{equation}
as $|x|\to+\infty$, uniformly for $s$ such that $\alpha_{k}s^{2k+1}+x\in\left[-M|x|,-C|x|^{\frac{k+1/6}{2k+1}}\right]$. Here, $M$ and $C$ are two arbitrarily positive constants.
\end{lemma}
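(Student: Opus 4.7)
My plan is to prove Lemma 2.3 via a Deift--Zhou nonlinear steepest descent analysis of the Riemann--Hilbert (RH) problem for the function $Y=Y(\cdot;s,x)$ associated, through the Its--Izergin--Korepin construction, to the integrable kernel $K^{(k)}$. Combined with the differential identity to be established in Section 3, which expresses $\frac{\partial F}{\partial s}(s;x)$ in terms of the asymptotic expansion of $Y$ near infinity (equivalently, residues of certain entries of $Y^{-1}Y'$ at the Fredholm endpoint), this reduces the problem to computing the first two coefficients of the expansion of $Y$ uniformly in the specified regime.

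A guiding observation is that the three polynomial-in-$s$ terms expected in $F(s;x)$, namely $\tfrac{1}{4(4k+3)}\alpha_k^2 s^{4k+3}+\tfrac{\alpha_k}{2(2k+2)}xs^{2k+2}+\tfrac{x^2}{4}s$, differentiate to the perfect square $\tfrac{1}{4}(\alpha_k s^{2k+1}+x)^2$, so the leading term of \eqref{eq-dF/dtau-case-I} should emerge as the derivative of the polynomial skeleton of $F$, arising from the value of a $g$-function at the soft edge. Introducing the natural large parameter $\xi:=-(\alpha_k s^{2k+1}+x)$, the plan is to rescale the spectral variable by a suitable power of $\xi$, introduce a $g$-function that absorbs the $(x,s)$-dependent phase, open lenses in the standard way, and read off the leading behaviour from the explicit global outer parametrix.

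Because $\xi\to+\infty$ in the specified regime while $s$ tracks the critical curve, the model $\mathrm{P_{I}^{2k}}$ Lax-pair parametrix degenerates: the outer parametrix can be built from elementary functions reflecting a generic square-root soft edge, and the local parametrix at the edge is the standard Airy model suitably scaled in $\xi$. The ratio of $Y$ to these model pieces solves a small-norm RH problem whose jump is $I+\mathcal{O}(\xi^{-1})$ on the boundary of the Airy disc; its first Neumann iterate, substituted into the differential identity, produces exactly the subleading correction $-\frac{(2k+1)\alpha_k s^{2k}}{8(\alpha_k s^{2k+1}+x)}$, with the factor $(2k+1)\alpha_k s^{2k}$ being the $s$-derivative of the leading phase. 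The multiplicative error $\mathcal{O}(|x|^{-(k+1/2)/(2k+1)})$ corresponds to the next Neumann iterate relative to this subleading term.

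The principal technical obstacle will be to ensure that these estimates are uniform over the full range of $\xi$, particularly at the lower endpoint $\xi\sim C|x|^{(k+1/6)/(2k+1)}$, where $\xi$ grows strictly more slowly than the characteristic scale $|s|^{2k+1}\sim |x|$ dictated by the $\mathrm{P_{I}^{2k}}$ structure. There one must verify that the Airy disc neither collapses onto the neighbouring lens contours nor grows to interfere with the global parametrix, and that the small-norm estimate for the correction $R$ is not spoiled by the proximity of $\xi=0$; this balance is precisely what forces the exponent $(k+1/6)/(2k+1)$ in the admissible range and the exponent $(k+1/2)/(2k+1)$ in the error bound. I expect this quantitative tracking of the local parametrix at that endpoint to be the most delicate step in the proof.
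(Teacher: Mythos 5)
Your high-level strategy---Deift--Zhou steepest descent on the RH problem associated with $K^{(k)}$, global plus local parametrices, a small-norm correction, and extraction of the two asymptotic terms via a differential identity for $\partial F/\partial s$ evaluated at the Fredholm endpoint---is the right framework and is what the paper does. However, the key parametrix identification is wrong, and this is not cosmetic: in the algebraic growth region the local parametrix at the Fredholm interval endpoint $\zeta=s$ is a \emph{Bessel} model, not Airy. After rescaling so that $\zeta=s$ maps to the origin and normalizing by the $g$-function, the remaining jump structure at the origin consists of exactly three rays: $(-\infty,0)$ carrying $\left(\begin{smallmatrix}0&1\\-1&0\end{smallmatrix}\right)$, and two lens rays carrying lower-triangular jumps, together with a logarithmic singularity from the Fredholm kernel. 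This is precisely the Bessel structure (as in the Deift--Its--Krasovsky treatment of the Airy-kernel determinant constant), not the four-ray soft-edge Airy structure. The Airy parametrix enters the paper's analysis only in the exponential decay region of Section 6, where the interval endpoint $r$ has crossed the support edge $r_0$ and a separate local problem at $r_0$ opens up; it plays no role in the region covered by this lemma.

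Two further claims in your sketch are off as consequences. First, the leading term $\frac14(\alpha_k s^{2k+1}+x)^2$ does not come from ``the value of a $g$-function at the soft edge''; it is produced by the small-$z$ expansion of the Bessel model substituted into the differential identity, namely $\frac{b_0^2\lambda^{4k+2}}{2\pi i}\left(\Phi^{(\mathrm{Bes})}(z)^{-1}\frac{d\Phi^{(\mathrm{Bes})}(z)}{dz}\right)_{21}\to\frac14 b_0^2\lambda^{4k+2}$ as $z\to 0$, with $b_0\lambda^{2k+1}=-(\alpha_k s^{2k+1}+x)$. Second, the jump of the small-norm problem on the boundary of the local disc is not of size $\xi^{-1}$ with $\xi:=-(\alpha_k s^{2k+1}+x)$; the Bessel expansion gives a jump of size $\lambda^{-(4k+3)/2}b_0^{-3/2}=\lambda^{k}\xi^{-3/2}$, so uniformity across the admissible range $\xi\in[C|x|^{(k+1/6)/(2k+1)},M|x|]$ requires tracking the two scales $\lambda\sim|x|^{1/(2k+1)}$ and $b_0=\xi\lambda^{-(2k+1)}$ separately, not only $\xi$. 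Your instinct that the lower endpoint of the range is the delicate boundary is correct---the paper encodes this by rescaling the spectral variable by $\lambda b_0$ in the first transformation and then choosing the auxiliary parameter $\delta=\tfrac16$---but without switching the local model to Bessel and keeping both scales in view, the proof as sketched would not close.
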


\begin{lemma}[Large $x$ asymptotics of $\frac{\partial F}{\partial s}$  in the transition region] \label{lem-dF/dtau-case-II}
As $|x|\to+\infty$, we have
\begin{equation}\label{eq-dF/dtau-case-II}
\frac{\partial F}{\partial s}(s;x)=\frac{\partial\chi(s;x)}{\partial s}H_{\mathrm{P_{II}}}\left(\chi(s;x)\right)\left(1+\mathcal{O}(|x|^{-\frac{k+5/6}{2k+1}})\right)+\mathcal{O}(|x|^{\frac{k-1/3}{2k+1}}),
\end{equation}
uniformly for $s$ such that 
$\alpha_{k}s^{2k+1}+x\in\left[-C|x|^{\frac{k+1/6}{2k+1}},C|x|^{\frac{k+1/6}{2k+1}}\right]$. Here, $C$ is an arbitrarily fixed positive constant, $\alpha_{k}$ is defined in \eqref{eq-def-Ak}, $H_{\mathrm{P_{II}}}(\cdot)$ is the Hamiltonian corresponding to the Hastings-McLeod solution of the  Painlev\'{e} II equation and 
\begin{equation}\label{eq-def-chi(tau)}
\chi(s;x):=\begin{cases}
\frac{\alpha_{k}s^{2k+1}+x}
{(2k+1)^{\frac{1}{3}}\alpha_{k}^{\frac{1}{3(2k+1)}}x^{\frac{2k}{3(2k+1)}}}, & \alpha_{k}s^{2k+1}+x\in\left[-C|x|^{\frac{k+1/6}{2k+1}},0\right],\\
f_{3}\left(s;s_{0}\right), & \alpha_{k}s^{2k+1}+x\in\left[0,C|x|^{\frac{k+1/6}{2k+1}}\right],
\end{cases}
\end{equation} 
is a continuous and piecewise differentiable function. In the above formula, the function $f_{3}$ is defined in  \eqref{eq-def-tilde-f2} and 
\begin{equation}\label{eq-def-s0}
s_{0}=-\left(x/\alpha_{k}\right)^{\frac{1}{2k+1}}.
\end{equation}
\end{lemma}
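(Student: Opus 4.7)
The plan is to derive the asymptotics via a differential identity expressing $\partial_s F(s;x)$ in terms of the solution $Y$ of the Riemann--Hilbert (RH) problem associated with the $\mathrm{P_I^{2k}}$ kernel, which will be set up in Section~\ref{Sec:Pre}, and then to perform a Deift--Zhou nonlinear steepest descent analysis that remains uniform across the transition region around the critical curve $x=-\alpha_k s^{2k+1}$. The multiplicative structure of the leading term in \eqref{eq-dF/dtau-case-II} strongly suggests that the dominant contribution originates from a local parametrix modeled on the Painlev\'e~II RH problem, so the whole analysis will be geared toward identifying this parametrix and extracting the Hamiltonian.

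First I would open lenses and normalize against a global parametrix built from an algebraic function with a branch cut terminating at the endpoint $s$. On the critical curve, the classical Airy-type saddle merges with the endpoint $s$; in the transition region this coalescence is only partial, so a genuine Painlev\'e~II local parametrix is required in place of an Airy parametrix. The natural local coordinate in the disc around $s$ is precisely $\chi(s;x)$ from \eqref{eq-def-chi(tau)}. On the algebraic-growth side $\alpha_ks^{2k+1}+x\leq 0$, the local phase has the explicit linear rescaling displayed in the first branch; on the exponential-decay side $\alpha_ks^{2k+1}+x\geq 0$, one needs the nonlinear conformal change of variables $f_3$ from \eqref{eq-def-tilde-f2} to map the cubic-plus-linear local phase onto the Painlev\'e~II phase $\tfrac{4}{3}\chi^3+s\chi$, up to higher-order corrections. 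Continuity of $\chi$ at $\alpha_ks^{2k+1}+x=0$ follows from the matching condition at the critical curve.

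Once the Hastings--McLeod parametrix is in place, estimating the mismatch against the global parametrix on the disc boundary yields $\|v_R-I\|_{L^\infty\cap L^2}=\mathcal O(|x|^{-(k+5/6)/(2k+1)})$, which gives the multiplicative error factor in \eqref{eq-dF/dtau-case-II}. Substituting the local parametrix into the differential identity, the leading contribution is extracted by the chain rule: the Painlev\'e~II Hamiltonian $H_{\mathrm{P_{II}}}(\chi)$ appears as the residue-type quantity obtained from the Hastings--McLeod parametrix, exactly as in the Tracy--Widom identity \eqref{eq:TWformula}, and the chain-rule factor $\partial_s\chi$ converts the derivative in $\chi$ into the derivative in $s$. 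The additive remainder $\mathcal O(|x|^{(k-1/3)/(2k+1)})$ absorbs the contributions from the $R$-matrix expansion and from the higher-order terms in the conformal map $f_3$.

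The main obstacle, in my view, is uniformity across the transition rather than the asymptotics on either side separately. Concretely, one must ensure that the Painlev\'e~II parametrix approximates the true solution uniformly as $\chi$ sweeps from large negative values through $0$ to moderate positive values, and that the two branches of $\chi$ in \eqref{eq-def-chi(tau)} glue into a $C^1$ approximation. This requires an explicit enough description of the nonlinear map $f_3$ on the exponential-decay side to control its Jacobian and higher-order deviations from the linear scaling, together with careful bookkeeping of the error estimates of the Hastings--McLeod functions near $\chi=0$ (where both sides of \eqref{eq-def-chi(tau)} must agree). Verifying this gluing, and showing that it is compatible with the boundary scaling $\alpha_ks^{2k+1}+x=\pm C|x|^{(k+1/6)/(2k+1)}$ at which we patch to Lemmas~\ref{lem-dF/dtau-case-I} and~\ref{lem-dF/dtau-case-III}, is the delicate ingredient of the argument.
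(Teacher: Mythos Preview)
Your overall strategy is essentially the paper's: express $\partial_s F$ via the differential identity of Lemma~\ref{lem-differential-indentity-2} for the RH problem $X$, normalize with a $g$-function, and extract the leading term from a local parametrix near the endpoint. However, the local model is misidentified. The parametrix is \emph{not} the Painlev\'e~II RH problem with phase $\tfrac{4}{3}\chi^3+s\chi$; it is the Painlev\'e~XXXIV parametrix $\Phi(\zeta;x)$ (and its variant $\tilde\Phi$) from Appendix~\ref{sec-appendix-p34-parametrix}, with phase $\tfrac{2}{3}\zeta^{3/2}+x\zeta^{1/2}$ and a \emph{logarithmic} singularity at $\zeta=0$. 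This logarithmic behaviour is forced by item~(d) of the RH problem for $X$ (equivalently, by the Fredholm-determinant origin of the problem), and a pure Painlev\'e~II parametrix cannot accommodate it. The Hamiltonian $H_{\mathrm{P_{II}}}$ enters only indirectly, through the identity \eqref{eq-tracy-widom-formula} equating $(\Phi_1(x))_{21}/(2\pi i)$ with $H_{\mathrm{P_{II}}}(x)$.

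A second point you gloss over is that the two halves of the transition region are handled with genuinely different constructions, not just two branches of a single conformal map. On the side $\alpha_k r^{2k+1}+y\in[-C_1\lambda^{-k-1+\delta},0]$ the paper keeps the $g$-function $g_1$ from the algebraic-growth region and builds $P^{(r)}$ via $\Phi$ in a \emph{shrinking} disc of radius $\rho_2$ (see \eqref{eq-def-rho2}); the shrinking radius is what makes the error estimates uniform as $b_0\to0$. On the side $\alpha_k r^{2k+1}+y\in[0,C_1\lambda^{-k-1+\delta}]$ a new $g$-function $g_2$ is introduced, the branch point shifts from $r$ to $r_0$, and the parametrix is $\tilde\Phi$ in a fixed disc. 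The piecewise definition of $\chi$ in \eqref{eq-def-chi(tau)} records precisely this dichotomy: on the negative side $\chi=\lambda^{(4k+3)/3}\kappa_0$ comes from the parameter in $\Phi(\cdot;\lambda^{(4k+3)/3}\kappa_0)$, while on the positive side $\chi=f_3(s;s_0)$ is the argument of the second slot in $\tilde\Phi(\cdot;\lambda^{(4k+3)/3}f_3(r;r_0))$. Your gluing concern is legitimate, but it is resolved by this two-parametrix construction rather than by a single $C^1$ conformal map.
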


\begin{lemma}[Large $x$ asymptotics of $\frac{\partial F}{\partial s}$  in the exponential decay region] \label{lem-dF/dtau-case-III}
There exists a positive constant $M$ such that as $|x|\to+\infty$,
\begin{equation}\label{eq-dF/ds-case-III}
\frac{\partial F}{\partial s}(s;x)=\Boh\left(e^{-M|x|^{\frac{4k}{4k+2}}(s-s_{0})^{\frac{3}{2}}}\right),
\end{equation}
uniformly for all $s$ such that 
$\alpha_{k}s^{2k+1}+x\in\left[C|x|^{\frac{k+1/6}{2k+1}},+\infty\right)$. Here,  $C$ is an arbitrarily fixed positive constant, $\alpha_{k}$ is defined in \eqref{eq-def-Ak} and $s_{0}$ is given in \eqref{eq-def-s0}.
\end{lemma}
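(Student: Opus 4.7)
The plan is to carry out Deift--Zhou nonlinear steepest-descent analysis on the Riemann--Hilbert (RH) problem that encodes $\frac{\partial F}{\partial s}(s;x)$. By the differential identity established in Section~\ref{Sec:Pre}, $\frac{\partial F}{\partial s}$ is expressed in terms of entries of the RH solution $\Psi$ associated with the Lax pair of the $\mathrm{P_I^{2k}}$ hierarchy, evaluated at the endpoint $s$. In the present regime we have $\alpha_{k}s^{2k+1}+x\ge C|x|^{(k+1/6)/(2k+1)}$, so $s$ sits strictly to the right of $s_{0}$ and the jump contour is disjoint from the effective ``spectral support.'' The task is therefore to verify that every jump matrix becomes exponentially close to the identity at the rate claimed in~\eqref{eq-dF/ds-case-III}; no nontrivial outer parametrix is expected to be needed.

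Concretely, I would introduce the scaling $\zeta=|x|^{1/(2k+1)}w$, which sends $s_{0}$ to a fixed point $w_{0}$ in the $w$-plane and puts the canonical phase $\tfrac{2}{4k+3}w^{4k+3}+\cdots$ at infinity with the large parameter $|x|^{(4k+3)/(2k+1)}$ in front. A $g$-function $g(w)$ is then chosen so that after the conjugation $Y\mapsto Y e^{-g\sigma_{3}}$ the jumps on the deformed contour are upper/lower triangular with off-diagonal entries of size $e^{\mp|x|^{(4k+3)/(2k+1)}\phi(w)}$ for a suitable phase $\phi$. The key local computation is the linearization
\[
\alpha_{k}s^{2k+1}+x=(2k+1)\alpha_{k}^{1/(2k+1)}|x|^{2k/(2k+1)}(s-s_{0})\bigl(1+\mathcal{O}(s-s_{0})\bigr),
\]
which identifies $|x|^{2k/(2k+1)}=|x|^{4k/(4k+2)}$ as the natural scale near $s_{0}$; the $3/2$-power in $(s-s_{0})$ then emerges from the standard steepest-descent evaluation of $\Re\phi$ along a contour passing through the complex-conjugate saddle pair that has split off the real axis. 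A small-norm argument applied to the final RH problem $R$ yields $R=I+\Boh\bigl(e^{-M|x|^{4k/(4k+2)}(s-s_{0})^{3/2}}\bigr)$, and tracing this back through the sequence of transformations and the differential identity for $\frac{\partial F}{\partial s}$ delivers~\eqref{eq-dF/ds-case-III}.

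The main obstacle is uniformity in $s-s_{0}$. Near the lower cutoff $\alpha_{k}s^{2k+1}+x\sim C|x|^{(k+1/6)/(2k+1)}$, the relevant saddles of $\phi$ are nearly colliding on the real axis, so the steepest-descent contour must be chosen within a one-parameter family that keeps $\Re\phi$ uniformly bounded below by a positive multiple of $(s-s_{0})^{3/2}$; a naive choice of contour would destroy this rate. Towards $+\infty$ the decay is abundant, but the $g$-function and the deformed contour must still be constructed consistently across the whole range so that the resulting small-norm estimate is genuinely uniform. Once a single family of contours is in place for which $\Re\phi\gtrsim (s-s_{0})^{3/2}$ uniformly in the stated range of $s$, the remaining estimates are routine.
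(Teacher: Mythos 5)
Your overall strategy matches the paper's: rescale by a large parameter, introduce a $g$-function, convert the RH problem to a small-norm one, and trace the resulting estimate back through the differential identity for $\partial F/\partial s$. But as sketched there is a genuine gap. You assert that ``no nontrivial outer parametrix is expected to be needed,'' and you describe the geometry as a ``complex-conjugate saddle pair that has split off the real axis.'' Neither matches the actual structure. The relevant critical point is the single real branch point $r_{0}$ (i.e.\ $s_{0}$ after rescaling) at which the $g$-function $g_{2}(\eta)=(\eta-r_{0})^{3/2}p_{2}(\eta)$ vanishes to the $\tfrac32$ power --- a soft-edge, not a saddle-pair, configuration. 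This forces you to use the square-root outer parametrix $Q^{(\infty)}(\eta)=(\lambda(\eta-r_{0}))^{-\sigma_{3}/4}N$ and, crucially, a local Airy parametrix in a shrinking disc $U(r_{0};\rho_{3})$: without it, the jump of the partially normalized RH problem fails to be uniformly small near $r_{0}$ and the small-norm argument does not close, no matter how cleverly the lens contours are chosen.

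The second omission is more serious for extracting the conclusion. The RH solution $X$ has a logarithmic singularity at $\zeta=s$ (condition (d) of its RH problem), and the differential identity reads $\frac{\partial F}{\partial s}=\lim_{\zeta\to s}\frac{1}{2\pi i}(X^{-1}X')_{21}$. To take that limit and expose the decay you need an explicit local parametrix $Q^{(r)}$ in a shrinking disc $U(r;\rho_{3})$ built from a Cauchy transform (the paper's \eqref{eq-def-Qr(eta)}); this is what produces the prefactor $e^{-2\lambda^{(4k+3)/2}g_{2}(r)}$ in \eqref{eq-dF/ds-Q}, from which the bound $\Boh(e^{-M|x|^{4k/(4k+2)}(s-s_{0})^{3/2}})$ follows. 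You do correctly identify the scale $|x|^{4k/(4k+2)}$ via the linearization of $\alpha_k s^{2k+1}+x$ at $s_{0}$, and you rightly flag that the hard case is the lower cutoff where $s\to s_{0}$; but the paper handles that coalescence not by a contour choice alone but by taking both discs to have radius $\rho_{3}=C_{3}\lambda^{-k-1+\delta}$, chosen so that $U(r_{0};\rho_{3})$ and $U(r;\rho_{3})$ remain disjoint throughout the regime $\alpha_{k}r^{2k+1}+y\ge C_{1}\lambda^{-k-1+\delta}$. In short, the two local parametrices (Airy at $r_{0}$, Cauchy-type at $r$) are not optional refinements --- they are the ingredients that make both the small-norm estimate and the limit in the differential identity work.
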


\begin{remark}
    Note that in the transition region, both $|s|$ and $|x|$ are large.  Although the error term in \eqref{eq-dF/dtau-case-II} is large as $|x|\to+\infty$, its contribution to the asymptotics of $F(s;x)$ as $s\to-\infty$ is small since the integration interval with respect to $s$ in the transition region is small enough and we refer to the proof of Theorem \ref{Thm-asym-PI2k-determinant} below for a detailed discussion.  
\end{remark}

With the aid of the above lemmas, we are ready to prove our main theorem.

\medskip



\paragraph{Proof of Theorem \ref{Thm-asym-PI2k-determinant}}
Let us first consider the integral $I_{1}(s;x,x_{0})$ in \eqref{eq-integral-representation-F}. According to Figures \ref{Fig:asy-region} and \ref{Fig:int-contour}, the contour of integration is always contained in the algebraic region. It then follows from a combination of \eqref{eq-asym-h-infty} and \eqref{eq-asym-dF/dmu}  that, as $s\to-\infty $,
\begin{align}
I_{1}(s;x,x_{0})&=\int_{x}^{x_{0}}\left[h(\mu)-h_{\mathrm{Asy}}(\mu)\right]d\mu+\int_{x}^{x_{0}}h_{\mathrm{Asy}}(\mu)d\mu+\int_{x}^{x_{0}}\left[\frac{\alpha_{k}s^{2k+2}}{4k+4}+\frac{\mu s}{2}\right]d\mu \nonumber \\
&\quad -\frac{1}{8}\int_{x}^{x_{0}}\frac{1}{\alpha_{k}s^{2k+1}+\mu}d\mu+\mathcal{O}(|s|^{-2\epsilon}) \nonumber \\
&=I_{1,1}(x,x_{0})+I_{1,2}(s;x_{0})-I_{1,2}(s;x)+\mathcal{O}(|s|^{-2\epsilon}), \label{eq-I1-representation}
\end{align}
uniformly for all $x\in\left[-c_{1}|s|^{2k+1},\alpha_{k}|s|^{2k+1}-c_{2}|s|^{\frac{2k}{3}+\epsilon}\right]$, where $c_{1},c_{2},\epsilon$ are three arbitrary fixed positive constants with $\epsilon\in(0,\frac{4k}{3}+1)$,
where  
\begin{align}
I_{1,1}(x,x_{0})&= \int_{x}^{x_{0}}\left[h(\mu)-h_{\mathrm{Asy}}(\mu)\right]d\mu \nonumber \\
&=\begin{cases}
\int_{x}^{+\infty}\left[h(\mu)-h_{\mathrm{Asy}}(\mu)\right]d\mu+\mathcal{O}\left(|s|^{-(4k+3)/2}\right),& \text{when } x_{0}=|s|^{2k+1},\\
-\int_{-\infty}^{x}\left[h(\mu)-h_{\mathrm{Asy}}(\mu)\right]d\mu+\mathcal{O}\left(|s|^{-(4k+3)/2}\right),& \text{when }  x_{0}=-|s|^{2k+1},
\end{cases}
\end{align}
and 
\begin{equation}\label{eq-expression-I12}
\begin{split}
I_{1,2}(s;x) &  = \frac{(2k+1)^2}{2(2k+2)(4k+3)} \alpha_{k}^{-\frac{1}{2k+1}}\cdot x^{\frac{4k+3}{2k+1}}+\frac{k}{24(2k+1)}\log{(x^2+1)}\\
&\quad+\frac{\alpha_{k}}{4k+4}s^{2k+2}x+\frac{1}{4}x^2s-\frac{1}{8}\log|\alpha_{k}s^{2k+1}+x|.
\end{split}
\end{equation}

Next, we proceed to compute the second integral $I_{2}(s;x_{0})$ in \eqref{eq-integral-representation-F}. In view of Figures \ref{Fig:asy-region} and \ref{Fig:int-contour} again, the contour of integration  has to cross all the algebraic growth, transition and exponential decay regions. Based on this observation, we further divide the contour of integration  into three parts: $(s, s_1) \cup (s_1, s_2) \cup (s_2, +\infty)$, where
\begin{equation} \label{eq:s12-def}
\begin{split}
s_{1}:=&-\left(\frac{x_{0}}{\alpha_{k}}\right)^{\frac{1}{2k+1}}(1+\sgn(x_{0})|s|^{-k-\frac{5}{6}})^{\frac{1}{2k+1}}, \\
s_{2}:=&-\left(\frac{x_{0}}{\alpha_{k}}\right)^{\frac{1}{2k+1}}(1-\sgn(x_{0})|s|^{-k-\frac{5}{6}})^{\frac{1}{2k+1}}.
\end{split}
\end{equation}
With the choice of $x_{0}=\pm |s|^{2k+1}$, we have from the above formula that 
\begin{equation} \label{eq:s1-x0-relation}
\alpha_{k}s_{1}^{2k+1}+x_{0}=-|s|^{k+\frac{1}{6}}=-|x_{0}|^{\frac{k+1/6}{2k+1}} \quad \text{and}\quad \alpha_{k}s_{2}^{2k+1}+x_{0}=|s|^{k+\frac 16}=|x_{0}|^{\frac{k+1/6}{2k+1}}.
\end{equation}
Then, we rewrite $I_{2}(s;x_{0})$ as
\begin{align}\label{eq-separate-into-3-integrals}
I_{2}(s;x_{0})&=\int_{s}^{s_{1}}\frac{\partial F}{\partial\tau}(\tau;x_{0})d\tau+\int_{s_{1}}^{s_{2}}\frac{\partial F}{\partial\tau}(\tau;x_{0})d\tau+\int_{s_{2}}^{+\infty}\frac{\partial F}{\partial\tau}(\tau;x_{0})d\tau
\nonumber \\
&=I_{2,1}(s;x_{0})+I_{2,2}(s;x_{0})+I_{2,3}(s;x_{0}),
\end{align}
and calculate the three terms $I_{2,j}(s;x_{0})$, $j=1,2,3$, one by one.
According to Lemma \ref{lem-dF/dtau-case-I}, we have
\begin{equation}\label{eq-I21-representation}
I_{2,1}(s;x_{0})=J_{2}(s_{1}; x_{0})-J_{2}(s;x_{0})+\frac{1}{8}\log\left|\alpha_{k}s^{2k+1}+x_{0}\right|-\frac{1}{8}\log\left|\alpha_{k}s^{2k+1}+x_{0}\right|+\mathcal{O}(|s|^{-k}),
\end{equation}
as $s\to-\infty$, where
\begin{equation}\label{eq-J2(-tildex,s)}
J_{2}(s;x_{0})=\frac{\alpha_{k}^2}{4(4k+3)}s^{4k+3}+\frac{\alpha_{k}}{4k+4}x_{0}s^{2k+2}+\frac{1}{4}x_{0}^2s.
\end{equation}
Moreover, Lemma \ref{lem-dF/dtau-case-III} implies that $I_{2,3}(s;x_{0})$ is exponentially small as $s\to-\infty$.  

The remaining task is to derive the asymptotic behavior of $I_{2,2}(s;x_{0})$ as $s\to-\infty$.
It is readily seen from the definitions of $s_{1}$ and $s_{2}$ in \eqref{eq:s12-def} that $\chi(s_{1};x_{0})$ and $\chi(s_{2};x_{0})$ tend to negative and positive infinity, respectively and $|s_{1}-s_{2}|=\mathcal{O}(|s|^{-k+\frac{1}{6}})$ as $s\to-\infty$. Hence, by Lemma \ref{lem-dF/dtau-case-II} and the fact that the Hamiltonian $H_{\mathrm{P_{II}}}(\chi)$ decays exponentially as $\chi\to+\infty$, we have
\begin{align}
I_{2,2}(s;x_{0})&
=\left[ \int_{s_{1}}^{s_{2}}\frac{\partial \chi(\tau;x_{0})}{\partial \tau}H_{\mathrm{P_{II}}}(\chi(\tau;x_{0}))d\tau \right]
\left(1+\mathcal{O}(|s|^{-k-\frac{5}{6}})\right)+\mathcal{O}(|s|^{-\frac{1}{6}}) 
\nonumber \\
& =\left[\int_{\chi(s_{1};x_{0})}^{+\infty}H_{\mathrm{P_{II}}}(\xi)d\xi\right]\left(1+\mathcal{O}(|s|^{-k-\frac{5}{6}})\right)+\mathcal{O}(|s|^{-\frac{1}{6}}). \nonumber
\end{align}
A combination of \eqref{eq:TW-large gap asy} and \eqref{eq-tracy-widom-formula} yields, as $s\to-\infty$,
\begin{align}
&I_{2,2}(s;x_{0}) \nonumber \\
&=\left[-\chi^{(0)}-\frac{\chi(s_{1};x_{0})^3}{12}+\frac{\log|\chi(s_{1};x_{0})|}{8}+\mathcal{O}(|\chi(s_{1};x_{0})|^{-\frac{3}{2}})\right]\left(1+\mathcal{O}(|s|^{-k-\frac{5}{6}})\right)+\mathcal{O}(|s|^{-\frac{1}{6}}),\nonumber\\
&=\left[-\chi^{(0)}-\frac{\chi(s_{1};x_{0})^3}{12}+\frac{1}{8}\log|\chi(s_{1};x_{0})|\right]\left(1+\mathcal{O}(|s|^{-k-\frac{5}{6}})\right)+\mathcal{O}(|s|^{-\frac{1}{6}}),
\end{align}
where $\chi(\tau;x_{0})$ is defined in \eqref{eq-def-chi(tau)}. Here we have used the facts that, as $s\to-\infty$, $\chi(s_{1};x_{0})$ tends to negative infinity and $|\chi(s_{1};x_{0})|^{-\frac{3}{2}}$ is controlled by $\mathcal{O}(|s|^{-\frac{3}{4}})$. Moreover, it follows from the definition of $s_{1}$ and $\chi(\tau;x_{0})$ that $\chi(s_{1};x_{0})=\mathcal{O}(|s|^{\frac{k}{3}+\frac{1}{6}})$ and 
$\frac{|s|^{-k-5/6}\chi(s_{1};x_{0})^3}{12}=\mathcal{O}(|s|^{-\frac{1}{3}})$ as $s\to-\infty$. Hence, as $s\to-\infty$, we have
\begin{align}
\label{eq-asym-I22}
I_{2,2}(s;x_{0})&
=-\frac{\chi(s_{1};x_0)^3}{12}+\frac{1}{8}\log\left|\left(\alpha_{k}s_{1}^{2k+1}+x_{0}\right)\right|-\frac{k}{12(2k+1)}\log|x_{0}|\nonumber 
\\
&\quad -\chi^{(0)}-\frac{\log(2k+1)}{24}-\frac{\log{\alpha_{k}}}{24(2k+1)}
+\mathcal{O}(|s|^{-\frac{1}{3}}).
\end{align}

Now, we have obtained asymptotics of the integrals $I_{1}(s;x,x_{0})$ and $I_{2}(s;x_{0})$ in \eqref{eq-integral-representation-F}. A combination of the formulas \eqref{eq-I1-representation}--\eqref{eq-expression-I12}, \eqref{eq-separate-into-3-integrals}--\eqref{eq-J2(-tildex,s)} and \eqref{eq-asym-I22} yields 
\begin{align}\label{eq-asym-F-presentation}
F(s;x)&=\frac{1}{4(4k+3)}\alpha_{k}^2s^{4k+3}+\frac{\alpha_{k}}{4k+4}xs^{2k+2}+\frac{x^2s}{4}-\frac{1}{8}\log\left|\alpha_{k}s^{2k+1}+x\right| \nonumber \\
&\quad -I_{1,1}(x)+\frac{(2k+1)^2}{2(2k+2)(4k+3)} \alpha_{k}^{-\frac{1}{2k+1}}\cdot x^{\frac{4k+3}{2k+1}}+\frac{k\log(x^2+1)}{24(2k+1)} \nonumber \\
&\quad +\chi^{(0)}+\frac{\log(2k+1)}{24}+\frac{\log{\alpha_{k}}}{24(2k+1)} \nonumber \\
&\quad -\frac{(2k+1)^2}{2(2k+2)(4k+3)} \alpha_{k}^{-\frac{1}{2k+1}}\cdot x_{0}^{\frac{4k+3}{2k+1}}-J_2(s_1; x_0)+\frac{\chi(s_{1};x_{0})^3}{12}+\mathcal{O}(|s|^{-\frac{1}{6}})
\end{align}
as $s\to-\infty$. To complete the proof of Theorem \ref{Thm-asym-PI2k-determinant}, we only need to show the last line in the above formula does not contribute to the constant term, that is,
\begin{equation}\label{J1-J2+yhi^3/12}
\frac{(2k+1)^2}{2(2k+2)(4k+3)} \alpha_{k}^{-\frac{1}{2k+1}}\cdot x_{0}^{\frac{4k+3}{2k+1}}+J_2(s_1; x_0)-\frac{\chi(s_{1};x_{0})^3}{12}=\mathcal{O}(|s|^{-\frac{1}{3}}), \qquad s\to-\infty.
\end{equation}
Let us consider the case when $x_0 = |s|^{2k+1}$; the case $x_0 = -|s|^{2k+1}$ can be proved in a similar manner. First, it follows from the definition of $\chi(s;x)$ in \eqref{eq-def-chi(tau)} and the relation \eqref{eq:s1-x0-relation} that
\begin{multline}
     \frac{(2k+1)^2}{2(2k+2)(4k+3)} \alpha_{k}^{-\frac{1}{2k+1}}\cdot x_{0}^{\frac{4k+3}{2k+1}}-\frac{\chi(s_{1};x_{0})^3}{12} 
   \\  
   =\frac{(2k+1)^2\alpha_{k}^{-\frac{1}{2k+1}}}{2(2k+2)(4k+3)}  |s|^{4k+3} + \frac{\alpha_{k}^{-\frac{1}{2k+1}}}{12 (2k+1)} |s|^{k+\frac{1}{2}}.
\end{multline}
Next, for the remaining factor $J_2(s_1; x_0)$ in \eqref{J1-J2+yhi^3/12}, we use its definition from \eqref{eq-J2(-tildex,s)} and the relationship \eqref{eq:s1-x0-relation} once more to obtain
\begin{align}
    & J_2(s_1; x_0) = s_1 \left( \frac{\alpha_{k}^2 s_1^{4k+2}}{4(4k+3)}+\frac{\alpha_{k} s_1^{2k+1}}{4k+4}x_{0}+\frac{1}{4}x_{0}^2 \right) \nonumber \\
    &  = s_1 \left( \frac{(2k+1)^2}{4(k+1)(4k+3)} |s|^{4k+2}-\frac{2k+1}{4(k+1)(4k+3)} |s|^{3k+\frac{7}{6}} +\frac{1}{4(4k+3)} |s|^{2k+\frac{1}{3}} \right).
\end{align}
Recalling the definition of $s_1$ in \eqref{eq:s12-def}, when $x_0 = |s|^{2k+1}>0$, it is straightforward to see that
\begin{equation}\label{eq-expand-s1-by-s}
    \alpha_{k}^{\frac{1}{2k+1}}s_{1} =- |s|-\frac{|s|^{-k+\frac{1}{6}}}{2k+1}  + \frac{k \, |s|^{-2k-\frac{2}{3}}}{(2k+1)^2} -\frac{k(4k+1)}{3(2k+1)^3}|s|^{-3k-\frac{3}{2}}+\mathcal{O}\left(|s|^{-4k-\frac{7}{3}}\right).
\end{equation}
as $s\to-\infty$. Finally, by combining the above three formulas, we arrive at the desired approximation in \eqref{J1-J2+yhi^3/12}. This finishes the proof of Theorem \ref{Thm-asym-PI2k-determinant}. \qed

\section{Preliminaries} \label{Sec:Pre}

\subsection{A Riemann-Hilbert characterization of the $\mathrm{P_{I}^{2k}}$ kernel}
Recall the $\mathrm{P_{I}^{2k}}$ kernel $K^{(k)}$ in \eqref{eq:singularlim}. By \cite{CIK10}, we have 
\begin{equation}\label{def:limKer}
 K^{(k)}(u,v) = K^{(k)}(u,v;x,t_1,\ldots,t_{2k-1}) = \frac{\Psi_1^{(2k)}(u)\Psi_{2}^{(2k)}(v)-\Psi_1^{(2k)}(v)\Psi_{2}^{(2k)}(u)}{-2\pi i (u-v)},
\end{equation}
where $(\Psi_{1}^{(2k)}(\zeta),\Psi_{2}^{(2k)}(\zeta))^\textsf{T}$ is related to the following RH problem.

\paragraph{RH problem for $\Psi$}
\begin{enumerate}
\item [\rm (a)]
 $\Psi(\zeta)=\Psi^{(2k)}(\zeta;x,t_1,\ldots,t_{2k-1})$ is a $2\times 2$  matrix-valued function, which is analytic for $\zeta$ in
  $\mathbb{C}\backslash \{\cup_{j=1}^4\Gamma_j\cup\{0\}\}$, where
\begin{align*}
\Gamma_1=(0,+\infty), ~~ \Gamma_2=e^{\frac{4k+2 }{4k+3}\pi i}(0,+\infty), ~~\Gamma_3=(-\infty,0),~~\Gamma_4=e^{-\frac{4k+2 }{4k+3}\pi i}(0,+\infty),
\end{align*}
with the orientation shown in Figure \ref{fig:jumpsPsi}.

\begin{figure}[t]
\begin{center}

\tikzset{every picture/.style={line width=0.75pt}} 

\begin{tikzpicture}[x=0.75pt,y=0.75pt,yscale=-1,xscale=1]

\draw    (52,120) -- (203,120) -- (333,120) ;
\draw [shift={(131.3,120)}, rotate = 180] [fill={rgb, 255:red, 0; green, 0; blue, 0 }  ][line width=0.08]  [draw opacity=0] (7.14,-3.43) -- (0,0) -- (7.14,3.43) -- cycle    ;
\draw [shift={(271.8,120)}, rotate = 180] [fill={rgb, 255:red, 0; green, 0; blue, 0 }  ][line width=0.08]  [draw opacity=0] (7.14,-3.43) -- (0,0) -- (7.14,3.43) -- cycle    ;
\draw    (116.5,20) -- (216.5,120) ;
\draw [shift={(169.19,72.69)}, rotate = 225] [fill={rgb, 255:red, 0; green, 0; blue, 0 }  ][line width=0.08]  [draw opacity=0] (7.14,-3.43) -- (0,0) -- (7.14,3.43) -- cycle    ;
\draw    (117.5,208) -- (216.5,120) ;
\draw [shift={(169.84,161.48)}, rotate = 138.37] [fill={rgb, 255:red, 0; green, 0; blue, 0 }  ][line width=0.08]  [draw opacity=0] (7.14,-3.43) -- (0,0) -- (7.14,3.43) -- cycle    ;
\draw  [fill={rgb, 255:red, 0; green, 0; blue, 0 }  ,fill opacity=1 ] (216.67,119.83) .. controls (216.67,119.19) and (216.15,118.66) .. (215.5,118.66) .. controls (214.85,118.66) and (214.33,119.19) .. (214.33,119.83) .. controls (214.33,120.48) and (214.85,121) .. (215.5,121) .. controls (216.15,121) and (216.67,120.48) .. (216.67,119.83) -- cycle ;

\draw (218.5,123) node [anchor=north west][inner sep=0.75pt]   [align=left] {0};
\draw (343,117) node [anchor=north west][inner sep=0.75pt]   [align=left] {$\Gamma_1$};
\draw (98,14) node [anchor=north west][inner sep=0.75pt]   [align=left] {$\Gamma_2$};
\draw (34,113) node [anchor=north west][inner sep=0.75pt]   [align=left] {$\Gamma_3$};
\draw (95,205) node [anchor=north west][inner sep=0.75pt]   [align=left] {$\Gamma_4$};

\end{tikzpicture}
   \caption{The jump contours $\Gamma_j$, $j=1,2,3,4$, of the RH problem for $\Psi$.}
   \label{fig:jumpsPsi}
\end{center}
\end{figure}
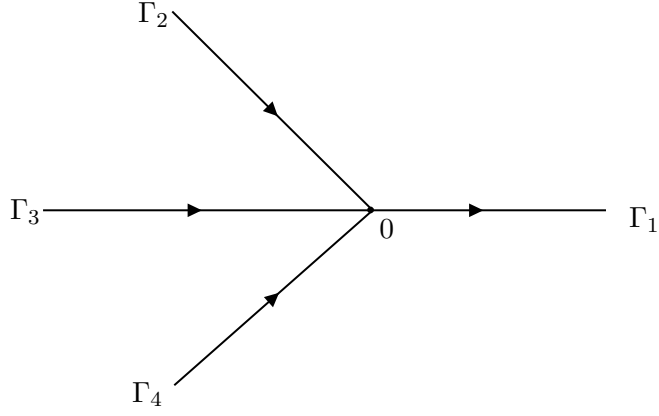

  \item [\rm (b)] On the contour $\cup_{j=1}^4\Gamma_j$, the limiting values of $\Psi_\pm(\zeta)$ exist and satisfy the jump condition
  \begin{gather}\label{eq:Psi-jump}
  \Psi_+ (\zeta)=\Psi_- (\zeta)
  \left\{ \begin{array}{ll}
           \begin{pmatrix}
                                 1 & 1 \\
                                 0 & 1
                                 \end{pmatrix}, &\quad \zeta \in {\Gamma}_1,
\\
           \begin{pmatrix}
                                 1 & 0 \\
                                1 & 1
                                 \end{pmatrix}, &\quad \zeta \in {\Gamma}_2 \cup \Gamma_4,
\\
           \begin{pmatrix}
                                 0 & 1 \\
                                 -1 & 0
                                 \end{pmatrix},& \quad
                                           \zeta \in {\Gamma}_3.
          \end{array}
    \right .
  \end{gather}
\item [\rm (c)]  As $\zeta\rightarrow \infty$, we have
\begin{equation} \label{eq:Psi-infinity}
 \Psi(\zeta) =
       \left (I+\frac {\Psi_{-1}}{\zeta}
    +\mathcal{O}\left( \zeta^{-2}\right) \right)\zeta^{-\frac{1}{4}\sigma_3} N e^{-\theta(\zeta) \sigma_3},
\end{equation}
where $\Psi_{-1}$ is a $\zeta$-independent matrix,
$\sigma_3$ is the Pauli matrix defined in \eqref{def:Pauli}, 
\begin{align}\label{eq-def-N}
N &= \frac{1}{\sqrt{2}}
 \begin{pmatrix}
 1 & 1
 \\
 -1 & 1
 \end{pmatrix}e^{-\frac{1}{4}\pi i \sigma_3}
\end{align}
is a constant matrix,
and
\begin{align}\label{eq-def-theta}
\theta(\zeta)&=\theta(\zeta;x,t_1,\ldots,t_{2k-1})=\frac{4}{4k+3}\zeta^{\frac{4k+3}{2}}+4\sum_{j=1}^{2k-1}\frac{t_j}{2j+1}\zeta^{\frac{2j+1}{2}}+x\zeta^{\frac{1}{2}}.
\end{align}
In the above formulas, the principal branch of the fractional powers is taken.

\item [\rm (d)] $\Psi(\zeta)$ is bounded at $\zeta=0$.

\end{enumerate}

In \cite{Claeys-Vanlessen-2007}, it has been proved that the above RH problem for $\Psi$ is uniquely solvable for $k=1$. The proof can be easily generalized to any integer larger than $1$; see also \cite{CIK10}. The RH problem for $\Psi$ is related to the Painlev\'{e} I hierarchy through the fact that
\begin{equation}
\label{eq-def-Psi--1}
\Psi_{-1}=\left(\begin{matrix}
\frac{h^2-\mathsf{q}}{2} & h\\
\star &  \frac{h^2+\mathsf{q}}{2}
\end{matrix}\right),
\end{equation} 
where $\mathsf{q}$ is the special solution of $\mathrm{P_{I}^{2k}}$ described in Section \ref{sec:results} and $h$ is the associated Hamiltonian.

The functions $(\Psi_{1}^{(2k)},\Psi_{2}^{(2k)})^\textsf{T}$ appearing in the $\mathrm{P_{I}^{2k}}$ kernel \eqref{def:limKer} are the analytic continuation of the first column of $\Psi$ in the region bounded by $\Gamma_1$ and $\Gamma_2$ to the whole complex plane. Moreover, 
$\Psi$ satisfies a Lax pair as follows (cf. \cite{Claeys-2012}):
\begin{equation}\label{Lax-pair}
\frac{\partial \Psi}{\partial \zeta}(\zeta,x)=A(\zeta,x)\Psi(\zeta,x),\qquad \frac{\partial \Psi}{\partial x}(\zeta,x)=L(\zeta,x)\Psi(\zeta,x),
\end{equation}
where $A$ is a polynomial in $\zeta$ of degree $2k+1$ and 
\begin{equation}\label{eq-lax-pair-L}
L(\zeta,x)=\left(\begin{matrix}
0&1\\\zeta+2\q(x)&0
\end{matrix}\right).
\end{equation}

\begin{remark}
If $k=0$ ({\it i.e.}, all the parameters $x, t_{1},\cdots,t_{2k-1}$, vanish), the RH problem for $\Psi$ can reduce to the one for the Airy kernel. More precisely, we have $\Psi^{(0)}(\zeta)=2^{\frac{1}{6}\sigma_{3}}\Phi^{(\mathrm{Ai})}(2^{\frac{2}{3}}\zeta)$, where $\Phi^{(\mathrm{Ai})}$ is the classical Airy parametrix; cf. \cite{Deift1999,DKMVZ99}.  
\end{remark}


\subsection{Differential identities}
\label{sec:diffident}
In view of the integrable structure of the $\mathrm{P_{I}^{2k}}$ kernel given in \eqref{def:limKer}, it is well-known that we are able to establish connections between $\partial F$ and specific RH problems.

\paragraph{RH problem for $Y$}

\begin{enumerate}
\item [(a)] $Y(\zeta)$ is analytic in $\mathbb{C}\setminus[s,+\infty)$.
\item [(b)] For $\zeta\in(s,+\infty)$, the limiting values $Y_{\pm}(\zeta)$ exist and satisfy the jump condition $Y_{+}(\zeta)=Y_{-}(\zeta)J_{Y}(\zeta)$  with 
\begin{equation}
J_{Y}(\zeta)=I-2\pi i \vec{f}(\zeta)\vec{h}^{\top}(\zeta),
\end{equation}
where
\begin{equation}
\vec{f}(\zeta) = \begin{pmatrix} \Psi_{1}^{(2k)}(\zeta) \\ \Psi_{2}^{(2k)}(\zeta) \end{pmatrix}, \qquad \vec{h}(\zeta) = \frac{1}{2\pi i} \begin{pmatrix} -\Psi_{2}^{(2k)}(\zeta) \\ \Psi_{1}^{(2k)}(\zeta) \end{pmatrix}.
\end{equation}
\item [(c)] As $\zeta\to\infty$, we have
\begin{equation} \label{eq: Y-inf}
Y(\zeta)=I+\frac{Y_{-1}}{\zeta}+\mathcal{O}(\zeta^{-2}).
\end{equation}

\item [(d)] As $\zeta \to s$, we have $Y(\zeta)=\Boh(\log(\zeta-s))$. 
\end{enumerate}

By \cite[Lemma 2.12]{DIZ97}, we have 
\begin{equation}\label{eq-def-Y}
Y(\zeta)=I-\int_{s}^{\infty}\frac{\vec{F}(u)\vec{h}^{\top}(u)}{u-\zeta}du,
\end{equation}  
where 
\begin{equation}
\vec{F}:=(I-\mathbb{K}_{s}^{(k)})^{-1}\vec{f},\qquad \vec{H}:=(I-\mathbb{K}_{s}^{(k)})^{-1}\vec{h}.
\end{equation}
Then, $\partial F / \partial x$ is related to the above RH problem in the following way. 

\begin{lemma}\label{lem-differential-indentity-1}
For any $x,s\in\mathbb{R}$, we have
\begin{equation}\label{eq-diff-identity-1}
\frac{\partial F}{\partial x}(s;x) = \frac{\partial}{\partial x}\log\det(I-\mathbb{K}_{s}^{(k)})=-(Y_{-1})_{12},
\end{equation}
where $Y_{-1}$ is the coefficient of $1/\zeta$ term in \eqref{eq: Y-inf}.
\end{lemma}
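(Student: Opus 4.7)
The plan is to prove the identity by combining the Jacobi formula for the log-determinant with the Lax pair structure in \eqref{Lax-pair}--\eqref{eq-lax-pair-L}, which will make the $x$-derivative of the kernel rank one and directly match the integral representation \eqref{eq-def-Y} of $Y$.

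First, I would apply Jacobi's formula to write
\begin{equation*}
\frac{\partial F}{\partial x}(s;x)=\frac{\partial}{\partial x}\log\det(I-\mathbb{K}_{s}^{(k)})=-\operatorname{tr}\!\left[(I-\mathbb{K}_{s}^{(k)})^{-1}\frac{\partial \mathbb{K}_{s}^{(k)}}{\partial x}\right].
\end{equation*}
Here the only $x$-dependence enters through the functions $\Psi_1^{(2k)}$, $\Psi_2^{(2k)}$ in \eqref{def:limKer}, so the task reduces to computing $\partial_x K^{(k)}(u,v)$.

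Second, I would use the Lax pair \eqref{Lax-pair} restricted to the first column: $\partial_x \Psi_1^{(2k)}=\Psi_2^{(2k)}$ and $\partial_x \Psi_2^{(2k)}=(\zeta+2\mathsf{q})\Psi_1^{(2k)}$ (evaluated at $\zeta=u$ or $v$ as appropriate). Differentiating the numerator of \eqref{def:limKer} and carrying out the cancellations, the $\Psi_2\Psi_2$ and $2\mathsf{q}\Psi_1\Psi_1$ terms cancel, leaving
\begin{equation*}
\frac{\partial}{\partial x}\left[\Psi_1^{(2k)}(u)\Psi_2^{(2k)}(v)-\Psi_1^{(2k)}(v)\Psi_2^{(2k)}(u)\right]=(v-u)\Psi_1^{(2k)}(u)\Psi_1^{(2k)}(v).
\end{equation*}
The $(v-u)$ cancels the $(u-v)$ in the denominator, producing the clean rank-one kernel
\begin{equation*}
\frac{\partial K^{(k)}}{\partial x}(u,v)=\frac{\Psi_1^{(2k)}(u)\Psi_1^{(2k)}(v)}{2\pi i}.
\end{equation*}
This algebraic simplification is the technical heart of the proof; everything else is formal.

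Third, I would exploit this rank-one structure. Writing $\partial_x \mathbb{K}_s^{(k)}$ as the outer product $\Psi_1^{(2k)}\otimes (\Psi_1^{(2k)}/2\pi i)$, the kernel of $(I-\mathbb{K}_s^{(k)})^{-1}\partial_x\mathbb{K}_s^{(k)}$ becomes $F_1(u)\cdot \Psi_1^{(2k)}(v)/(2\pi i)$, where $F_1$ is the first component of $\vec{F}=(I-\mathbb{K}_s^{(k)})^{-1}\vec{f}$. Taking the trace gives
\begin{equation*}
\frac{\partial F}{\partial x}(s;x)=-\frac{1}{2\pi i}\int_{s}^{+\infty}F_1(u)\Psi_1^{(2k)}(u)\,du.
\end{equation*}

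Finally, I would read off $(Y_{-1})_{12}$ from the integral representation \eqref{eq-def-Y}: expanding $1/(u-\zeta)=-1/\zeta-u/\zeta^{2}-\cdots$ identifies $Y_{-1}=\int_{s}^{+\infty}\vec{F}(u)\vec{h}^{\top}(u)\,du$, so
\begin{equation*}
(Y_{-1})_{12}=\int_{s}^{+\infty}F_1(u)h_2(u)\,du=\frac{1}{2\pi i}\int_{s}^{+\infty}F_1(u)\Psi_1^{(2k)}(u)\,du,
\end{equation*}
which matches the previous display with the opposite sign, yielding \eqref{eq-diff-identity-1}. The only delicate step is the Lax-pair computation producing the rank-one reduction; the resolvent/trace manipulations and the extraction of $Y_{-1}$ from \eqref{eq-def-Y} are standard.
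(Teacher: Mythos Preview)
Your proposal is correct and follows essentially the same approach as the paper: both use Jacobi's formula, the Lax pair \eqref{Lax-pair}--\eqref{eq-lax-pair-L} to reduce $\partial_x K^{(k)}$ to the rank-one kernel $\frac{1}{2\pi i}\Psi_1^{(2k)}(u)\Psi_1^{(2k)}(v)$, and then read off $(Y_{-1})_{12}$ from the integral representation \eqref{eq-def-Y}. You spell out the Lax-pair cancellation in more detail than the paper, but the argument is the same.
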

\begin{proof}
A combination of \eqref{def:limKer} and the Lax pair of $\Psi(\zeta)$ in \eqref{Lax-pair} yields
\begin{equation}
\frac{\partial}{\partial x}K_{s}^{(k)}(u,v;x,t)=\frac{1}{2\pi i}\Psi^{(2k)}_{1}(u)\Psi^{(2k)}_{1}(v).
\end{equation}
With the well-known property of trace-class operators, we have
\begin{equation}
\begin{split}
\frac{\partial}{\partial x}\log\det(I-\mathbb{K}_{s}^{(k)})&=-\text{tr}\left((I-\mathbb{K}_{s}^{(k)})^{-1}\frac{\partial \mathbb{K}_{s}^{(k)}}{\partial x}\right)\\
&=-\frac{1}{2\pi i}\int_{s}^{\infty}\left((I-\mathbb{K}_{s}^{(k)})^{-1}\Psi^{(2k)}_{1}\right)(u)\Psi^{(2k)}_{1}(u)du.
\end{split}
\end{equation}
From the explicit expression of $Y(\zeta)$ in \eqref{eq-def-Y}, we find $Y_{-1} = \int_s^\infty \vec{F}(u)\vec{h}^{\top}(u) du$. Recall that $\vec{F}(u)=(I-\mathbb{K}_{s}^{(k)})^{-1}\vec{f}(u)$, we obtain \eqref{eq-diff-identity-1} from the above formula.
\end{proof}

By combining the RH problems for $Y$ and $X$, we can get a new one with constant jump matrices. More precisely, as in \cite{CIK10}, we define
\begin{equation}
\begin{split}
X(\zeta)&=\begin{cases}
Y(\zeta)\Psi (\zeta),&\quad \text{$ \zeta \in \mathrm{I} \cup \mathrm{II} \cup \mathrm{III}$,
} \\
Y(\zeta)\Psi (\zeta)\left(\begin{matrix}
1&0\\1&1
\end{matrix}\right), &\quad \text{$\zeta \in \mathrm{IV}$,}\\
Y(\zeta)\Psi (\zeta)\left(\begin{matrix}
1&0\\-1&1
\end{matrix}\right), &\quad \text{$\zeta\in \mathrm{V}$,}
\end{cases} \quad \text{if } s<0,\\
X(\zeta)& =\begin{cases}
Y(\zeta)\Psi(\zeta),&\quad \text{$\zeta \in \mathrm{I} \cup \mathrm{II} \cup \mathrm{III}$,}\\
Y(\zeta)\Psi(\zeta)\left(\begin{matrix}
1&0\\-1&1
\end{matrix}\right), &\quad \text{$\zeta \in \mathrm{IV}$,}\\
Y(\zeta)\Psi(\zeta)\left(\begin{matrix}
1&0\\1&1
\end{matrix}\right), &\quad \text{$\zeta \in \mathrm{V}$,}
\end{cases} \quad \text{if } s>0,
\end{split}
\end{equation}
where the regions I--V are illustrated in Figure \ref{fig-tilde-X}. It is then readily seen that $X$ satisfies the following RH problem.

\paragraph{RH problem for $X$}

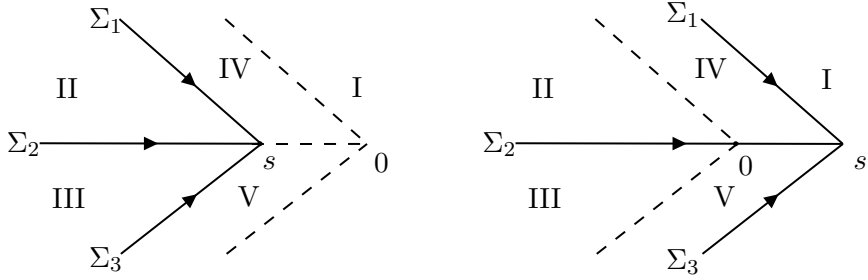
\begin{figure}[h]
\centering
\tikzset{every picture/.style={line width=0.75pt}} 

\begin{tikzpicture}[x=0.75pt,y=0.75pt,yscale=-1,xscale=1]

\draw    (141.67,44.59) -- (212.47,107.42) ;
\draw [shift={(179.91,78.53)}, rotate = 221.59] [fill={rgb, 255:red, 0; green, 0; blue, 0 }  ][line width=0.08]  [draw opacity=0] (7.14,-3.43) -- (0,0) -- (7.14,3.43) -- cycle    ;
\draw    (142.38,162.72) -- (212.47,107.42) ;
\draw [shift={(180.41,132.72)}, rotate = 141.73] [fill={rgb, 255:red, 0; green, 0; blue, 0 }  ][line width=0.08]  [draw opacity=0] (7.14,-3.43) -- (0,0) -- (7.14,3.43) -- cycle    ;
\draw  [fill={rgb, 255:red, 0; green, 0; blue, 0 }  ,fill opacity=1 ] (212.59,107.32) .. controls (212.59,106.91) and (212.22,106.58) .. (211.76,106.58) .. controls (211.31,106.58) and (210.94,106.91) .. (210.94,107.32) .. controls (210.94,107.72) and (211.31,108.05) .. (211.76,108.05) .. controls (212.22,108.05) and (212.59,107.72) .. (212.59,107.32) -- cycle ;
\draw  [dash pattern={on 4.5pt off 4.5pt}]  (194.46,44.59) -- (265.26,107.42) ;
\draw  [dash pattern={on 4.5pt off 4.5pt}]  (207.79,107.48) -- (265.26,107.42) ;
\draw  [dash pattern={on 4.5pt off 4.5pt}]  (195.17,162.72) -- (265.26,107.42) ;
\draw    (101.84,107.08) -- (212.59,107.32) ;
\draw [shift={(161.02,107.21)}, rotate = 180.12] [fill={rgb, 255:red, 0; green, 0; blue, 0 }  ][line width=0.08]  [draw opacity=0] (7.14,-3.43) -- (0,0) -- (7.14,3.43) -- cycle    ;
\draw  [dash pattern={on 4.5pt off 4.5pt}]  (379.21,44.59) -- (450.02,107.42) ;
\draw  [dash pattern={on 4.5pt off 4.5pt}]  (379.92,162.72) -- (450.02,107.42) ;
\draw  [fill={rgb, 255:red, 0; green, 0; blue, 0 }  ,fill opacity=1 ] (450.14,107.32) .. controls (450.14,106.91) and (449.77,106.58) .. (449.31,106.58) .. controls (448.85,106.58) and (448.48,106.91) .. (448.48,107.32) .. controls (448.48,107.72) and (448.85,108.05) .. (449.31,108.05) .. controls (449.77,108.05) and (450.14,107.72) .. (450.14,107.32) -- cycle ;
\draw    (432,44.59) -- (502.8,107.42) ;
\draw [shift={(470.24,78.53)}, rotate = 221.59] [fill={rgb, 255:red, 0; green, 0; blue, 0 }  ][line width=0.08]  [draw opacity=0] (7.14,-3.43) -- (0,0) -- (7.14,3.43) -- cycle    ;
\draw    (432.71,162.72) -- (502.8,107.42) ;
\draw [shift={(470.74,132.72)}, rotate = 141.73] [fill={rgb, 255:red, 0; green, 0; blue, 0 }  ][line width=0.08]  [draw opacity=0] (7.14,-3.43) -- (0,0) -- (7.14,3.43) -- cycle    ;
\draw    (339.38,107.08) -- (502.8,107.42) ;
\draw [shift={(424.89,107.26)}, rotate = 180.12] [fill={rgb, 255:red, 0; green, 0; blue, 0 }  ][line width=0.08]  [draw opacity=0] (7.14,-3.43) -- (0,0) -- (7.14,3.43) -- cycle    ;

\draw (267.26,110.42) node [anchor=north west][inner sep=0.75pt]   [align=left] {0};
\draw (125,37.85) node [anchor=north west][inner sep=0.75pt]   [align=left] {$\Sigma_1$};
\draw (83.92,100.05) node [anchor=north west][inner sep=0.75pt]   [align=left] {$\Sigma_2$};
\draw (125,157.86) node [anchor=north west][inner sep=0.75pt]   [align=left] {$\Sigma_3$};
\draw (211.59,112.17) node [anchor=north west][inner sep=0.75pt]   [align=left] {$s$};
\draw (256.14,72.26) node [anchor=north west][inner sep=0.75pt]   [align=left] {I};
\draw (108.58,72.95) node [anchor=north west][inner sep=0.75pt]   [align=left] {II};
\draw (106.58,128.36) node [anchor=north west][inner sep=0.75pt]   [align=left] {III};
\draw (189.79,62.56) node [anchor=north west][inner sep=0.75pt]   [align=left] {IV};
\draw (199.2,127.2) node [anchor=north west][inner sep=0.75pt]   [align=left] {V};
\draw (449.13,112.17) node [anchor=north west][inner sep=0.75pt]   [align=left] {0};
\draw (413,34.69) node [anchor=north west][inner sep=0.75pt]   [align=left] {$\Sigma_1$};
\draw (321.46,100.05) node [anchor=north west][inner sep=0.75pt]   [align=left] {$\Sigma_2$};
\draw (413,159.01) node [anchor=north west][inner sep=0.75pt]   [align=left] {$\Sigma_3$};
\draw (506.6,112.12) node [anchor=north west][inner sep=0.75pt]   [align=left] {$s$};
\draw (490.69,68.26) node [anchor=north west][inner sep=0.75pt]   [align=left] {I};
\draw (346.12,72.95) node [anchor=north west][inner sep=0.75pt]   [align=left] {II};
\draw (344.12,128.36) node [anchor=north west][inner sep=0.75pt]   [align=left] {III};
\draw (427.33,62.56) node [anchor=north west][inner sep=0.75pt]   [align=left] {IV};
\draw (436.75,127.2) node [anchor=north west][inner sep=0.75pt]   [align=left] {V};

\end{tikzpicture}

\caption{Regions I--V and jump contours of the RH problem for $X$. The case $s<0$ is presented on the left and the case $s>0$ is presented on the right.} \label{fig-tilde-X}
\end{figure}

\begin{enumerate}
\item [(a)] $X(\zeta)$ is analytic in $\mathbb{C}\setminus\Sigma$, where $\Sigma:=\Sigma_{1}\cup\Sigma_{2}\cup\Sigma_{3}$ is shown in Figure \ref{fig-tilde-X}.
\item [(b)] On $\Sigma$, the limiting values $X_{\pm}(\zeta)$ exist and satisfy the following jump conditions
\begin{equation}
X_{+}(\zeta)=X_{-}(\zeta)\begin{cases}
    \left(\begin{matrix}
1&0\\1&1
\end{matrix}\right), \qquad &\zeta\in\Sigma_{1}\cup\Sigma_{3},
\\
\left(\begin{matrix}
0&1\\-1&0
\end{matrix}\right), \qquad & \zeta\in\Sigma_{2}.\end{cases}
\end{equation} 
\item [(c)] As $\zeta\to\infty$, we have  
\begin{equation}\label{eq-tilde-X-by-X}
X(\zeta)=\left(I+\frac{X_{-1}}{\zeta}+\mathcal{O}(\zeta^{-2})\right)\zeta^{-\frac{1}{4}\sigma_{3}}Ne^{-\theta(\zeta;x)\sigma_{3}},
\end{equation}
where $N$ and $\theta$ are defined in \eqref{eq-def-N} and \eqref{eq-def-theta}, 
\begin{equation}\label{eq-relation-X-Y-Psi}
X_{-1}=\Psi_{-1}+Y_{-1}
\end{equation}
with $\Psi_{-1}$ and $Y_{-1}$ given in \eqref{eq-def-Psi--1} and \eqref{eq: Y-inf}, respectively.
\item [(d)] As $\zeta\to s$, we have $X(\zeta)=\mathcal{O}(\log(\zeta-s))$.
\end{enumerate}

As shown in \cite[Equation (2.17)]{CIK10}, the second differential identity reads as follows. 

\begin{lemma}\label{lem-differential-indentity-2}
For any $x,s\in\mathbb{R}$, we have
\begin{equation}\label{eq-differential-indentity-2}
\frac{\partial F}{\partial s}(s;x) = \frac{\partial}{\partial s}\log\det(I-\mathbb{K}_{s}^{(k)})=\lim\limits_{\zeta\to s}\frac{1}{2\pi i}(X(\zeta)^{-1}X'(\zeta))_{21},
\end{equation}
where the limit is taken as $\zeta\to s$ from sector $\mathrm{I}$ in Fig \ref{fig-tilde-X}. 
\end{lemma}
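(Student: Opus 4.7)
The plan is to reduce the derivative of the Fredholm determinant to a boundary evaluation of the resolvent kernel, and then to convert that boundary value into a statement about the RH problem for $X$ via the Its--Izergin--Korepin--Slavnov (IIKS) framework for integrable kernels.

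First, I would invoke the classical formula
\begin{equation*}
\frac{\partial}{\partial s}\log\det(I-\mathbb{K}_{s}^{(k)}) = R(s,s),
\end{equation*}
where $R(u,v)$ denotes the resolvent kernel of $(I-\mathbb{K}_{s}^{(k)})^{-1}\mathbb{K}_{s}^{(k)}$ and $R(s,s)$ is the diagonal limit taken from the interior $\{u,v>s\}$. This follows by term-by-term differentiation of the Fredholm series with respect to the lower integration endpoint (the sign can be checked on any rank-one toy example such as $K(u,v)=e^{-u-v}$ on $L^2(s,\infty)$), and nothing here is specific to the $\mathrm{P_{I}^{2k}}$ kernel.

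Next, I would exploit the integrable structure of $K^{(k)}$. A short calculation shows $K^{(k)}(u,v)=\vec{f}(u)^{\msf T}\vec{h}(v)/(u-v)$ with $\vec f,\vec h$ as in the RH problem for $Y$, together with $\vec f^{\msf T}\vec h\equiv 0$; the latter is a consequence of $\det\Psi\equiv 1$, which in turn follows from the unit-determinant jump matrices \eqref{eq:Psi-jump} and the normalisation \eqref{eq:Psi-infinity}. The general IIKS theorem then gives the resolvent in integrable form,
\begin{equation*}
R(u,v)=\frac{\vec{F}(u)^{\msf T}\vec{H}(v)}{u-v},\qquad \vec F=(I-\mathbb{K}_{s}^{(k)})^{-1}\vec f,\quad \vec H=(I-(\mathbb{K}_{s}^{(k)})^{*})^{-1}\vec h,
\end{equation*}
together with the dressing identities $\vec F(\zeta)=Y(\zeta)\vec f(\zeta)$ and $\vec H(\zeta)=(Y(\zeta)^{\msf T})^{-1}\vec h(\zeta)$ extended off $(s,+\infty)$. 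Since $\vec F^{\msf T}\vec H\equiv 0$ is inherited from its $\vec f,\vec h$ analog, L'Hospital at $u=v=s$ yields $R(s,s)=\vec F'(s)^{\msf T}\vec H(s)$.

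The last step is purely algebraic and uses the definition of $X$. In sector $\mathrm I$ one has $X=Y\Psi$, and because the first column of $\Psi$ is, by construction, $\vec f$, the first column of $X$ coincides with $\vec F$; in particular $\vec F'(s)^{\msf T}=(X_{11}'(s),\,X_{21}'(s))$. A direct computation using $\det\Psi=1$ and the explicit form of $\vec h$ shows $\Psi^{\msf T}\vec h=(2\pi i)^{-1}\vec e_{2}$, whence $\vec H(s)$ is $(2\pi i)^{-1}$ times the second row of $X^{-1}(s)$ (viewed as a column). Contracting these two expressions gives
\begin{equation*}
R(s,s)=\frac{1}{2\pi i}\bigl[X_{11}'(X^{-1})_{21}+X_{21}'(X^{-1})_{22}\bigr]\bigr|_{\zeta=s}=\frac{1}{2\pi i}\bigl(X^{-1}X'\bigr)_{21}\Bigr|_{\zeta=s},
\end{equation*}
which together with the first step is the claimed identity. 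The point that requires the most care is the interpretation of the boundary value at $\zeta=s$: by property (d) of the RH problem, $X$ is only $\mathcal{O}(\log(\zeta-s))$ there, so one has to verify that the logarithmic singularities cancel between $X^{-1}$ and $X'$ and produce a finite limit specifically from sector $\mathrm I$, which is the direction singled out in the lemma and corresponds to approaching the diagonal from inside the domain of $\mathbb{K}_{s}^{(k)}$.
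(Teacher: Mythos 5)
Your argument is correct in substance and follows the standard Its--Izergin--Korepin--Slavnov (IIKS) route: differentiate the Fredholm determinant to get the diagonal resolvent $R(s,s)$, express the resolvent in integrable form via the dressing identities $\vec F=Y\vec f$, $\vec H=(Y^{\msf T})^{-1}\vec h$, use $\vec F^{\msf T}\vec H\equiv 0$ to take the diagonal limit, and then translate to $X=Y\Psi$ using $\Psi^{\msf T}\vec h=(2\pi i)^{-1}\vec e_2$. The paper itself does not prove this lemma --- it simply cites \cite[Eq.\ (2.17)]{CIK10} --- so you have supplied a derivation the paper omits; it is essentially the one CIK10 carry out, not a new route.

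Two small points are worth tightening. First, the assertion that $\vec f^{\msf T}\vec h\equiv 0$ ``is a consequence of $\det\Psi\equiv 1$'' is a misattribution: the orthogonality $\vec f^{\msf T}\vec h=\tfrac{1}{2\pi i}(-\Psi_1^{(2k)}\Psi_2^{(2k)}+\Psi_2^{(2k)}\Psi_1^{(2k)})=0$ is purely algebraic. The unit-determinant fact is needed, but only in the later step $\Psi^{\msf T}\vec h=(2\pi i)^{-1}\vec e_2$, where the $(2,1)$ entry is $\det\Psi$. Second, the identification ``the first column of $X$ coincides with $\vec F$ in sector I'' deserves a sentence of justification, since $\vec f$ is by definition the entire continuation of $\Psi$'s first column from the sector between $\Gamma_1$ and $\Gamma_2$, and the first column of $\Psi$ itself changes across $\Gamma_2,\Gamma_3,\Gamma_4$; the lower-triangular factors in the definition of $X$ on sectors IV and V are precisely what compensate for those jumps, and one should note that those factors are not innocuous under the map $A\mapsto (X^{-1}X')_{21}$ (conjugation by a lower-unipotent matrix does change the $(2,1)$ entry), which is exactly why the lemma singles out the limit from sector I. Finally, you are right to flag the logarithmic singularity issue at $\zeta=s$: your chain of equalities already shows the limit exists and equals the manifestly finite $\vec F'(s)^{\msf T}\vec H(s)=R(s,s)$, but stating this explicitly would close the loop.
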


The differential identities in Lemmas \ref{lem-differential-indentity-1} and \ref{lem-differential-indentity-2} establish connections between $\frac{\partial F}{\partial s}$ and $\frac{\partial F}{\partial x}$ with specific entries of a RH problem. To prove Lemmas \ref{lem-asym-dF/dmu}--\ref{lem-dF/dtau-case-III} presented in Section \ref{Sec: main-proof}, we will employ the Deift-Zhou nonlinear steepest descent method to study asymptotics of the above RH problem for $X$. It is important to note that the asymptotic results in Lemmas \ref{lem-asym-dF/dmu}--\ref{lem-dF/dtau-case-III} hold uniformly as $|s|$ and/or $|x|$ tend to $+\infty$. While one could conduct separate asymptotic analysis for cases like $s \to -\infty$ in Lemma \ref{lem-asym-dF/dmu} or $x \to +\infty$ and $x \to -\infty$ in Lemmas \ref{lem-dF/dtau-case-I}--\ref{lem-dF/dtau-case-III}, such an approach may lead to redundant analyses due to similarities. Therefore, we adopt a unified approach by introducing a new large variable $\lambda >0$ and setting
\begin{equation} \label{s-r-and-x-y}
s= r \lambda \quad \textrm{and} \quad x=y\lambda^{2k+1},
\end{equation}
with $r\geq -1$ and $y\in\mathbb{R}$ being bounded parameters. In this way, the three regions in Lemmas \ref{lem-asym-dF/dmu}--\ref{lem-dF/dtau-case-III} can be characterized as follows (see Remark \ref{rem-three-regions} below for a more detailed explanation on why these three regions are divided):
\begin{equation}\label{eq-three-cases-c-r-plane}
\begin{split}
\text{algebraic growth region: }& \alpha_{k}r^{2k+1}+y\in\left[-M,-C_{1}\lambda^{-k-1+\delta}\right],\\
\text{transition region: }& \alpha_{k}r^{2k+1}+y\in\left[-C_{1}\lambda^{-k-1+\delta},C_{1}\lambda^{-k-1+\delta}\right],\\
\text{exponential decay region: }& \alpha_{k}r^{2k+1}+y\in\left[C_{1}\lambda^{-k-1+\delta},+\infty\right),\\ 
\end{split}
\end{equation} 
where $\alpha_k$ is defined in \eqref{eq-def-Ak}, $M>0,$ $C_{1}>0$, and $\delta \in \mathbb{R}$ are arbitrary fixed constants with $
\delta\in\left(-\frac{k}{3},k+1\right)$ for the algebraic growth case and $\delta\in\left(0,\frac{1}{4}\right)$ for the transition and exponential decay cases.

Before proceeding with the asymptotic analysis, we note that our analysis requires the presence of at least one large variable, either $s$ or $x$. This condition implies that \begin{equation}\label{eq-r-y-not-both-vanish}
    |r|+|y|\geq\delta_{0}
\end{equation}
for some fixed constant $\delta_{0}>0$.

%

%

\section{Asymptotic analysis in the algebraic growth region} \label{Sec:Alg}
In this section, we assume that $\alpha_{k}r^{2k+1}+y\in\left[-M,-C_{1}\lambda^{-k-1+\delta}\right]$, where $M$ and $C$ are two fixed positive constants with $\delta\in\left(-\frac{k}{3},k+1\right)$. By choosing $r$ and $y$ appropriately, we will prove Lemmas \ref{lem-asym-dF/dmu} and \ref{lem-dF/dtau-case-I} by carrying out asymptotic analysis of the RH problem for $X$. The relevant asymptotic analysis bears some similarity to that performed in \cite[Section 3]{CIK10}, corresponding to $r=-1$ and $y=0$ in the notation introduced in  \eqref{s-r-and-x-y}. The main difference is that, instead of $\alpha_{k}r^{2k+1}+y=-\alpha_k$ as in \cite[Section 3]{CIK10}, our asymptotic analysis is uniformly valid for $\alpha_{k}r^{2k+1}+y$ within a large interval $\left[-M,-C_{1}\lambda^{-k-1+\delta}\right]$, where the right ending point can approach 0 as $\lambda \to +\infty$. Thus, some careful uniform treatments have to be conducted in this context. We start with the introduction of the so-called $g$-function. 


\subsection{Construction of the $g$-function}
By introducing the scaling $\zeta=\lambda\eta$ and the change of variables in \eqref{s-r-and-x-y}, the intersection point $\zeta = s$ in Figure \ref{fig-tilde-X} is mapped to $\eta = r$. Moreover, we have from \eqref{eq-def-theta} that 
\begin{equation}\label{def:thetaeta}
\theta(\zeta;x):=\theta(\zeta;x,0,\ldots,0)=\theta(\lambda\eta; y\lambda^{2k+1})=\lambda^{\frac{4k+3}{2}}\left[\frac{4}{4k+3}\eta^{\frac{4k+3}{2}}+y\eta^{\frac{1}{2}}\right]=\lambda^{\frac{4k+3}{2}}\theta(\eta;y).
\end{equation}
In order to normalize the RH problem for $X$, we introduce the $g$-function as follows:
\begin{equation}\label{eq-def-g1-asym}
g_{1}(\eta)=\sum\limits_{j=0}^{2k+1}(-1)^{j+1}b_{j}(\eta-r)^{\frac{1}{2}+j},\qquad \eta \in \mathbb{C}\setminus (-\infty,r],
\end{equation}  
where 
\begin{align}
\label{eq-def-b0}
b_{0}&=-\alpha_{k}r^{2k+1}-y,\\
b_{j}&=(-1)^{j+1}\frac{\Gamma\left(2k+2\right)\Gamma\left(\frac{3}{2}\right)}{\Gamma\left(2k+2-j\right)\Gamma\left(j+\frac{3}{2}\right)}\alpha_{k}r^{2k+1-j}. \label{eq-def-b0-bi}
\end{align}
Then, it is straightforward to verify that
\begin{equation}\label{eq-matching-g1-theta}
\theta(\eta;y)-g_{1}(\eta)=d_{1}\eta^{-\frac{1}{2}}+\mathcal{O}(\eta^{-\frac{3}{2}}),\qquad \eta \to \infty, 
\end{equation}
where
\begin{equation}
\label{eq-def-d1}
d_{1}=\frac{\alpha_{k}}{4(k+1)}r^{2k+2}+\frac{ry}{2}.
\end{equation}
\begin{remark}
It is also natural to rewrite $g_{1}(\eta)$ and $g_{1}'(\eta)$ in the form
\begin{equation}\label{eq-def-g1(eta)-g1'(eta)}
g_{1}(\eta)=(\eta-r)^{\frac{1}{2}}p_{1}(\eta),\qquad g_{1}'(\eta)=(\eta-r)^{-\frac{1}{2}}\tilde{p}_{1}(\eta),
\end{equation}
where both $p_{1}(\eta)$ and $\tilde{p}_{1}(\eta)$ are polynomials of degree $2k+1$. From the matching condition \eqref{eq-matching-g1-theta}, we conclude that 
\begin{align}
p_{1}(\eta)&=\sum\limits_{j=0}^{2k+1}\frac{4}{4k+3}\frac{\Gamma\left(j+\frac{1}{2}\right)}{\Gamma\left(j+1\right)\Gamma\left(\frac{1}{2}\right)}r^{j}\eta^{2k+1-j}+y,  \label{eq-def-p1}\\
\tilde{p}_{1}(\eta)&= 2\eta^{2k+1}\sum\limits_{j=0}^{2k+1}\frac{\Gamma\left(j-\frac{1}{2}\right)}{\Gamma\left(j+1\right)\Gamma\left(-\frac{1}{2}\right)}\left(\frac{r}{\eta}\right)^{j}+\frac{y}{2}.  \label{eq-def-tilde-p1}
\end{align}
\end{remark}
%

%

\begin{remark}\label{rem-tj-not-vanishe}
In Theorem \ref{Thm-asym-PI2k-determinant}, it is assumed that $t_{1}=t_{2}=\cdots=t_{2k-1}=0$. When considering general parameters $t_{j}$, the function $\theta(\zeta)$ in \eqref{eq-def-theta} involves additional terms dependent on $t_{j}$. Consequently, when we define the $g$-function in a manner similar to \eqref{eq-def-g1-asym}, the coefficients $b_j$ in \eqref{eq-def-b0-bi} and $d_1$ in \eqref{eq-def-d1} also depend on the parameters $t_j$. In particular, when $k=1$, we have
\begin{equation}
    d_{1}=\frac{5}{32}r^4+\frac{ry}{2}+\frac{t_{1}r^2}{2|s|^2}, \qquad b_{0}=-\frac{5}{4}r^{3}-y-\frac{2t_{1}r}{|s|^2},\qquad b_{1}=\frac{5}{2}r^2+\frac{4t_{1}}{3|s|^2}.
\end{equation}
As $s \to -\infty$, the additional factors dependent on $t_1$ are relatively small compared to the leading terms in \eqref{eq-def-b0-bi}. This suggests that we can essentially adopt the same approach as in the present paper to study the general case when $t_{j}\neq 0$ and obtain results similar to that in Corollary \ref{cor-asym-PI2k-determinant}. 
\end{remark}

\begin{remark} \label{rem-three-regions}
From \eqref{eq-def-b0-bi}, it is clear that the function $g_1(\eta)$ has different properties (particularly the sign of $\re g_1(\eta)$) near $\eta = r$, when the leading coefficient $b_0$ varies with respect to $r$ and $y$. This difference significantly influences our asymptotic analysis, leading to very different results. This is the reason why we characterize three distinct regions based on $\alpha_{k}r^{2k+1}+y$ in \eqref{eq-three-cases-c-r-plane}.
\end{remark}

\begin{lemma}\label{lem-g1(eta)}
There exists $\theta_{0}>0$ such that, for any bounded $M>0$ and $r$ with $\alpha_{k}r^{2k+1}+y\in[-M,0]$, we have
\begin{equation} \label{eq:lemma-g1}
\Re g_{1}(\eta)<0,
\end{equation}
where $\arg{(\eta-r)}\in(-\pi,-\pi+\theta_{0}]\cup[\pi-\theta_{0},\pi)$.
\end{lemma}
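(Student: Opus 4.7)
My plan is to reduce the sign claim to a positivity statement for the polynomial $\tilde p_1$ of~\eqref{eq-def-g1(eta)-g1'(eta)}, and verify that positivity via an explicit integral representation of the coefficients $b_j$. First, observe that $g_1$ is purely imaginary on the cut $(-\infty,r]$: plugging $\arg(\eta-r)=\pm\pi$ into~\eqref{eq-def-g1-asym} with $\cos((\tfrac12+j)\pi)=0$ and $\sin((\tfrac12+j)\pi)=(-1)^j$ yields $g_{1,\pm}(r-\rho)=\mp i\sum_{j}b_j\rho^{1/2+j}$. Hence $u(\rho,\theta):=\Re g_1(r+\rho e^{i\theta})$ vanishes on $\theta=\pm\pi$, and writing $\theta=\pi-\phi$ produces the exact series
\begin{equation*}
u(\rho,\pi-\phi)=-\sum_{j=0}^{2k+1}b_j\rho^{1/2+j}\sin\bigl((\tfrac12+j)\phi\bigr).
\end{equation*}
Its leading-in-$\phi$ term is $-\phi\,\rho^{1/2}\Sigma(\rho)$ with $\Sigma(\rho):=\sum_{j}(\tfrac12+j)b_j\rho^{j}$, and a direct check against~\eqref{eq-def-g1(eta)-g1'(eta)} gives $\Sigma(\rho)=-\tilde p_1(r-\rho)$. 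So the entire question reduces to showing $\Sigma(\rho)>0$ for $\rho>0$.

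The key input is an integral representation of $\mathcal B(z):=\sum_{j=0}^{2k+1}b_j z^j$. Using the beta-function identity $\Gamma(j+1)\Gamma(3/2)/\Gamma(j+3/2)=\tfrac12\int_0^1 t^{j}(1-t)^{-1/2}\,dt$ in~\eqref{eq-def-b0-bi} together with the binomial theorem yields
\begin{equation*}
\mathcal B(z)=-y-\frac{\alpha_k}{2}\int_{0}^{1}(1-t)^{-1/2}(r-tz)^{2k+1}\,dt,
\end{equation*}
so that $\Sigma(\rho)=\tfrac12\mathcal B(\rho)+\rho\mathcal B'(\rho)$. Splitting $tz=r-(r-tz)$ in $z\mathcal B'(z)$ and applying a single integration by parts based on $\frac{d}{dt}[(1-t)^{1/2}(r-t\rho)^{2k+1}]$ verifies
\begin{equation*}
\int_{0}^{1}(1-t)^{-1/2}(r-t\rho)^{2k}\bigl[r-(4k+3)t\rho\bigr]\,dt=2r^{2k+1}-2(2k+1)\rho\,I(\rho),
\end{equation*}
where $I(\rho):=\int_{0}^{1}(1-t)^{-1/2}(r-t\rho)^{2k}\,dt\geq 0$. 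Substitution collapses the combination to the clean form
\begin{equation*}
\Sigma(\rho)=\frac{b_0}{2}+\frac{\alpha_k(2k+1)}{2}\,\rho\, I(\rho).
\end{equation*}
Under the hypothesis $\alpha_k r^{2k+1}+y\in[-M,0]$ one has $b_0=-\alpha_kr^{2k+1}-y\geq 0$, while $\rho I(\rho)>0$ for $\rho>0$ (the non-negative integrand vanishes only at the isolated point $t=r/\rho$). Hence $\Sigma(\rho)>0$, which proves the claim to leading order in $\phi$.

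To promote this infinitesimal sign into a finite-angle sector, I would use the exact identity $u(\rho,\pi-\phi)=-\rho^{1/2}\Im[e^{i\phi/2}\mathcal B(\rho e^{i\phi})]$ together with $e^{i\phi/2}(r-t\rho e^{i\phi})^{2k+1}=\bigl[e^{i\phi/(4k+2)}(r-t\rho e^{i\phi})\bigr]^{2k+1}$. Taylor-expanding the bracket in $\phi$ (the $\phi^2$ contribution to its imaginary part cancels) gives
\begin{equation*}
\Im\bigl[e^{i\phi/2}(r-t\rho e^{i\phi})^{2k+1}\bigr]=\frac{\phi}{2}(r-t\rho)^{2k}[r-(4k+3)t\rho]+O(\phi^{3}),
\end{equation*}
hence $u(\rho,\pi-\phi)=-\phi\,\rho^{1/2}\Sigma(\rho)+O(\phi^{3}\rho^{1/2}Q(\rho))$ for an explicit polynomial $Q$. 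Choosing $\theta_0<2\pi/(4k+3)$ small enough that the main term dominates the error uniformly in $\rho>0$ and in the admissible $(r,y)$ then yields $\Re g_1<0$ in the upper sector; the mirror sector $\arg(\eta-r)\in(-\pi,-\pi+\theta_0]$ follows by complex conjugation.

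The principal obstacle is this last uniformity step: the coefficients of $Q(\rho)$ depend on $r^{2k+1}$ and must be controlled against those of $\Sigma(\rho)$ in both the $\rho\to 0^+$ and the $\rho\to\infty$ regimes. The restriction $\theta_0<2\pi/(4k+3)$ is natural, matching the angular spacing between consecutive zeros of the leading asymptotic $\Re[b_{2k+1}(\eta-r)^{2k+3/2}]$ at infinity.
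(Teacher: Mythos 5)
Your reduction is exactly the one the paper uses: writing $g_1'(\eta)=(\eta-r)^{-1/2}\tilde p_1(\eta)$, both proofs come down to showing $\tilde p_1(\eta)<0$ on $(-\infty,r)$, i.e.\ $\Sigma(\rho)=-\tilde p_1(r-\rho)>0$ for $\rho>0$ (the paper phrases this as $\Im g'_{1,\pm}>0$ on the cut, which is the same statement after unwinding the branch of $(\eta-r)^{-1/2}$). Where you genuinely diverge is in \emph{how} you prove this positivity. The paper takes $\tilde p_1$ in the closed form \eqref{eq-def-tilde-p1}, splits into $r\le 0$ (where every $b_j\ge 0$, done trivially) and $r>0$, and for the latter runs a monotonicity/alternating-series analysis of the auxiliary power series $\varphi(z)=\sum_{j=0}^{2k+1}\frac{\Gamma(j-1/2)}{\Gamma(j+1)\Gamma(-1/2)}z^j$ and its second derivative. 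Your argument is different and cleaner: the Euler beta-function representation $\mathcal B(z)=-y-\tfrac{\alpha_k}{2}\int_0^1(1-t)^{-1/2}(r-tz)^{2k+1}\,dt$ is correct (it reproduces exactly the $b_j$ of \eqref{eq-def-b0}--\eqref{eq-def-b0-bi}, with the extra $-y$ absorbed into $b_0$), and the single integration by parts you describe does collapse $\Sigma(\rho)=\tfrac12\mathcal B(\rho)+\rho\mathcal B'(\rho)$ to
\[
\Sigma(\rho)=\frac{b_0}{2}+\frac{\alpha_k(2k+1)}{2}\,\rho\int_0^1(1-t)^{-1/2}(r-t\rho)^{2k}\,dt,
\]
where $b_0\ge 0$ under the hypothesis and the integral is strictly positive (the integrand is an even power, and $|r|+|y|\ge\delta_0>0$ rules out the degenerate case). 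This avoids the case split entirely and makes the role of the sign condition on $b_0$ transparent, so it is a genuine simplification of the key estimate.

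On the "finite-angle" step you flag as the principal obstacle: the paper is equally terse there. It simply asserts that $\Im g'_{1,\pm}>0$ on the cut "is sufficient" for the sector statement, with no quantitative control of the angle $\theta_0$ against $\rho$ or against the ranges of $r,y$; your Taylor-remainder strategy with the constraint $\theta_0<2\pi/(4k+3)$ (which matches the Stokes geometry of the leading term $b_{2k+1}(\eta-r)^{2k+3/2}$) is a reasonable route to the uniformity and is no less complete than what the paper offers. So this is not a gap relative to the paper; you have correctly identified the one place in the published argument where the uniformity is left implicit.
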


\begin{proof}
To prove \eqref{eq:lemma-g1}, it is sufficient to show that, for $\alpha_{k}r^{2k+1}+y\in[-M,0]$,  
\begin{equation}\label{eq-Im-g1+b0+}
\Im g'_{1,\pm}(\eta)>0, \qquad  \eta \in (-\infty, r).
\end{equation}
It follows from \eqref{eq-def-g1-asym} and \eqref{eq-def-g1(eta)-g1'(eta)} that
\begin{equation}\label{eq-g1-derivative}
g_{1}'(\eta)=(\eta-r)^{-\frac{1}{2}}\left[\sum\limits_{j=0}^{2k+1}(-1)^{j+1}\left(j+\frac{1}{2}\right)b_{j}(\eta-r)^{j}\right]:=(\eta-r)^{-\frac{1}{2}}\tilde{p}_{1}(\eta),
\end{equation}
where $\tilde{p}_{1}(\eta)$ is a polynomial in $\eta$ with degree $2k+1$. Obviously, we have  
\begin{equation}
\begin{split}
\label{eq-q(s)}
\tilde{p}_{1}(r)=-\frac{b_{0}}{2}=\frac{\alpha_{k}r^{2k+1}}{2}+\frac{y}{2}\leq 0.
\end{split}
\end{equation}
Next, to establish \eqref{eq-Im-g1+b0+}, we will show that $\tilde{p}_{1}(\eta)<0$ holds for all  $\eta\in (-\infty, r)$.

When $r\leq 0$, it follows from the definition of $b_{j},j=0,1,\cdots,2k+1,$ in \eqref{eq-def-b0-bi} that $b_{j}\geq 0$ for all $j$. Given that $|r|+|y|\geq\delta_{0} > 0$ (see the explanation at the end of Section \ref{Sec:Pre}), we immediately have $\tilde{p}_{1}(\eta)<0$ for $\eta\in (-\infty, r)$. 

When $r>0$, we note the explicit expression of  $\tilde{p}_{1}(\eta)$ in \eqref{eq-def-tilde-p1} and first consider the function
\begin{equation}
\varphi(z)=\sum\limits_{j=0}^{2k+1}\frac{\Gamma\left(j-\frac{1}{2}\right)}{\Gamma\left(j+1\right)\Gamma\left(-\frac{1}{2}\right)}z^{j}.
\end{equation}
It is easy to verify that $\varphi(z)$ is a monotonically decreasing function for $z\in(1,+\infty)$ and $\varphi(1)>0$. Although the above series is alternating when $z<0$, we can still show that $\varphi(z) > 0$ for $z\in(-\infty,0)$. To see this, when $z \in (-1,0)$, as the absolute values of the coefficients decrease with respect to $j$, we have $\varphi(z)>0$. When $z \in (-\infty, -1)$, we consider the second order derivative
\begin{equation}
\varphi''(z)=\sum\limits_{j=0}^{2k-1}\frac{\Gamma\left(j+\frac{3}{2}\right)}{\Gamma\left(j+1\right)\Gamma\left(-\frac{1}{2}\right)}z^{j},
\end{equation}
which is also an alternating series. In this case, the absolute values of the coefficients increase with respect to $j$. This gives us $\varphi''(z) > 0$ for $z \in (-\infty, -1)$ and 
\begin{equation}
\varphi'(z)<\varphi'(-1)=\sum\limits_{j=0}^{2k}\frac{\Gamma\left(j+\frac{1}{2}\right)}{\Gamma\left(j+1\right)\Gamma\left(-\frac{1}{2}\right)}(-1)^{j}<0,\quad  \forall z\in(-\infty,-1).
\end{equation}
Given that $\varphi(-1) > 0$, we conclude that  $\varphi(z) > 0$ for $z\in(-\infty,-1].$

Based on the above properties for $\varphi(z)$, we can have that when $\eta\in(0,r)$ ({\it i.e.}, $ z= \frac{r}{\eta} \in (1,+\infty)$),
\begin{equation}
\begin{split}
\tilde{p}_{1}(\eta)<2\eta^{2k+1} \varphi(1)+\frac{y}{2} <  \tilde{p}_{1}(r)\leq 0, \qquad \forall \eta\in(0,r).
\end{split}
\end{equation}
When $\eta\in(-\infty,0)$ ({\it i.e.}, $z=\frac{r}{\eta}\in(-\infty,0)$),  we get
\begin{equation}
\tilde{p}_{1}(\eta)= 2\eta^{2k+1} \varphi\left(\frac{r}{\eta}\right)+\frac{y}{2}<\frac{y}{2},\quad \forall \eta\in(-\infty,0). 
\end{equation}
Since $\alpha_{k}r^{2k+1}+y\leq 0$,  it follows that $y<0$. Therefore, we obtain $\tilde{p}_{1}(\eta)<0$ for $\eta\in(-\infty,0)$. This finishes the proof of the lemma.
\end{proof} 

\begin{remark}\label{rem-g1-wide-applied}
The above lemma holds for $\alpha_{k}r^{2k+1}+y\in[-M,0]$, a condition that is stronger than $\alpha_{k}r^{2k+1}+y\in[-M,-C_{1}\lambda^{-k-1+\delta}]$ in the present section. Therefore, $g_{1}(\eta)$ can serve as a suitable $g$-function for the subsequent analysis in Section \ref{sec:RH-analysis-case-II-} where $\alpha_{k}r^{2k+1}+y\in[-C_{1}\lambda^{-k-1+\delta},0]$. 
\end{remark}

\subsection{Riemann-Hilbert analysis when $\alpha_{k}r^{2k+1}+y\in[-M,-C_{1}\lambda^{-k-1+\delta}]$}
\label{sec:RH-analysis-case-I}
\paragraph{First transformation}
The first transformation is a scaling and partial normalization of the RH problem for $X$. It is defined by 
\begin{equation}\label{eq-transform-tilde-X-T}
T(\eta)=(\lambda b_{0})^{\frac{1}{4}\sigma_{3}}\left(\begin{matrix}
1&0\\-d_{1}\lambda^{2k+2}&1
\end{matrix}\right)X(\lambda (b_{0}\eta+r))\exp\{\lambda^{\frac{4k+3}{2}}g_{1}(b_{0}\eta+r)\sigma_{3}\},
\end{equation}
where the function $g_1$ is defined in \eqref{eq-def-g1-asym}, $b_0$ and $d_1$ are two constants given in \eqref{eq-def-b0} and \eqref{eq-def-d1}, respectively. Here, we rescale $\eta$ to $\lambda b_{0}
\eta+r$ rather than $
\lambda\eta+r$, which is different from the counterpart used in \cite[Equation (3.1)]{CIK10}. This modification enables us to  perform uniform asymptotic analysis on $X$ for $\alpha_{k}r^{2k+1}+y=-b_{0}
\in [-M,-C_{1}\lambda^{-k-1+\delta}]$ with $
\delta\in(-\frac{k}{3},k+1)$. In view of the RH problem for $X$, it is readily seen that $T$ satisfies the following RH problem.

\paragraph{RH problem for $T$}
\begin{enumerate}
\item [(a)] $T(\eta)$ is analytic in $\mathbb{C}\setminus\Sigma^{T}$, where $\Sigma^{T}:=\Sigma_{1}^{T}\cup\Sigma_{2}^{T}\cup \Sigma_{3}^{T}$ is illustrated in Figure \ref{fig-T}.
\item [(b)] On $\Sigma^T$, the limiting values $T_{\pm}(\eta)$ exist and satisfy the jump condition
\begin{equation}
T_{+}(\eta)=T_{-}(\eta)\begin{cases}\left(\begin{matrix}
1&0\\ \exp\{2\lambda^{\frac{4k+3}{2}}g_{1}(b_{0}\eta+r)\}&1
\end{matrix}\right),\quad &\eta\in\Sigma_{1}^{T}\cup\Sigma_{3}^{T},\\
\left(\begin{matrix}
0&1\\ -1&0
\end{matrix}\right),\quad &\eta\in\Sigma_{2}^{T}=(-\infty,0).
\end{cases}
\end{equation}
\item [(c)] As $\eta\to\infty$, we have 
\begin{equation}\label{eq-asym-S-infty}
T(\eta)=\left(I+\frac{T_{-1}}{\eta}+\mathcal{O}(\eta^{-2})\right)\eta^{-\frac{1}{4}\sigma_{3}}N,
\end{equation}
where $N$ is defined in \eqref{eq-def-N} and 
\begin{equation}\label{eq-relation-T-X}
(T_{-1})_{12}=\left((X_{-1})_{12}+ d_{1}\lambda^{2k+2}\right)\lambda^{-\frac{1}{2}}b_{0}^{-\frac{1}{2}}.
\end{equation}
\item [(d)] As $\eta\to 0$, we have $T(\eta)=\mathcal{O}(\log{\eta})$.
\end{enumerate}

\begin{figure}
    \centering
    
\begin{tikzpicture}[x=0.75pt,y=0.75pt,yscale=-1,xscale=1]

\draw    (271.5,32) -- (371.5,132) ;
\draw [shift={(325.04,85.54)}, rotate = 225] [fill={rgb, 255:red, 0; green, 0; blue, 0 }  ][line width=0.08]  [draw opacity=0] (8.93,-4.29) -- (0,0) -- (8.93,4.29) -- cycle    ;
\draw    (272.5,220) -- (371.5,132) ;
\draw [shift={(325.74,172.68)}, rotate = 138.37] [fill={rgb, 255:red, 0; green, 0; blue, 0 }  ][line width=0.08]  [draw opacity=0] (8.93,-4.29) -- (0,0) -- (8.93,4.29) -- cycle    ;
\draw  [fill={rgb, 255:red, 0; green, 0; blue, 0 }  ,fill opacity=1 ] (371.67,131.83) .. controls (371.67,131.19) and (371.15,130.66) .. (370.5,130.66) .. controls (369.85,130.66) and (369.33,131.19) .. (369.33,131.83) .. controls (369.33,132.48) and (369.85,133) .. (370.5,133) .. controls (371.15,133) and (371.67,132.48) .. (371.67,131.83) -- cycle ;
\draw    (226.33,130.83) -- (369.33,131.83) ;
\draw [shift={(302.83,131.37)}, rotate = 180.4] [fill={rgb, 255:red, 0; green, 0; blue, 0 }  ][line width=0.08]  [draw opacity=0] (8.93,-4.29) -- (0,0) -- (8.93,4.29) -- cycle    ;

\draw (373.5,135) node [anchor=north west][inner sep=0.75pt]   [align=left] {0};
\draw (244,19) node [anchor=north west][inner sep=0.75pt]   [align=left] {$\displaystyle {\textstyle \Sigma_{1}^{T}}$};
\draw (199,125) node [anchor=north west][inner sep=0.75pt]   [align=left] {$\displaystyle {\textstyle \Sigma_{2}^{T}}$};
\draw (248,218) node [anchor=north west][inner sep=0.75pt]   [align=left] {$\displaystyle {\textstyle \Sigma_{3}^{T}}$};

\end{tikzpicture}
    \caption{The jump contour $\Sigma^{T}$ of the RH problem for $T$.}
    \label{fig-T}
\end{figure}
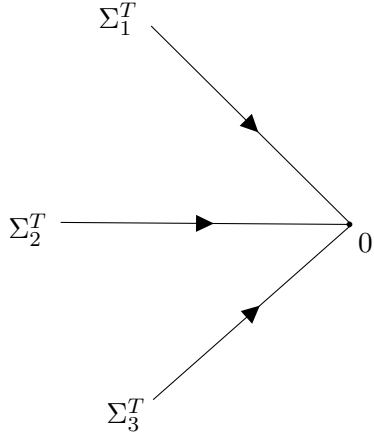

By Lemma \ref{lem-g1(eta)}, it follows that there exists $\theta_{0}>0$ such that $\re g_{1}(\eta)<0$ when $\arg(\eta-r)\in(-\pi,-\pi+\theta_{0}]\cup[\pi-\theta_{0},\pi)$. Thus, there exists a positive
constant $M_1$ such that for $\eta\in\Sigma_{1}^{T}\cup\Sigma_{3}^{T}$,
\begin{equation}\label{eq-equality-g1}
\lambda^{\frac{4k+3}{2}}\Re g_{1}(b_{0}\eta+r)
\leq -M_{1}\lambda^{\frac{k}{2}+\frac{3}{2}\delta}<0,
\end{equation}
provided that $\eta$ is bounded away from the origin. This particularly implies that the jump matrix of $T$ tends to the identity matrix exponentially fast on $\Sigma_{1}^{T}\cup \Sigma_{3}^{T}$ as $\lambda \to +\infty$. We are then in the stage of construction of global and local parametrices.


\paragraph{Global parametrix}
The global parametrix $T^{(\infty)}$ reads as follows.
\paragraph{RH problem for $T^{(\infty)}$}
\begin{enumerate}

\item [(a)] $T^{(\infty)}(\eta)$ is analytic in $\mathbb{C}\setminus (-\infty,0]$. 

\item [(b)] $T^{(\infty)}$ satisfies the jump condition
\begin{equation}
T^{(\infty)}_{+}(\eta)=T^{(\infty)}_{-}(\eta) \left(\begin{matrix}
0&1\\ -1&0
\end{matrix}\right),
\qquad \eta\in(-\infty,0).
\end{equation}
\item [(c)] As $\eta\to\infty$, we have 
\begin{equation}
T^{(\infty)}(\eta)=\left(I+\mathcal{O}(\eta^{-1})\right)\eta^{-\frac{1}{4}\sigma_{3}}N.
\end{equation}
\end{enumerate}
It is easily seen that 
\begin{equation}\label{eq-def-S-infty}
T^{(\infty)}(\eta)=\eta^{-\frac{1}{4}\sigma_{3}}N,\qquad
\eta\in \mathbb{C}\setminus (-\infty,0],
\end{equation}
solves the above RH problem.


\paragraph{Local parametrix}
The non-uniform convergence of the jump matrix of $T$ on $\Sigma_{1}^{T}\cup \Sigma_{3}^{T}$ to the identity matrix
suggests the following local parametrix in a small neighborhood of the origin.

\paragraph{RH problem for $T^{(0)}$}
\begin{enumerate}
\item [(a)] $T^{(0)}(\eta)$ is analytic in $\Sigma^T \setminus U(0;\rho_1)$, where $U(0;\rho_1)$ defined in \eqref{def:dz0r}
is an open disk centered at $0$ with fixed and small radius $\rho_1>0$.

\item [(b)] On $\Sigma^T \cap U(0;\rho_1) $, the limiting values $T^{(0)}_{\pm}(\eta)$ exist and satisfy the jump condition
\begin{equation}
T^{(0)}_{+}(\eta)=T^{(0)}_{-}(\eta)\begin{cases}\left(\begin{matrix}
1&0\\ \exp\{2\lambda^{\frac{4k+3}{2}}g_{1}(b_{0}\eta+r)\}&1
\end{matrix}\right),\quad &\eta\in U(0;\rho_1)\cap(\Sigma_{1}^{T}\cup\Sigma_{3}^{T}),\\
\left(\begin{matrix}
0&1\\ -1&0
\end{matrix}\right),\quad &\eta\in U(0;\rho_1)\cap \Sigma_{2}^{T}.
\end{cases}
\end{equation}
\item [(c)] As $\eta\to 0$, we have $T^{(0)}(\eta)=\mathcal{O}(\log{\eta})$.
\item [(d)] As $\lambda \to\infty$, $T^{(0)}$ satisfies the matching condition 
\begin{equation}
    T^{(0)}(\eta)= (I+o(1))T^{(\infty)}(\eta), \qquad \eta \in \partial U(0;\rho_1). 
\end{equation}

\end{enumerate}

As in \cite{CIK10}, one can solve the above RH problem by using a modified Bessel parametrix. More precisely, we introduce
\begin{equation}\label{eq-def-f1(eta)}
f_{1}(\eta):=b_{0}^{-3}\left(-g_{1}(b_{0}\eta+r)\right)^2=\eta\left[1+\sum\limits_{j=1}^{2k+1}(-1)^{j}b_{j}b_{0}^{j-1}\eta^{j}\right]^2,
\end{equation}
which is a conformal mapping in $U(0;\rho_{1})$, and then define
\begin{equation}\label{def:T0caseI}
T^{(0)}(\eta)=E_{1}(\eta)\Phi^{(\mathrm{Bes})}(\lambda^{4k+3}b_{0}^3f_{1}(\eta))\exp\{\lambda^{\frac{4k+3}{2}}g_{1}(b_{0}\eta+r)\sigma_{3}\},
\end{equation}
where 
\begin{equation}\label{eq-def-E1}
E_{1}(\eta):=\eta^{-\frac{1}{4}\sigma_{3}}(\lambda^{4k+3}b_{0}^{3}f_{1}(\eta))^{\frac{1}{4}\sigma_{3}}
\end{equation}
is a prefactor analytic in $U(0;\rho_1)$
and $\Phi^{(\mathrm{Bes})}$ is a modified Bessel parametrix used in  \cite[Equations (3.22)--(3.24)] {CIK10}
inspired by the one in \cite{KMVV,KV02}. Since $\Phi^{(\mathrm{Bes})}$ satisfies the jump condition
\begin{equation}\label{eq:PhiBesjump}
\Phi^{(\mathrm{Bes})}_{+}(z)=\Phi^{(\mathrm{Bes})}_{-}(z)\begin{cases}\left(\begin{matrix}
1&0
\\ 
1&1
\end{matrix}\right),\quad &z \in\Sigma_{1}^{T}\cup\Sigma_{3}^{T},\\
\left(\begin{matrix}
0&1\\ -1&0
\end{matrix}\right),\quad &z \in\Sigma_{2}^{T},
\end{cases}
\end{equation} 
the asymptotic behaviors
\begin{equation}\label{eq-asym-bessel}
\begin{split}
\Phi^{(\mathrm{Bes})}(z)&=z^{-\frac{1}{4}\sigma_{3}}N\left[I+\frac{1}{8\sqrt{z}}\left(\begin{matrix}
-1&-2i\\-2i&1
\end{matrix}\right)+\mathcal{O}(z^{-1})\right]e^{z^{\frac{1}{2}}\sigma_{3}}, \quad z\to\infty,
\end{split}
\end{equation}
and
\begin{equation}\label{eq:asym-Bessel-parametrix-near-0}
\Phi^{(\mathrm{Bes})}(z) = \left(\frac{\sqrt{\pi}}{e^{\pi i/4}}\right)^{\sigma_{3}}\left[\left(\begin{matrix}
1&\frac{\gamma_{\mathrm{E}}-\log{2}}{\pi i}\\0&1\end{matrix}\right)+\mathcal{O}\left(z\right)\right]\left(\begin{matrix}
1&\frac{\log{z}}{2\pi i}\\0&1\end{matrix}\right),\quad z\to 0, \quad \arg{z}\in\left(-\frac{2\pi}{3},\frac{2\pi}{3}\right),
\end{equation}
with $\gamma_{\mathrm{E}}$ being Euler's constant, one can check that $T^{(0)}$ defined in \eqref{def:T0caseI} indeed solves the local parametrix near the origin. 

\paragraph{Final transformation}
With the aid of the global and local parametrices built in \eqref{eq-def-S-infty} and \eqref{def:T0caseI}, the final transformation is defined by 
\begin{equation}\label{eq-def-R-by-S}
R(\eta)=\begin{cases}
T(\eta)T^{(0)}(\eta)^{-1},\quad &\eta\in U(0;\rho_{1}),\\
T(\eta)T^{(\infty)}(\eta)^{-1},\quad &\eta\in \mathbb{C}\setminus U(0;\rho_{1}).
\end{cases}
\end{equation}
It is straightforward to check that $R$ satisfies the following RH problem.

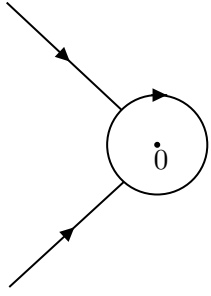
\begin{figure}

\centering

\tikzset{every picture/.style={line width=0.75pt}} 

\begin{tikzpicture}[x=0.75pt,y=0.75pt,yscale=-1,xscale=1]

\draw    (178,82) -- (235.53,136.16) ;
\draw [shift={(209.53,111.69)}, rotate = 223.27] [fill={rgb, 255:red, 0; green, 0; blue, 0 }  ][line width=0.08]  [draw opacity=0] (7.14,-3.43) -- (0,0) -- (7.14,3.43) -- cycle    ;
\draw    (236.53,172.16) -- (179.28,225.09) ;
\draw [shift={(211.8,195.03)}, rotate = 137.24] [fill={rgb, 255:red, 0; green, 0; blue, 0 }  ][line width=0.08]  [draw opacity=0] (7.14,-3.43) -- (0,0) -- (7.14,3.43) -- cycle    ;
\draw   (228.14,153.55) .. controls (228.14,139.74) and (239.34,128.55) .. (253.14,128.55) .. controls (266.95,128.55) and (278.14,139.74) .. (278.14,153.55) .. controls (278.14,167.35) and (266.95,178.55) .. (253.14,178.55) .. controls (239.34,178.55) and (228.14,167.35) .. (228.14,153.55) -- cycle ;
\draw  [fill={rgb, 255:red, 0; green, 0; blue, 0 }  ,fill opacity=1 ] (252.14,153.55) .. controls (252.14,152.99) and (252.59,152.55) .. (253.14,152.55) .. controls (253.69,152.55) and (254.14,152.99) .. (254.14,153.55) .. controls (254.14,154.1) and (253.69,154.55) .. (253.14,154.55) .. controls (252.59,154.55) and (252.14,154.1) .. (252.14,153.55) -- cycle ;
\draw    (250.14,128.55) -- (258.14,128.55) ;
\draw [shift={(257.94,128.55)}, rotate = 180] [fill={rgb, 255:red, 0; green, 0; blue, 0 }  ][line width=0.08]  [draw opacity=0] (7.14,-3.43) -- (0,0) -- (7.14,3.43) -- cycle    ;

\draw (250,155) node [anchor=north west][inner sep=0.75pt]   [align=left] {0};

\end{tikzpicture}
\caption{The jump contour $\Sigma^{R}$ of the RH problem for $R$ in Section \ref{sec:RH-analysis-case-I}.} 
\label{fig-R-bessel}
\end{figure}

\paragraph{RH problem for $R$}
\begin{enumerate}
\item [(a)] $R(\eta)$ is analytic in $\mathbb{C}\setminus \Sigma^{R}$, where $\Sigma^{R}:=\Sigma_{1}^{T}\cup\Sigma_{3}^{T}\cup \partial U(0;\rho_{1})\setminus U(0;\rho_{1})$ is depicted in Figure \ref{fig-R-bessel}.
\item [(b)] On $\Sigma^{R}$, the limiting values $R_{\pm}(\eta)$ exist and satisfy the jump condition $$R_{+}(\eta)=R_{-}(\eta)v_{R}(\eta),$$ 
where
\begin{eqnarray}
v_{R}(\eta)=
\begin{cases}
T^{(\infty)}(\eta)\left(\begin{matrix}
1&0\\e^{2|s|^{\frac{4k+3}{2}}g_{1}(\eta)}&1
\end{matrix}\right)T^{(\infty)}(\eta)^{-1},  &\eta\in\Sigma^{R}\setminus \partial U(0;\rho_{1}),\\
T^{(0)}(\eta)T^{(\infty)}(\eta)^{-1}, &\eta\in\partial U(0;\rho_{1}).
\end{cases}
\end{eqnarray}
\item [(c)] As $\eta\to\infty$, we have $R(\eta)=I+\frac{T_{-1}}{\eta} +\mathcal{O}(\eta^{-2})$, where $T_{-1}$ is defined in \eqref{eq-asym-S-infty}. 
\end{enumerate}

From \eqref{eq-equality-g1} and \eqref{eq-def-S-infty}, it is readily seen that if $\eta\in \Sigma^{R}\setminus \partial U(0;\rho_{1})$,
\begin{equation}\label{eq-vR-exp-small-I}
v_{R}(\eta)=I+\mathcal{O}(\lambda^{\frac{2k}{3}}e^{-M_{2}\lambda^{\frac{k}{2}+\frac{3}{2}\delta}}), \qquad \lambda \to +\infty.
\end{equation}
If $\eta\in\partial U(0;\rho_{1})$, making use of the full asymptotic expansion of the modified Bessel parametrix (see \cite[Equation (3.79)]{Vanlessen-2007}), we have 
\begin{equation}
v_{R}(\eta) \sim I+\sum\limits_{j=1}^{+\infty}\frac{v_{j}(\eta)}{\left(\lambda^{(4k+3)/2}b_{0}^{3/2}\right)^{j}}, \qquad \lambda \to+\infty,
\end{equation}
where $b_0$ is defined in \eqref{eq-def-b0} and
\begin{equation}\label{def:vj}
v_{j}(\eta)=\eta^{-\frac{1}{4}\sigma_{3}}J_{j}\eta^{\frac{1}{4}\sigma_{3}}f_{1}(\eta)^{-\frac{j}{2}}, \qquad j=1,2,\ldots.
\end{equation}
In \eqref{def:vj}, the function $f_{1}$ is defined in \eqref{eq-def-f1(eta)} and $J_{j}$ are constant matrices arising from the asymptotic expansion of $\Phi^{(\mathrm{Bes})}$. In particular, we have  
$J_{2j}$ and $J_{2j-1}$ are diagonal and anti-diagonal matrices, respectively, and
\begin{equation}
J_{1}=\frac{1}{8}\left(\begin{matrix}
0&-1\\3&0
\end{matrix}\right).
\end{equation}
Recall that $b_0 \in\left[ C_{1}\lambda^{-k-1+\delta}, M\right]$, it then follows that
for all fixed $\delta\in(-\frac{k}{3},k+1)$,  $v_{R}(\eta)=I+\mathcal{O}(\lambda^{-\frac{k}{2}-\frac{3}{2}\delta})$ uniformly for all $\eta\in \partial U(0;\rho_{1})$. 

By a standard argument of the small norm RH problem (cf. \cite{Deift1999}), we conclude that
\begin{equation}\label{eq-R-full-expand}
R(\eta)\sim I+\sum\limits_{j=1}^{\infty}\frac{R_{j}(\eta)}{\left(\lambda^{(4k+3)/2)}b_{0}^{3/2}\right)^{j}}, \qquad \lambda \to +\infty,
\end{equation}
where $R_{j}$ is analytic in $\mathbb{C} \setminus \partial U(0;\rho_1)$, behaves like $\mathcal{O}(1/\eta)$ as $\eta
\to \infty$, and satisfies the jump condition
\begin{equation}\label{eq-recurrence-Rj}
    \begin{aligned}
        R_{1,+}(\eta)&=R_{1,-}(\eta)+v_{1}(\eta),\\
        R_{j,+}(\eta)&=R_{j,-}(\eta)+v_{j}(\eta)+\sum\limits_{m=1}^{j-1}R_{m,-}(\eta)v_{j-m}(\eta), \qquad j>1,
    \end{aligned} 
\end{equation}
for $\eta\in\partial U(0;\rho_{1})$. It is easily seen that
\begin{equation}
    R_{1}(\eta)=\begin{cases}
\frac{\Res(v_1,0)}{\eta}-v_1(\eta), &\eta\in U(0;\rho_{1}),\\
\frac{\Res(v_1,0)}{\eta},&\eta \in\mathbb{C}\setminus U(0;\rho_{1}),
    \end{cases}
\end{equation}
where $\Res(v_1,0)$ stands for the residue of $v_1$ at the origin. From the definition of $v_1$ in
\eqref{def:vj}, it follows that
\begin{equation}\label{eq-R1-by-H(v)}
R_{1}(0)=\left(\begin{matrix}
0&\frac{b_{1}}{8}\\-\frac{3}{8}&0
\end{matrix}\right), \qquad
R_{1}'(0)=\left(\begin{matrix}
0&\star\\-\frac{3b_{1}}{8}&0
\end{matrix}\right). 
\end{equation} 
Since $J_{2j}$ and $J_{2j-1}$ are diagonal and anti-diagonal matrices, respectively, we see from \eqref{def:vj} that $v_{2j}$ and $v_{2j-1}$ bear the same structure. By induction, we then conclude from \eqref{eq-recurrence-Rj} that $R_{2j}(\eta)$ are all diagonal matrices and $R_{2j-1}(\eta)$ are all anti-diagonal matrices. These, together with \eqref{eq-R-full-expand}, \eqref{eq-R1-by-H(v)} and the fact that $b_0 \in\left[ C_{1}\lambda^{-k-1+\delta}, M\right]$, implies that, as $\lambda\to+\infty$,  
\begin{equation}
    \label{eq-R(0)}
    R(0)=\left(\begin{matrix}
        1+\mathcal{O}(\lambda^{-k-3\delta}) &\frac{b_{1}}{8}b_{0}^{-\frac{3}{2}}\lambda^{-\frac{4k+3}{2}}\left(1+\mathcal{O}(\lambda^{-k-3\delta}) \right)\\
        -\frac{3}{8}b_{0}^{-\frac{3}{2}}\lambda^{-\frac{4k+3}{2}}\left(1+\mathcal{O}(\lambda^{-k-3\delta}) \right) & 1+\mathcal{O}(\lambda^{-k-3\delta})
    \end{matrix}\right)
\end{equation}
and 
\begin{equation}
    \label{eq-R'(0)}
    R'(0)=\left(\begin{matrix}
        \mathcal{O}(\lambda^{-k-3\delta}) &\star\\
        -\frac{3b_{1}}{8}b_{0}^{-\frac{3}{2}}\lambda^{-\frac{4k+3}{2}}\left(1+\mathcal{O}(\lambda^{-k-3\delta}) \right) & \mathcal{O}(\lambda^{-k-3\delta})
    \end{matrix}\right).
\end{equation}

\subsection{Proofs of Lemmas \ref{lem-asym-dF/dmu} and \ref{lem-dF/dtau-case-I}}
\label{sec:case-I-proof}

\paragraph{Proof of Lemma \ref{lem-asym-dF/dmu}}
Note that $R(\eta)=I+\frac{T_{-1}}{\eta}+\mathcal{O}(\eta^{-2})$ as $\eta\to\infty$, we have
\begin{equation}
 T_{-1}=\lim\limits_{\eta\to \infty}\eta\left(R(\eta)-I\right).
\end{equation}
According to \eqref{eq-R-full-expand}, \eqref{eq-recurrence-Rj} and the fact that $R_{1}(\eta)$ is anti-diagonal and $R_{2}(\eta)$ is diagonal, we further obtain 
\begin{align}
(T_{-1})_{12}&=b_{0}^{-\frac{3}{2}}\lambda^{-\frac{4k+3}{2}}\lim\limits_{\eta\to\infty}\eta(R_{1}(\eta))_{12}\left(1+\mathcal{O}(\lambda^{-k-3\delta})\right) \nonumber \\
&=-\frac{1}{8}b_{0}^{-\frac{3}{2}}\lambda^{-\frac{4k+3}{2}}\Res\left(\eta^{-\frac{1}{2}}f_{1}(\eta)^{-\frac{1}{2}},0\right)\left(1+\mathcal{O}(\lambda^{-k-3\delta})\right)
\nonumber 
\\
&= -\frac{1}{8}b_{0}^{-\frac{3}{2}}\lambda^{-\frac{4k+3}{2}}\left(1+\mathcal{O}(\lambda^{-k-3\delta})\right).
\end{align} 
It then follows from \eqref{eq-def-b0}, \eqref{eq-relation-T-X} and the above estimate that 
\begin{multline}\label{eq-asym-(X-1)12}
(X_{-1})_{12}=-d_{1}\lambda^{2k+2} + (\lambda b_{0})^{\frac{1}{2}}(T_{-1})_{12}
\\
={-d_{1}\lambda^{2k+2}}+\frac{1}{8}\frac{1}{\left(\alpha_{k}s^{2k+1}+x\right)}\left(1+\mathcal{O}\left(\lambda^{-k-3\delta}\right)\right), \qquad \lambda\to+\infty.
\end{multline}
The error bound holds uniformly for all $(y,r)$ satisfying $\alpha_{k}r^{2k+1}+y\in[-M,-C_{1}\lambda^{-k-1+\delta}]$ with $\delta\in\left(-\frac{k}{3},k+1\right)$. Recalling the differential identity in \eqref{eq-diff-identity-1}, we have from \eqref{eq-def-Psi--1} and \eqref{eq-relation-X-Y-Psi} that
\begin{equation}
    \frac{\partial F}{\partial x}(s;x) = -(Y_{-1})_{12} = (\Psi_{-1})_{12} - (X_{-1})_{12} = h(x) - (X_{-1})_{12}.
\end{equation}
Taking $\lambda=|s|$ ({\it i.e.} $r=-1$ in \eqref{s-r-and-x-y}), $\delta=-\frac{k}{3}+\epsilon$, and substituting \eqref{eq-asym-(X-1)12} into the above formula, we obtain \eqref{eq-asym-dF/dmu} with the definition of $d_1$ in \eqref{eq-def-d1}. Recall that $x=y\lambda^{2k+1}=y|s|^{2k+1}$ (see \eqref{s-r-and-x-y}), we find that the asymptotic formula \eqref{eq-asym-dF/dmu} holds uniformly for $x\in[-c_{1}|s|^{2k+1},\alpha_{k}|s|^{2k+1}-c_{2}|s|^{\frac{2k}{3}+\epsilon}]$ with $c_{1},c_{2}>0$ and $\epsilon\in(0,\frac{4k}{3}+1)$ being arbitrary fixed. This finishes the proof of Lemma \ref{lem-asym-dF/dmu}.

\paragraph{Proof of Lemma \ref{lem-dF/dtau-case-I}} 
By inverting the transformations \eqref{eq-transform-tilde-X-T} and \eqref{eq-def-R-by-S}, it follows from \eqref{eq-differential-indentity-2} that 
\begin{align}\label{eq-dF/dtau-three-parts}
\frac{\partial F}{\partial s}(s;x)& =\left.\frac{b_{0}^2\lambda^{4k+2}}{2\pi i}\left(\Phi^{(\mathrm{Bes})}(z)^{-1}\frac{d\Phi^{(\mathrm{Bes})}(z)}{dz}\right)_{21}\right|_{z\to 0} \nonumber \\
&~~~+\left.\frac{1}{2\pi i\lambda b_{0}}\left(\Phi^{(\mathrm{Bes})}(z)^{-1}E_{1}(\eta)^{-1}E_{1}'(\eta)\Phi^{(\mathrm{Bes})}(z)\right)_{21}\right|_{\eta\to 0} \nonumber \\
&~~~+\left.\frac{1}{2\pi i\lambda b_{0}}\left(\Phi^{(\mathrm{Bes})}(z)^{-1}E_{1}(\eta)^{-1} R(\eta)^{-1} R'(\eta)E_{1}(\eta)\Phi^{(\mathrm{Bes})}(z)\right)_{21}\right|_{\eta\to 0},
\end{align}
where $z=z(\eta)=\lambda^{4k+3}b_{0}^{3}f_{1}(\eta)$ with $f_1$ given in \eqref{eq-def-f1(eta)} and $E_1$ is defined in \eqref{eq-def-E1}. In view of the local behavior of  $\Phi^{(\mathrm{Bes})}$ near the origin given in \eqref{eq:asym-Bessel-parametrix-near-0}, one can show that
\begin{equation}\label{eq-dF/dtau-part-1}
\left.\frac{b_{0}^{2}\lambda^{4k+2}}{2\pi i}\left(\Phi^{(\mathrm{Bes})}(z)^{-1}\frac{d\Phi^{(\mathrm{Bes})}(z)}{dz}\right)_{21}\right|_{z\to 0}=\frac{1}{4}b_{0}^2\lambda^{4k+2}
\end{equation} 
and
\begin{equation}\label{eq-dF/dtau-part-2}
\left.\frac{1}{2\pi i\lambda b_{0}}\left(\Phi^{(\mathrm{Bes})}(z)^{-1}E_{1}(\eta)^{-1}E_{1}'(\eta)\Phi^{(\mathrm{Bes})}(z)\right)_{21}\right|_{\eta\to 0}=0.
\end{equation}
To estimate the third term on the right hand side of \eqref{eq-dF/dtau-three-parts}, we first observe from \eqref{eq-R(0)} and \eqref{eq-R'(0)} that
\begin{equation}
    \label{eq-approx-tilde-R(r)}
   R(0)^{-1}R'(0)=\left(\begin{matrix}
      \star &\star\\
      -\frac{3}{8}b_{1}b_{0}^{-\frac{3}{2}}\lambda^{-\frac{4k+3}{2}}\left(1+\mathcal{O}(\lambda^{-k-3\delta})\right) &\star
    \end{matrix}\right),\qquad \lambda \to +\infty.
\end{equation}
Moreover, from the definition of $E_{1}(\eta)$ in \eqref{eq-def-E1}, one can see that $E_{1}(0)=\lambda^{\frac{4k+3}{4}\sigma_{3}}b_{0}^{\frac{3}{4}\sigma_{3}}$. These, together with \eqref{eq:asym-Bessel-parametrix-near-0}, imply that 
\begin{align}\label{eq-dF/dtau-part-3}
&\left.\frac{1}{2\pi i\lambda b_{0}}\left(\Phi^{(\mathrm{Bes})}(z)^{-1}E_{1}(\eta)^{-1}R(\eta)^{-1}R'(\eta)E_{1}(\eta)\Phi^{(\mathrm{Bes})}(z)\right)_{21}\right|_{\eta\to 0}
\nonumber 
\\
&=\left.\frac{1}{2\pi i\lambda b_{0}}\left(\Phi^{(\mathrm{Bes})}(z)^{-1}\left(\begin{matrix}
\star&\star\\-\frac{3b_{1}}{8}\left(1+\mathcal{O}\left(\lambda^{-k-3\delta}\right)\right)&\star
\end{matrix}\right)\Phi^{(\mathrm{Bes})}(z)\right)_{21}\right|_{\eta\to 0}
\nonumber 
\\
&=\frac{3b_{1}}{16b_{0}\lambda}\left(1+\mathcal{O}\left(\lambda^{-k-3\delta}\right)\right)
,\qquad \lambda \to +\infty.
\end{align}
Finally, recall that $x=y\lambda^{2k+1}$, we fix $y\neq 0$ and take $\delta=\frac{1}{6}$, then a combination of \eqref{eq-dF/dtau-three-parts}--\eqref{eq-dF/dtau-part-3} and the definitions of $b_{0}, b_{1}$ in \eqref{eq-def-b0} and  \eqref{eq-def-b0-bi} yields \eqref{eq-dF/dtau-case-I} as $|x|\to+\infty$, uniformly for $s$ such that $\alpha_{k}s^{2k+1}+x\in\left[-M|x|,-C|x|^{\frac{k+1/6}{2k+1}}\right]$.

\section{Asymptotic analysis in the transition region} \label{Sec:Tran}

In this section, we study the large $\lambda$ behavior of $X(\lambda\eta)$  for $\alpha_{k}r^{2k+1} + y \in [-C_{1}\lambda^{-k-1+\delta}, C_{1}\lambda^{-k-1+\delta}]$ with $\delta \in (0, \frac{1}{4})$. As discussed at the end of Section \ref{sec:diffident}, this corresponds to the transition region. Our goal is to establish the validity of Lemma \ref{lem-dF/dtau-case-II}. It is crucial to emphasize that $b_{0} = -\alpha_{k}r^{2k+1} - y$ is small and may vanish in this context. Therefore, the function $f_{1}(\eta)$ defined in \eqref{eq-def-f1(eta)} is no longer a valid conformal mapping in a small neighborhood of $\eta = 0$. Consequently, the local Bessel parametrix used in Section \ref{Sec:Alg} becomes inappropriate and will be replaced by the Painlev\'e XXXIV ($\mathrm{P_{34}}$) parametrix $\Phi(\zeta;x)$ or its variant $\tilde{\Phi}(\zeta;x)$ with $x$ being a parameter; see Appendix \ref{sec-appendix-p34-parametrix} for the definitions. 

As shown in Appendix \ref{sec:uniform-asym-scaled-P34},  the uniform asymptotic behavior of $\Phi(\zeta;x)$ differs depending on whether $\zeta \to \infty$ and $x$ approaches negative infinity, as compared to the case when $\zeta \to \infty$ and $x$ approaches positive infinity. These qualitatively different asymptotic behaviors lead us to split the analysis into two cases, that is,
$\alpha_{k}r^{2k+1}+y \in [-C_{1}\lambda^{-k-1+\delta}, 0]$ and 
$\alpha_{k}r^{2k+1}+y \in [0,C_{1}\lambda^{-k-1+\delta}]$.

\subsection{Riemann-Hilbert analysis when $\alpha_{k}r^{2k+1}+y\in[-C_{1}\lambda^{-k-1+\delta}, 0]$}
\label{sec:RH-analysis-case-II-}

\paragraph{First transformation} 
As previously noted in Remark \ref{rem-g1-wide-applied}, when $\alpha_{k}r^{2k+1}+y\in[-C_{1}\lambda^{-k-1+\delta}, 0]$, the function $g_{1}$ in \eqref{eq-def-g1-asym} remains suitable for normalizing the RH problem for $X$. We thus define
\begin{equation}\label{eq-def-P(eta)-case-II}
P(\eta)=\left(\begin{matrix}
1&0\\-d_{1}\lambda^{2k+2}&1
\end{matrix}\right)X(\lambda\eta)\exp\{\lambda^{\frac{4k+3}{2}}g_{1}(\eta)\sigma_{3}\},
\end{equation}
where $d_1$ is defined in \eqref{eq-def-d1}. It is then straightforward to check that $P$ satisfies the following RH problem. 

\subsubsection*{RH problem for $P$}
\begin{enumerate}
\item [(a)] $P(\eta)$ is analytic in $\mathbb{C}\setminus\Sigma^{P}$, where $\Sigma^{P}=\Sigma_{1}^{P}\cup\Sigma_{2}^{P}\cup\Sigma_{3}^{P}$ is illustrated in Figure \ref{fig-P}.
\item [(b)] On $\Sigma^{P}$, the limiting values $P_{\pm}(\eta)$ exist and satisfy the jump condition
\begin{equation}
P_{+}(\eta)=P_{-}(\eta)\begin{cases}\left(\begin{matrix}
1&0\\ e^{2\lambda^{\frac{4k+3}{2}}g_{1}(\eta)}&1
\end{matrix}\right),\quad &\eta\in\Sigma_{1}^{P}\cup\Sigma_{3}^{P},\\
\left(\begin{matrix}
0&1\\ -1&0
\end{matrix}\right),\quad &\eta\in\Sigma_{2}^{P}.
\end{cases}
\end{equation}
\item [(c)] As $\eta\to r$, we have $P(\eta)=\mathcal{O}(\log{(\eta-r)})$.
\item [(d)] As $\eta\to\infty$, we have
\begin{equation}
P(\eta)=\left(I+\mathcal{O}(\eta^{-1})\right)(\lambda\eta)^{-\frac{1}{4}\sigma_{3}}N,
\end{equation}
where $N$ is defined in \eqref{eq-def-N}.

\end{enumerate}

\begin{figure}
    \centering
    
\begin{tikzpicture}[x=0.75pt,y=0.75pt,yscale=-1,xscale=1]

\draw    (271.5,32) -- (371.5,132) ;
\draw [shift={(325.04,85.54)}, rotate = 225] [fill={rgb, 255:red, 0; green, 0; blue, 0 }  ][line width=0.08]  [draw opacity=0] (8.93,-4.29) -- (0,0) -- (8.93,4.29) -- cycle    ;
\draw    (272.5,220) -- (371.5,132) ;
\draw [shift={(325.74,172.68)}, rotate = 138.37] [fill={rgb, 255:red, 0; green, 0; blue, 0 }  ][line width=0.08]  [draw opacity=0] (8.93,-4.29) -- (0,0) -- (8.93,4.29) -- cycle    ;
\draw  [fill={rgb, 255:red, 0; green, 0; blue, 0 }  ,fill opacity=1 ] (371.67,131.83) .. controls (371.67,131.19) and (371.15,130.66) .. (370.5,130.66) .. controls (369.85,130.66) and (369.33,131.19) .. (369.33,131.83) .. controls (369.33,132.48) and (369.85,133) .. (370.5,133) .. controls (371.15,133) and (371.67,132.48) .. (371.67,131.83) -- cycle ;
\draw    (226.33,130.83) -- (369.33,131.83) ;
\draw [shift={(302.83,131.37)}, rotate = 180.4] [fill={rgb, 255:red, 0; green, 0; blue, 0 }  ][line width=0.08]  [draw opacity=0] (8.93,-4.29) -- (0,0) -- (8.93,4.29) -- cycle    ;

\draw (373.5,135) node [anchor=north west][inner sep=0.75pt]   [align=left] {$r$};
\draw (244,19) node [anchor=north west][inner sep=0.75pt]   [align=left] {$\displaystyle {\textstyle \Sigma_{1}^{P}}$};
\draw (199,125) node [anchor=north west][inner sep=0.75pt]   [align=left] {$\displaystyle {\textstyle \Sigma_{2}^{P}}$};
\draw (248,218) node [anchor=north west][inner sep=0.75pt]   [align=left] {$\displaystyle {\textstyle \Sigma_{3}^{P}}$};

\end{tikzpicture}
    \caption{The jump contour $\Sigma^P$ of RH problem for $P$.}
    \label{fig-P}
\end{figure}
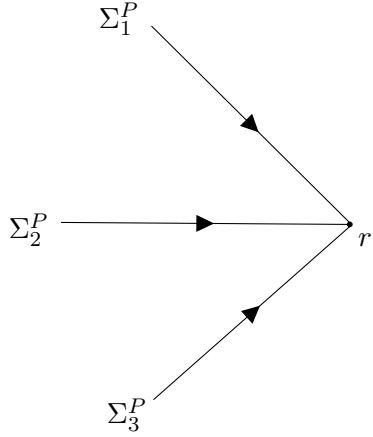

According to Lemma \ref{lem-g1(eta)}, there exist two positive constants $M_{2}$ and $C_{2}$ such that for $\eta\in \Sigma_{1}^{P}\cup\Sigma_{3}^{P}$,
\begin{equation}
\label{eq-real-g1(eta)-case-II}
\lambda^{\frac{4k+3}{2}}\re g_{1}(\eta)\leq -M_{2}\lambda^{\frac{4k+3}{2}}|\eta-r|^{\frac{3}{2}}\leq -C_{2}\lambda^{\frac{k}{2}},
\end{equation}
uniformly for $|\eta-r|\geq \lambda^{-k-1}$ and $\lambda$ large enough. As a consequence, the jump matrix of $P$ tends to the identity matrix exponentially fast on $\Sigma_{1}^{P}\cup\Sigma_{3}^{P}$ as $\lambda \to +\infty$. 

\paragraph{Global parametrix} 
By ignoring the jump matrix on $\Sigma_{1}^{P}\cup\Sigma_{3}^{P}$, 
the global parametrix is actually a shift of that in Section \ref{sec:RH-analysis-case-I}.

\subsubsection*{RH problem for $P^{(\infty)}$}
\begin{enumerate}
    \item [(a)] $P^{(\infty)}(\eta)$ is analytic in $\mathbb{C}\setminus (-\infty,r]$.
    \item [(b)] $P^{(\infty)}$ satisfies the jump condition
    \begin{equation}
        P^{(\infty)}_{+}(\eta)=P^{(\infty)}_{-}(\eta)\left(\begin{matrix}
            0&1\\ -1&0
        \end{matrix}\right), \qquad \eta\in(-\infty,r).
    \end{equation}
    \item [(c)] As $\eta\to\infty$, we have 
    \begin{equation}
        P^{(\infty)}(\eta)=(I+\mathcal{O}(\eta^{-1}))(\lambda\eta)^{-\frac{1}{4}}N.
    \end{equation}
\end{enumerate}
A solution of the above RH problem is given by
\begin{equation}\label{eq-def-Pinfty(eta)}
P^{(\infty)}(\eta)=(\lambda(\eta-r))^{-\frac{1}{4}\sigma_{3}}N.
\end{equation}

\paragraph{Local parametrix} 
The non-uniform convergence of the jump matrix of $P$ on $\Sigma^{P}_{1}\cup \Sigma_{3}^{P}$ to the identity matrix suggests the following local parametrix in a small neighborhood $ U(r;\rho_2)$ of $\eta=r$. Particularly, we later will take $\rho_2=\rho_2(\lambda)$ in a precise way such that $\rho_2(\lambda)\to 0$ as $\lambda \to +\infty$, so the disc is shrinking as $\lambda$ increases. 

\paragraph{RH problem for $P^{(r)}$}
\begin{enumerate}
\item [(a)] $P^{(r)}(\eta)$ is analytic in $\Sigma^{P} \setminus U(r;\rho_2)$.
\item [(b)] On $\Sigma^{P} \cap U(r;\rho_2) $, the limiting values $P^{(r)}_{\pm}(\eta)$ exist and satisfy the jump condition
\begin{equation}\label{eq-jump-P0}
P^{(r)}_{+}(\eta)=P^{(r)}_{-}(\eta)\begin{cases}\left(\begin{matrix}
1&0\\ e^{2\lambda^{\frac{4k+3}{2}}g_{1}(\eta)}&1
\end{matrix}\right),\quad &\eta\in U(r;\rho_2)\cap(\Sigma_{1}^{P}\cup\Sigma_{3}^{P}),\\
\left(\begin{matrix}
0&1\\ -1&0
\end{matrix}\right),\quad &\eta\in U(r;\rho_2)\cap \Sigma_{2}^{P}.
\end{cases}
\end{equation}
\item [(c)] As $\eta\to r$, we have $P^{(r)}(\eta)=\mathcal{O}(\log{(\eta-r)})$.
\item [(d)] As $\lambda \to\infty$, $P^{(r)}$ satisfies the matching condition 
\begin{equation}\label{eq-matching-P0-Pinfty}
    P^{(r)}(\eta)= (I+o(1))P^{(\infty)}(\eta), \qquad \eta \in \partial U(r;\rho_2). 
\end{equation}

\end{enumerate}

To solve the above RH problem, we first introduce a conformal mapping in the disk $U(r;\rho_{2})$ defined by
\begin{equation}
\label{eq-def-f2}
f_{2}(\eta)=\left[\frac{3}{2}\left(\sum\limits_{i=1}^{2k+1}(-1)^{i+1}b_{i}(\eta-r)^{i-1}\right)\right]^{\frac{2}{3}}(\eta-r),
\end{equation}
where $b_i$, $i=1,\ldots,2k+1$, are given in \eqref{eq-def-b0-bi}. Note that $f_2(\eta)$ is independent of $b_0$, it thus serves as a conformal mapping uniformly for $b_0 = -\alpha_{k}r^{2k+1}-y\in[0, C_{1}\lambda^{-k-1+\delta}]$. Then, we define 
\begin{equation}\label{eq-def-P0(eta)}
P^{(r)}(\eta)=
E_{2}(\eta)\Phi(\lambda^{\frac{4k+3}{3}}f_{2}(\eta);\lambda^{\frac{4k+3}{3}}\kappa_{0})e^{\lambda^{\frac{4k+3}{2}}g_{1}(\eta)\sigma_{3}},\quad \eta\in U(r; \rho_{2}),
\end{equation}
where $\kappa_0$ is a constant independent of $\eta$ to be determined later, 
\begin{equation}
\label{eq-def-E2(eta)}
E_{2}(\eta):=(\lambda(\eta-r))^{-\frac{1}{4}\sigma_{3}}(\lambda^{\frac{4k+3}{3}}f_{2}(\eta))^{\frac{1}{4}\sigma_{3}}e^{-\frac{\pi i}{4}\sigma_{3}}
\end{equation}
is  analytic  in $U(r;\rho_{2})$ and $\Phi(\zeta;x)$ is the $\mathrm{P_{34}}$ parametrix given in Appendix \ref{sec-appendix-p34-parametrix}. 

In \eqref{eq-def-P0(eta)}, the function $\Phi$ depends on $b_0$ through the constant $\kappa_0$.
To determine it, we set 
\begin{equation}
\label{eq-def-kappa(eta)}
\kappa(\eta)=-b_{0}(\eta-r)^{\frac{1}{2}}f_{2}(\eta)^{-\frac{1}{2}},
\end{equation}
so that
\begin{equation}\label{eq-relation-g1-f2}
g_{1}(\eta)=\frac{2}{3}\left(f_{2}(\eta)\right)^{\frac{3}{2}}+\kappa(\eta)\left(f_{2}(\eta)\right)^{\frac{1}{2}}.
\end{equation}
The constant $\kappa_0$ is chosen to be the approximation of $\kappa(r)$ for large $\lambda$. More precisely, note that if $\alpha_{k}r^{2k+1}+y\in [-C_{1}\lambda^{-k-1+\delta}, 0]$ and $|r|+|y|\geq \delta_{0}>0$ (see \eqref{eq-r-y-not-both-vanish}), we have
\begin{equation}\label{eq-asym-r}
  r=-\left(\frac{y}{\alpha_{k}}\right)^{\frac{1}{2k+1}}\left(1+\mathcal{O}(\lambda^{-k-1+\delta})\right), \qquad   \lambda\to+\infty.
\end{equation}
This, together with the definition of $b_{1}$ in \eqref{eq-def-b0-bi}, implies that 
\begin{equation}
\label{eq-asym-kappa(0)}
\kappa(r)=-b_{0}\left[\frac{3}{2}b_{1}\right]^{-\frac{1}{3}}=\kappa_{0}\left(1+\mathcal{O}(\lambda^{-k-1+\delta})\right), \qquad   \lambda\to+\infty,
\end{equation}
where 
\begin{equation}\label{eq-def-kappa0}
\kappa_{0}:=\kappa_{0}(r;y)=-b_{0}\left[(2k+1)\alpha_{k}\left(\frac{y}{\alpha_{k}}\right)^{\frac{2k}{2k+1}}\right]^{-\frac{1}{3}}\leq 0.
\end{equation}
For later use, we derive the approximation of $f_2'(r)$ for large $\lambda$. From the definition of $b_{0}$ in \eqref{eq-def-b0}, we see that $\frac{\partial b_{0}}{\partial r}=-(2k+1)\alpha_{k}r^{2k}=-\frac{3}{2}b_{1}$. This fact, together with \eqref{eq-asym-r} and \eqref{eq-def-kappa0}, yields
\begin{align}
\label{eq-f2'0}
f_{2}'(r)&=\left(\frac{3}{2}b_{1}\right)^{\frac{2}{3}}=\left(\frac{3}{2}b_{1}\right)\left(\frac{3}{2}b_{1}\right)^{-\frac{1}{3}} \nonumber \\
&=\frac{(2k+1)\alpha_{k}r^{2k}}{\left[(2k+1)\alpha_{k}\left(\frac{y}{\alpha_{k}}\right)^{\frac{2k}{2k+1}}\right]^{\frac{1}{3}}}\left(1+\mathcal{O}(\lambda^{-k-1+\delta})\right)
= \frac{\partial \kappa_{0}}{\partial r}\left(1+\mathcal{O}(\lambda^{-k-1+\delta})\right),
\end{align}  
as $\lambda\to+\infty$ uniformly for $\alpha_{k}r^{2k+1}+y\in[-C_1\lambda^{-k-1+\delta},0]$. 

With the aid of $\kappa_0$ in \eqref{eq-def-kappa0}, we now choose the radius $\rho_2$ of the disk to be 
\begin{equation}\label{eq-def-rho2}
    \rho_{2}=
    \frac{1}{\lambda^{\frac{4k+3}{2}}|\kappa_{0}|+\lambda^{\frac{2k}{3}+\frac{1}{2}}},
    \end{equation}
and prove the following lemma. 
\begin{lemma} \label{lem-para-pr}
The function $P^{(r)}$ defined in \eqref{eq-def-P0(eta)} solves the RH problem for $P^{(r)}$.
\end{lemma}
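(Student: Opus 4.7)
The plan is to verify each of the four conditions (a)--(d) of the RH problem for $P^{(r)}$ in turn, treating the matching condition (d) as the main obstacle. Since the prefactor $E_{2}(\eta)$ in \eqref{eq-def-E2(eta)} is analytic throughout the disc $U(r;\rho_{2})$, and $g_{1}(\eta)$ is analytic off $(-\infty,r]$, condition (a) reduces to checking that $\Phi(\lambda^{(4k+3)/3}f_{2}(\eta);\lambda^{(4k+3)/3}\kappa_{0})$ inherits the correct jump structure from the $\mathrm{P}_{34}$ parametrix through the conformal map $f_{2}$. Condition (c) is immediate: the $\mathrm{P}_{34}$ parametrix $\Phi(\zeta;x)$ has at worst a logarithmic singularity at $\zeta=0$, and since $f_{2}$ vanishes linearly at $\eta=r$, this yields the required $\mathcal{O}(\log(\eta-r))$ behavior.

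For the jump condition (b), I would recall from Appendix \ref{sec-appendix-p34-parametrix} that $\Phi(\zeta;x)$ satisfies triangular and anti-diagonal jumps of exactly the same shape as those prescribed on $\Sigma^{P}\cap U(r;\rho_{2})$. Along $\Sigma_{2}^{P}\cap U(r;\rho_{2})$, which is mapped by $f_{2}$ to the negative real axis, we have $g_{1,+}(\eta)+g_{1,-}(\eta)=0$, so the conjugation by $e^{\lambda^{(4k+3)/2}g_{1}(\eta)\sigma_{3}}$ combines symmetrically across the cut and simply transmits the anti-diagonal jump of $\Phi$ unchanged. Along the rays $\Sigma_{1}^{P}\cup\Sigma_{3}^{P}$, the lower-triangular unit jump of $\Phi$ conjugated by $e^{\lambda^{(4k+3)/2}g_{1}\sigma_{3}}$ reproduces precisely the exponential factor $e^{2\lambda^{(4k+3)/2}g_{1}(\eta)}$ in the $(2,1)$-entry demanded by \eqref{eq-jump-P0}.

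The crux of the argument, and the main obstacle, is the matching condition (d) on the shrinking boundary $\partial U(r;\rho_{2})$ with $\rho_{2}$ given in \eqref{eq-def-rho2}. I would use the large-$\zeta$ asymptotics of $\Phi(\zeta;x)$ developed in Appendix \ref{sec:uniform-asym-scaled-P34} to expand $\Phi(\lambda^{(4k+3)/3}f_{2}(\eta);\lambda^{(4k+3)/3}\kappa_{0})$, exploiting the key identity \eqref{eq-relation-g1-f2}, which rewrites $g_{1}(\eta)=\tfrac{2}{3}f_{2}(\eta)^{3/2}+\kappa(\eta)f_{2}(\eta)^{1/2}$. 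The cubic $\tfrac{2}{3}f_{2}^{3/2}$ is exactly the $\theta$-phase built into the leading $\Phi$-asymptotics, so it cancels against $e^{\lambda^{(4k+3)/2}g_{1}\sigma_{3}}$ up to the residual piece $\kappa(\eta)f_{2}(\eta)^{1/2}$. The remaining task is to show that after combining with the $x$-dependent phase contributed by $\lambda^{(4k+3)/3}\kappa_{0}$ inside $\Phi$, the residual mismatch is governed by $\lambda^{(4k+3)/2}(\kappa(\eta)-\kappa_{0})f_{2}(\eta)^{1/2}$, and that this vanishes uniformly on $\partial U(r;\rho_{2})$. Using \eqref{eq-asym-kappa(0)} we have $\kappa(\eta)-\kappa_{0}=\mathcal{O}(\lambda^{-k-1+\delta})+\mathcal{O}(\rho_{2})$, while on the boundary $f_{2}(\eta)^{1/2}=\mathcal{O}(\rho_{2}^{1/2})$; the specific choice of $\rho_{2}$ in \eqref{eq-def-rho2} is engineered precisely so that both competing error sources (the one from the $\zeta\to\infty$ expansion of $\Phi$, which is controlled by $(\lambda^{(4k+3)/3}f_{2}(\eta))^{-1/2}$, and the one from the $\kappa$-mismatch) balance and remain $o(1)$ uniformly as $\lambda\to+\infty$. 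A careful bookkeeping of these two contributions then yields $P^{(r)}(\eta)P^{(\infty)}(\eta)^{-1}=I+o(1)$ uniformly on $\partial U(r;\rho_{2})$, completing the verification of \eqref{eq-matching-P0-Pinfty} and hence the lemma.
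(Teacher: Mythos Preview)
Your outline correctly identifies that (a)--(c) are routine and that the matching condition (d) is the crux, and you have the right structural idea: use the identity \eqref{eq-relation-g1-f2} so that the $\tfrac{2}{3}f_{2}^{3/2}$ and $\kappa_{0}f_{2}^{1/2}$ pieces cancel against the exponential built into the $\Phi$-asymptotics, leaving only $\lambda^{(4k+3)/2}(\kappa(\eta)-\kappa_{0})f_{2}(\eta)^{1/2}$ to control. However, there are two genuine gaps.

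First, your bound $\kappa(\eta)-\kappa_{0}=\mathcal{O}(\lambda^{-k-1+\delta})+\mathcal{O}(\rho_{2})$ is too weak to close the argument. Since $\kappa(\eta)=-b_{0}(\eta-r)^{1/2}f_{2}(\eta)^{-1/2}$ carries an overall factor of $b_{0}$, both $\kappa(\eta)-\kappa(r)$ and $\kappa(r)-\kappa_{0}$ are actually $\mathcal{O}(|\kappa_{0}|\rho_{2})$ and $\mathcal{O}(|\kappa_{0}|\lambda^{-k-1+\delta})$ respectively; the extra $|\kappa_{0}|$ is not cosmetic. With your stated estimate, in the regime where $b_{0}\sim\lambda^{-k-1+\delta}$ one computes $\lambda^{(4k+3)/2}\cdot\mathcal{O}(\rho_{2})\cdot\rho_{2}^{1/2}\sim\lambda^{k/2+3/4-3\delta/2}$, which diverges for every $k\ge1$ and $\delta<1/4$. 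The paper instead obtains $\kappa(\eta)-\kappa_{0}=\mathcal{O}(\lambda^{-2k-3/2})$ in the bounded-parameter case (and uses $|\kappa_{0}|\rho_{2}\sim\lambda^{-(4k+3)/2}$ in the other case), which is precisely what makes the residual exponential $1+\mathcal{O}(\rho_{2}^{1/2})$.

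Second, you invoke Appendix~\ref{sec:uniform-asym-scaled-P34} but then quote the error ``controlled by $(\lambda^{(4k+3)/3}f_{2}(\eta))^{-1/2}$'', which is the bounded-$x$ expansion \eqref{eq-Phi-asym-infty}. Since the parameter $\lambda^{(4k+3)/3}\kappa_{0}$ ranges from bounded values all the way to $-\infty$ over the admissible $(r,y)$, the matching must be split: for bounded $\lambda^{(4k+3)/3}\kappa_{0}$ use \eqref{eq-Phi-asym-infty}, and for $\lambda^{(4k+3)/3}\kappa_{0}\to-\infty$ use the uniform expansion \eqref{eq-asym-PII-large-xi-negative}, whose error terms carry additional inverse powers of the parameter. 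Without this case split the asymptotics of $\Phi$ are not uniform in $\kappa_{0}$ and the ``careful bookkeeping'' step cannot be carried out. The choice of $\rho_{2}$ in \eqref{eq-def-rho2} is designed so that both cases yield the common conclusion \eqref{eq-asym-Pr-Pinf}.
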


\begin{proof}
Recalling that $E_{2}(\eta)$ is an analytic prefactor in $U(r;\rho_{2})$, and using the jump condition of $\Phi$ in \eqref{eq-jump-Phi}, it is straightforward to verify that $P^{(r)}$ satisfies the jump condition in \eqref{eq-jump-P0}. The next task is to verify the matching condition \eqref{eq-matching-P0-Pinfty}. From the large $\zeta$ behavior of $\Phi$ in \eqref{eq-Phi-asym-infty}, it is clear that the exponential factor cannot be completely canceled by the choice of $\kappa_0$ in \eqref{eq-def-kappa0}; see the relation between $g_1$ and $f_2$ in \eqref{eq-relation-g1-f2}. To ensure that the remaining exponential quantity is controllable, we need to choose a shrinking radius $\rho_2$ in \eqref{eq-def-rho2}.

When $\alpha_{k}r^{2k+1}+y=-b_{0}\in [-C_{1}\lambda^{-k-1+\delta}, 0]$ with $\delta\in(0,\frac{1}{4})$, it is readily seen from \eqref{eq-def-kappa0} that 
\begin{equation}
    \kappa_{0}=\mathcal{O}(\lambda^{-k-1+\delta}), \qquad \lambda\to+\infty.
\end{equation}
This implies that $\rho_{2}\gg \lambda^{-k-1}$ when $\lambda$ is large enough. Due to \eqref{eq-real-g1(eta)-case-II}, one can see that the jump of $P$ on $(\Sigma_{1}^{P}\cup\Sigma_{3}^{P})\setminus U(r;\rho_{2})$ indeed tends to the identity matrix exponentially fast.

To verify the matching condition \eqref{eq-matching-P0-Pinfty}, we first obtain from \eqref{eq-def-Pinfty(eta)}, \eqref{eq-def-P0(eta)} and \eqref{eq-def-E2(eta)} that
\begin{multline}\label{eq-P0-Pinfty-1-explicit-representation}
 P^{(r)}(\eta)P^{(\infty)}(\eta)^{-1}\\
=\left(\frac{\lambda^{\frac{4k+3}{3}}f_{2}(\eta)}{\lambda(\eta-r) e^{\pi i}}\right)^{\frac{1}{4}\sigma_{3}}\Phi(\lambda^{\frac{4k+3}{3}}f_{2}(\eta);\lambda^{\frac{4k+3}{3}}\kappa_{0})e^{\lambda^{\frac{4k+3}{2}}g_{1}(\eta)\sigma_{3}}N^{-1}(\lambda(\eta-r))^{\frac{1}{4}\sigma_{3}}.
\end{multline}
As $\zeta \to \infty$, $\Phi(\zeta;x)$ exhibits different asymptotic behaviors for  bounded $x$  or $x\to-\infty$; see \eqref{eq-Phi-asym-infty} and \eqref{eq-asym-PII-large-xi-negative}. We thus split the discussions into two cases in what follows, namely, $\lambda^{\frac{4k+3}{3}}\kappa_{0}$ is bounded and  $\lambda^{\frac{4k+3}{3}}\kappa_{0}\to-\infty$.


When $\lambda^{\frac{4k+3}{3}}\kappa_{0}$ is bounded, {\it i.e.} $b_{0}=\mathcal{O}(\lambda^{-\frac{4k+3}{3}})$ as $\lambda\to+\infty$, according to \eqref{eq-def-rho2}, we find that $\rho_{2}$ behaves like $C(\lambda)\lambda^{-\frac{2k}{3}-\frac{1}{2}}$ as $\lambda\to+\infty$, where $C(\lambda)$ is bounded and non-zero when $\lambda$ is large enough.  A combination of \eqref{eq-def-kappa(eta)}, \eqref{eq-asym-kappa(0)} and \eqref{eq-def-kappa0} gives us 
\begin{equation}\label{eq-approx-kappa-eta-kappa0}
\kappa(\eta)-\kappa_{0}=\mathcal{O}(\lambda^{-2k-\frac{3}{2}}), \qquad \lambda\to+\infty,
\end{equation} 
uniformly for $\eta\in\partial U(r;\rho_{2})$. From the above formula and \eqref{eq-relation-g1-f2}, we have
\begin{align}\label{eq-exp-error-I}
&\exp\left\{\lambda^{\frac{4k+3}{2}}g_{1}(\eta)-\frac{2}{3}\left(\lambda^{\frac{4k+3}{3}}f_{2}(\eta)\right)^{\frac{3}{2}}-\lambda^{\frac{4k+3}{3}}\kappa_{0}\left(\lambda^{\frac{4k+3}{3}}f_{2}(\eta)\right)^{\frac{1}{2}}\right\} 
\nonumber 
\\
&=\exp\left\{\lambda^{\frac{4k+3}{2}}(\kappa(\eta)-\kappa_{0})\left(f_{2}(\eta)\right)^{\frac{1}{2}}\right\} = 1+\mathcal{O}(\rho_{2}^{\frac{1}{2}}), \qquad \lambda\to+\infty,
\end{align}
uniformly for all $\eta\in\partial U(r;\rho_{2})$. This implies that 
\begin{equation}\label{eq-approx-NeN-1}
   N\exp\left\{\left\{\lambda^{\frac{4k+3}{2}}(\kappa(\eta)-\kappa_{0})\left(f_{2}(\eta)\right)^{\frac{1}{2}}\right\}\sigma_{3}\right\}N^{-1}=I+ \left(\begin{matrix}
    \mathcal{O}(\rho_{2}) & \mathcal{O}(\rho_{2}^{\frac{1}{2}})\\
    \mathcal{O}(\rho_{2}^{\frac{1}{2}}) & \mathcal{O}(\rho_{2})
\end{matrix}\right)
\end{equation}
as $\lambda\to+\infty$ uniformly for all  $\eta\in\partial U(r;\rho_{2})$ and $\lambda^{\frac{4k+3}{3}}|\kappa_{0}|=\mathcal{O}(1)$.
Therefore, in view of \eqref{eq-Phi-asym-infty}, we obtain from \eqref{eq-P0-Pinfty-1-explicit-representation} and the above approximation that
\begin{align}
&P^{(r)}(\eta)P^{(\infty)}(\eta)^{-1} \nonumber \\
&=\left(\lambda(\eta-r)\right)^{-\frac{\sigma_{3}}{4}}\left(I+\left(\begin{matrix}
    \mathcal{O}(\lambda^{-\frac{4k+3}{3}}f_{2}(\eta)^{-1}) & \mathcal{O}(\lambda^{-\frac{4k+3}{6}} f_{2}(\eta)^{-\frac{1}{2}}) \\
    \mathcal{O}(\lambda^{-\frac{4k+3}{2}} f_{2}(\eta)^{-\frac{3}{2}}) &\mathcal{O}(\lambda^{-\frac{4k+3}{3}}f_{2}(\eta)^{-1})
\end{matrix}\right)
\right)\left(\lambda(\eta-r)\right)^{\frac{\sigma_{3}}{4}}\nonumber \\
&\quad \times(\lambda(\eta-r))^{-\frac{1}{4}\sigma_{3}}  N\exp\left\{\left\{\lambda^{\frac{4k+3}{2}}(\kappa(\eta)-\kappa_{0})\left(f_{2}(\eta)\right)^{\frac{1}{2}}\right\}\sigma_{3}\right\}N^{-1}(\lambda(\eta-r))^{\frac{1}{4}\sigma_{3}} \nonumber \\
&=(\lambda\rho_{2})^{-\frac{1}{4}\sigma_{3}}\left(I+\left(\begin{matrix}
    \mathcal{O}(\rho_{2}) & \mathcal{O}(\rho_{2}^{\frac{1}{2}})\\
    \mathcal{O}(\rho_{2}^{\frac{1}{2}}) & \mathcal{O}(\rho_{2})
\end{matrix}\right)\right)(\lambda\rho_{2})^{\frac{1}{4}\sigma_{3}},
 \qquad \lambda\to+\infty,   \label{eq-asym-Pr-Pinf-bounded}
\end{align}
uniformly for all $\eta\in\partial U(r; \rho_{2})$ and $\lambda^{\frac{4k+3}{3}}|\kappa_{0}|=\mathcal{O}(1)$.

When $-\lambda^{\frac{4k+3}{3}}\kappa_{0}\gg 1$ as $\lambda\to+\infty$, 
it is obvious that $\lambda^{\frac{4k+3}{2}}|\kappa_{0}| \gg \lambda^{\frac{2k}{3}+\frac{1}{2}}$. Then, it follows from  \eqref{eq-def-rho2} that 
\begin{equation}\label{eq-asym-kappa0rho2}
    |\kappa_{0}|\rho_{2}\sim \lambda^{-\frac{4k+3}{2}}, \qquad \lambda\to+\infty.
\end{equation}
Thus, \eqref{eq-exp-error-I} and \eqref{eq-approx-NeN-1} still hold uniformly for all $\eta\in\partial U(r;\rho_{2})$. Using \eqref{eq-asym-PII-large-xi-negative}, a similar computation as in \eqref{eq-asym-Pr-Pinf-bounded} gives us
\begin{align}
&P^{(r)}(\eta)P^{(\infty)}(\eta)^{-1} \nonumber\\
& =\left(\lambda(\eta-r)\right)^{-\frac{\sigma_{3}}{4}}\left(I+\left(\begin{matrix}
    \mathcal{O}(\lambda^{-(4k+3)}\kappa_{0}^{-2}f_{2}(\eta)^{-1}) &\mathcal{O}(\lambda^{-\frac{4k+3}{2}}\kappa_{0}^{-1}f_{2}(\eta)^{-\frac{1}{2}})\\
    \mathcal{O}(\lambda^{-\frac{3(4k+3)}{2}}\kappa_{0}^{-3}f_{2}(\eta)^{-\frac{3}{2}}) &\mathcal{O}(\lambda^{-(4k+3)}\kappa_{0}^{-2}f_{2}(\eta)^{-1})
\end{matrix}\right)
\right)\left(\lambda(\eta-r)\right)^{\frac{\sigma_{3}}{4}}  \nonumber\\
&\qquad \times (\lambda(\eta-r))^{-\frac{1}{4}\sigma_{3}}N\exp\left\{\left\{\lambda^{\frac{4k+3}{2}}(\kappa(\eta)-\kappa_{0})\left(f_{2}(\eta)\right)^{\frac{1}{2}}\right\}\sigma_{3}\right\}N^{-1}(\lambda(\eta-r))^{\frac{1}{4}\sigma_{3}}  \nonumber\\
&=(\lambda\rho_{2})^{-\frac{1}{4}\sigma_{3}}\left(I+\left(\begin{matrix}
    \mathcal{O}(\rho_{2}) & \mathcal{O}(\rho_{2}^{\frac{1}{2}})\\
    \mathcal{O}(\rho_{2}^{\frac{1}{2}}) & \mathcal{O}(\rho_{2})
\end{matrix}\right)\right)(\lambda\rho_{2})^{\frac{1}{4}\sigma_{3}}, \qquad \lambda\to+\infty, \label{eq-asym-Pr-Pinf-unbounded}
\end{align}
uniformly for all $\eta\in\partial U(r; \rho_{2})$ and $-\lambda^{\frac{4k+3}{3}}\kappa_{0}\gg 1$. 

Combining \eqref{eq-asym-Pr-Pinf-bounded} and \eqref{eq-asym-Pr-Pinf-unbounded}, we conclude that 
\begin{equation}\label{eq-asym-Pr-Pinf}
 P^{(r)}(\eta)P^{(\infty)}(\eta)^{-1}=
 I+(\lambda\rho_{2})^{-\frac{1}{4}\sigma_{3}}\left(\begin{matrix}
    \mathcal{O}(\rho_{2}) & \mathcal{O}(\rho_{2}^{\frac{1}{2}})\\
    \mathcal{O}(\rho_{2}^{\frac{1}{2}}) & \mathcal{O}(\rho_{2})
\end{matrix}\right)(\lambda\rho_{2})^{\frac{1}{4}\sigma_{3}}
\end{equation}
as $\lambda\to+\infty$ uniformly for all $\eta\in\partial U(r;\rho_{2})$ and $\alpha_{k}r^{2k+1}+y\in[-C_{1}\lambda^{-k-1+\delta},0]$. This completes the proof of the lemma.
    \end{proof}

\paragraph{Final transformation}
With the aid of the global and local parametrices built in \eqref{eq-def-Pinfty(eta)} and \eqref{eq-def-P0(eta)}, the final transformation is defined by
\begin{equation}\label{eq-R(eta)-by-P}
R(\eta)=\begin{cases}
(\lambda\rho_{2})^{\frac{1}{4}\sigma_{3}}P(\eta)P^{(r)}(\eta)^{-1}(\lambda\rho_{2})^{-\frac{1}{4}\sigma_{3}},&\eta\in U(r;\rho_{2}),\\
(\lambda\rho_{2})^{\frac{1}{4}\sigma_{3}}P(\eta)P^{(\infty)}(\eta)^{-1}(\lambda\rho_{2})^{-\frac{1}{4}\sigma_{3}},&\eta\in \mathbb{C}\setminus U(r;\rho_{2}).
\end{cases}
\end{equation}
It is straightforward to check that $R$ satisfies the following RH problem.

\begin{figure}
\centering

\tikzset{every picture/.style={line width=0.75pt}} 

\begin{tikzpicture}[x=0.75pt,y=0.75pt,yscale=-1,xscale=1]

\draw    (178,82) -- (235.53,136.16) ;
\draw [shift={(209.53,111.69)}, rotate = 223.27] [fill={rgb, 255:red, 0; green, 0; blue, 0 }  ][line width=0.08]  [draw opacity=0] (7.14,-3.43) -- (0,0) -- (7.14,3.43) -- cycle    ;
\draw    (236.53,172.16) -- (179.28,225.09) ;
\draw [shift={(211.8,195.03)}, rotate = 137.24] [fill={rgb, 255:red, 0; green, 0; blue, 0 }  ][line width=0.08]  [draw opacity=0] (7.14,-3.43) -- (0,0) -- (7.14,3.43) -- cycle    ;
\draw   (228.14,153.55) .. controls (228.14,139.74) and (239.34,128.55) .. (253.14,128.55) .. controls (266.95,128.55) and (278.14,139.74) .. (278.14,153.55) .. controls (278.14,167.35) and (266.95,178.55) .. (253.14,178.55) .. controls (239.34,178.55) and (228.14,167.35) .. (228.14,153.55) -- cycle ;
\draw  [fill={rgb, 255:red, 0; green, 0; blue, 0 }  ,fill opacity=1 ] (252.14,153.55) .. controls (252.14,152.99) and (252.59,152.55) .. (253.14,152.55) .. controls (253.69,152.55) and (254.14,152.99) .. (254.14,153.55) .. controls (254.14,154.1) and (253.69,154.55) .. (253.14,154.55) .. controls (252.59,154.55) and (252.14,154.1) .. (252.14,153.55) -- cycle ;
\draw    (250.14,128.55) -- (258.14,128.55) ;
\draw [shift={(257.94,128.55)}, rotate = 180] [fill={rgb, 255:red, 0; green, 0; blue, 0 }  ][line width=0.08]  [draw opacity=0] (7.14,-3.43) -- (0,0) -- (7.14,3.43) -- cycle    ;
\draw (250,155) node [anchor=north west][inner sep=0.75pt]   [align=left] {{\fontfamily{ptm}\selectfont \textit{r}}};


\end{tikzpicture}
\caption{The jump contour $\Sigma^{R}$ of the RH problem for $R$ in Section \ref{sec:RH-analysis-case-II-}.}
\label{fig-Sigma-R-case-II}
\end{figure}

\paragraph{RH problem for $R$}
\begin{enumerate}
\item [(a)] $R(\eta)$ is analytic in $\mathbb{C}\setminus \Sigma^{R}$, where $\Sigma^{R}=\left(\Sigma_{1}^{P}\cup \Sigma_{3}^{P}\cup \partial U(r;\rho_{2})\right)\setminus U(r;\rho_{2})$ is shown in Figure \ref{fig-Sigma-R-case-II}.
\item [(b)] On $\Sigma^{R}$, the limiting values $R_{\pm}(\eta)$ exist and satisfy the jump condition 
\begin{equation}\label{eq-jump-R-case-II-}
    R_{+}(\eta)=R_{-}(\eta)v_{R}(\eta),
\end{equation}
where
\begin{eqnarray}\label{eq-approx-v-R-case-II-}
v_{R}(\eta)=
\begin{cases}
I+\mathcal{O}(\lambda^{\frac{2k}{3}+\frac{1}{2}}e^{-C_{2}\lambda^{\frac{k}{2}}}),  &\eta\in\Sigma^{R}\setminus \partial U(r;\rho_{2}),\\
I+\left(\begin{matrix}
    \mathcal{O}(\rho_{2}) & \mathcal{O}(\rho_{2}^{\frac{1}{2}})\\
    \mathcal{O}(\rho_{2}^{\frac{1}{2}}) & \mathcal{O}(\rho_{2})
\end{matrix}\right), &\eta\in\partial U(r;\rho_{2})
\end{cases}
\end{eqnarray}
as $\lambda\to+\infty$.
\item [(c)] As $\eta\to\infty$, we have $R(\eta)=I+\mathcal{O}(\eta^{-1})$.
\end{enumerate}

From \eqref{eq-approx-v-R-case-II-}, it is readily seen that that $v_{R}(\eta)$ tends to identity matrix exponentially fast as $\lambda\to+\infty$ uniformly for all $\eta\in\Sigma^{R}\setminus \partial U(r;\rho_{2})$. When $\eta\in\partial U(r;\rho_{2})$, we see that $v_{R}(\eta)=I+\mathcal{O}(\rho_{2}^{\frac{1}{2}})$.  Following the standard argument of the small norm RH problem (cf. \cite{Deift1999,DX93}), we conclude that 
\begin{equation}\label{eq-appprox-R1-standard}
R(\eta)=I+\mathcal{O}\left(\rho_{2}^{\frac{1}{2}}\right) 
\end{equation}
as $\lambda\to+\infty$ uniformly for all $\eta\in U(r;\rho_{2})$ and $\alpha_{k}r^{2k+1}+y\in[-C_{1}\lambda^{-k-1+\delta},0]$. Moreover, we have the following estimates. 

\begin{lemma}\label{lem-R(r)-R'(r)-case-II-}
    As $\lambda\to+\infty$, we have
    \begin{equation}\label{eq-R(r)-R'(r)-case-II-}
    R(r)=I+\left(\begin{matrix}
    \mathcal{O}\left(\rho_{2}\right) & \mathcal{O}\left(\rho_{2}^{\frac{1}{2}}\right)\\
    \mathcal{O}\left(\rho_{2}^{\frac{1}{2}}\right)& \mathcal{O}\left(\rho_{2}\right)
\end{matrix}\right), \qquad R'(r)=  \begin{pmatrix}
    \mathcal{O}(1) & \mathcal{O}\left(\rho_{2}^{-\frac{1}{2}}\right)\\
     \mathcal{O}\left(\rho_{2}^{-\frac{1}{2}}\right) & \mathcal{O}(1)
\end{pmatrix},  
    \end{equation}
uniformly for $\alpha_{k}r^{2k+1}+y\in[-C_{1}\lambda^{-k-1+\delta},0]$, where $\rho_{2}$ is defined in \eqref{eq-def-rho2} and $\delta\in(0,\frac{1}{4})$ is an arbitrarily fixed constant.
\end{lemma}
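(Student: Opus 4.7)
The plan is to exploit the small-norm character of the RH problem for $R$ together with the Cauchy integral representation, while carefully tracking the asymmetric matrix structure that appears in $v_R(\eta)-I$ on $\partial U(r;\rho_2)$. Since the jump on $\Sigma^R\setminus\partial U(r;\rho_2)$ is exponentially close to the identity (see \eqref{eq-approx-v-R-case-II-}), its contribution to everything below will be exponentially small and can be absorbed into the error terms. Hence only the piece of $\Sigma^R$ on $\partial U(r;\rho_2)$ matters.

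Standard small-norm theory (cf.\ \cite{Deift1999,DX93}) gives
\begin{equation}
R(\eta)=I+\frac{1}{2\pi i}\oint_{\partial U(r;\rho_2)}\frac{R_-(\xi)\bigl(v_R(\xi)-I\bigr)}{\xi-\eta}\,d\xi + (\text{exponentially small}),
\end{equation}
valid for $\eta$ outside the small disc. Because $R_-=I+\mathcal{O}(\rho_2^{1/2})$ from \eqref{eq-appprox-R1-standard}, the factor $R_-$ is responsible only for subleading corrections, and the essential object is the direct Cauchy transform of $v_R-I$. First, I would deform slightly and evaluate at $\eta=r$ by residue/Cauchy formula: since $v_R-I$ is analytic inside $U(r;\rho_2)$ except for the parametrix structure, one writes
\begin{equation}
R(r)=I+\frac{1}{2\pi i}\oint_{\partial U(r;\rho_2)}\frac{v_R(\xi)-I+\mathcal{O}(\rho_2)}{\xi-r}\,d\xi,
\end{equation}
and since $|\xi-r|=\rho_2$ on the contour, the estimate in \eqref{eq-approx-v-R-case-II-} together with the length $2\pi\rho_2$ of the circle gives the claimed entrywise bound for $R(r)$: the diagonal entries are $\mathcal{O}(\rho_2)$ and the off-diagonal entries are $\mathcal{O}(\rho_2^{1/2})$. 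The key point is that the asymmetric structure in \eqref{eq-approx-v-R-case-II-} is preserved because the Cauchy kernel is scalar and acts entrywise.

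For $R'(r)$, I would differentiate the integral representation under the integral sign, producing a factor $(\xi-r)^{-2}$ rather than $(\xi-r)^{-1}$, and obtain
\begin{equation}
R'(r)=\frac{1}{2\pi i}\oint_{\partial U(r;\rho_2)}\frac{R_-(\xi)\bigl(v_R(\xi)-I\bigr)}{(\xi-r)^2}\,d\xi + (\text{exponentially small}).
\end{equation}
Now $|\xi-r|^{-2}=\rho_2^{-2}$ on the contour while the length is $2\pi\rho_2$, so one loses one power of $\rho_2$ compared with the estimate for $R(r)$. Combined with the entrywise bounds $v_R(\xi)-I=\mathrm{diag}(\mathcal{O}(\rho_2))+\mathrm{offdiag}(\mathcal{O}(\rho_2^{1/2}))$, this yields diagonal entries $\mathcal{O}(1)$ and off-diagonal entries $\mathcal{O}(\rho_2^{-1/2})$, precisely the matrix in the second assertion of \eqref{eq-R(r)-R'(r)-case-II-}.

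The main technical point—and the one requiring some care—is the passage from $R(\eta)$ defined for $\eta\notin\overline{U(r;\rho_2)}$ to $R(r)$ and $R'(r)$ inside the disc. Inside $U(r;\rho_2)$, $R$ is given by the second branch of \eqref{eq-R(eta)-by-P} and is therefore analytic there; thus $R(r)$ and $R'(r)$ can be recovered from the values on $\partial U(r;\rho_2)$ by Cauchy's formula, and the above Cauchy-kernel estimates apply directly. The only delicate aspect is that the entrywise asymmetric structure of $v_R-I$ must be preserved when one iterates the Neumann series for the small-norm solution; I would verify this by noting that conjugation in \eqref{eq-R(eta)-by-P} by $(\lambda\rho_2)^{\frac14\sigma_3}$ was precisely designed to balance the two off-diagonal blocks, so that iterating the resolvent preserves the structure displayed in \eqref{eq-approx-v-R-case-II-}. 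Uniformity in $\alpha_kr^{2k+1}+y\in[-C_1\lambda^{-k-1+\delta},0]$ follows from the uniformity of all preceding estimates in this parameter, established in Lemma \ref{lem-para-pr}.
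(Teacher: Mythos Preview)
Your proposal is correct and follows essentially the same route as the paper: both use the Plemelj/Cauchy integral representation $R(\eta)=I+\frac{1}{2\pi i}\oint R_-(\xi)(v_R(\xi)-I)/(\xi-\eta)\,d\xi$, split $R_-(v_R-I)=(v_R-I)+(R_--I)(v_R-I)$, observe that the cross term has all entries $\mathcal{O}(\rho_2)$ so the structured bound of $v_R-I$ survives, and then read off $R(r)$ and $R'(r)$ from the contour integral with kernels $(\xi-r)^{-1}$ and $(\xi-r)^{-2}$. Your remark about the conjugation by $(\lambda\rho_2)^{\frac14\sigma_3}$ balancing the off-diagonal blocks is exactly why the entrywise structure is preserved under iteration, though the paper simply computes the cross term directly rather than invoking the Neumann series.
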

\begin{proof}
From \eqref{eq-jump-R-case-II-}, we have
\begin{equation}
    R_{+}(\eta)=R_{-}(\eta)+(v_{R}(\eta)-I)+(R_{-}(\eta)-I)(v_{R}(\eta)-I), \quad \eta\in\partial U(r;\rho_{2}).
\end{equation}
By Plemelj's formula, it follows that
\begin{equation}\label{eq-R1-integral-representation}
    R(\eta)=I+\frac{1}{2\pi i}\oint_{\partial U(r;\rho_{2})}\frac{(v_{R}(\xi)-I)+(R_{-}(\xi)-I)(v_{R}(\xi)-I)}{\xi-\eta}d\xi.
\end{equation}
For $\eta\in\partial U(r;\rho_{2})$, we obtain from \eqref{eq-approx-v-R-case-II-} and \eqref{eq-appprox-R1-standard} that 
\begin{equation}\label{eq-approx-v1+R1v1}
    (v_{R}(\eta)-I)+(R_{-}(\eta)-I)(v_{R}(\eta)-I)=\left(\begin{matrix}
    \mathcal{O}\left(\rho_{2}\right) & \mathcal{O}\left(\rho_{2}^{\frac{1}{2}}\right)\\
    \mathcal{O}\left(\rho_{2}^{\frac{1}{2}}\right)& \mathcal{O}\left(\rho_{2}\right)
\end{matrix}\right), \qquad \lambda\to +\infty.
\end{equation}
This, together with \eqref{eq-R1-integral-representation}, implies the estimate of $R(r)$ in \eqref{eq-R(r)-R'(r)-case-II-}. Making use of this fact and the Cauchy integral theorem, we further obtain that
\begin{align}\label{eq-R'(eta)-approx}
    R'(r)&=\frac{1}{2\pi i}\oint_{\partial U(r;\rho_{2})}\frac{(v_{R}(\xi)-I)+(R_{-}(\xi)-I)(v_{R}(\xi)-I)}{(\xi-r)^2}d\xi
     \nonumber \\ 
    &=\left(\begin{matrix}
    \mathcal{O}(1) & \mathcal{O}\left(\rho_{2}^{-\frac{1}{2}}\right)\\
     \mathcal{O}\left(\rho_{2}^{-\frac{1}{2}}\right) & \mathcal{O}(1)
\end{matrix}\right), \qquad \lambda\to +\infty,
\end{align}
uniformly for $\alpha_{k}r^{2k+1}+y\in[-C_{1}\lambda^{-k-1+\delta},0]$.  This finishes the proof of Lemma \ref{lem-R(r)-R'(r)-case-II-}.
\end{proof}

\subsection{Riemann-Hilbert analysis when $\alpha_{k}r^{2k+1}+y\in[0,C_{1}\lambda^{-k-1+\delta}]$}
\label{sec:RH-analysis-case-II+}
\paragraph{Construction of the $g$-function}

Since the conclusion in Lemma \ref{lem-g1(eta)} is not applicable in this case, we have to introduce a new $g$-function in the form
\begin{equation}\label{def:g_2}
g_{2}(\eta)=(\eta-r_{0})^{\frac{3}{2}}p_{2}(\eta),
 \qquad \eta \in \mathbb{C} \setminus 
(-\infty,r_0],
\end{equation}
where $p_{2}(\eta)$ is a polynomial of degree $2k$ and $r_{0}$ is a constant to be determined. 
It is required that $g_2$ satisfies the matching condition
\begin{equation}\label{eq-matching-g2-theta}
  \theta(\eta;y)-g_{2}(\eta)=d_{2}\eta^{-\frac{1}{2}}+\mathcal{O}(\eta^{-\frac{3}{2}}), \qquad \eta \to \infty,
\end{equation}
where $\theta(\eta;y)$ is defined in \eqref{def:thetaeta} and $d_{2}$ is a constant independent of $\eta$ whose explicit value is unimportant. A direct calculation yields that 
\begin{equation}\label{eq-def-r0}
 r_{0}=-\left(\frac{y}{\alpha_{k}}\right)^{\frac{1}{2k+1}}
\end{equation}
and 
\begin{align}\label{eq-def-p2(eta)}
p_{2}(\eta)=\sum\limits_{j=0}^{2k}\frac{4}{4k+3}\frac{\Gamma\left(j+\frac{3}{2}\right)}{\Gamma\left(j+1\right)\Gamma\left(\frac{3}{2}\right)}r_{0}^{j}\eta^{2k-j}.
\end{align}
This implies that 
\begin{equation}
g_{2}(\eta)=(\eta-r_{0})^{\frac{3}{2}}\sum\limits_{j=0}^{2k}\frac{4}{4k+3}\frac{\Gamma\left(j+\frac{3}{2}\right)}{\Gamma\left(j+1\right)\Gamma\left(\frac{3}{2}\right)}r_{0}^{j}\eta^{2k-j}, \qquad \eta \in  \mathbb{C} \setminus 
(-\infty,r_0].
\end{equation}

\paragraph{First transformation} Utilizing the above $g_{2}$, we can normalize $X$ as
\begin{equation}\label{def-P}
\tilde{P}(\eta)= \left(\begin{matrix}
1&0\\-d_{2}\lambda^{2k+2}&1
\end{matrix}
\right)X(\lambda\eta)\begin{cases}
\begin{pmatrix}
1 & 0\\
1 & 1
\end{pmatrix}e^{\lambda^{\frac{4k+3}{2}}g_2(\eta)\sigma_3}, \quad & \eta \in \textrm{I},\\
\begin{pmatrix}
1 & 0\\
-1 & 1
\end{pmatrix}e^{\lambda^{\frac{4k+3}{2}}g_2(\eta)\sigma_3}, \quad & \eta \in \textrm{II},\\
e^{\lambda^{\frac{4k+3}{2}}g_2(\eta)\sigma_3}, \quad & \textrm{elsewhere},
\end{cases}
\end{equation}
where regions I and II are illustrated in Figure \ref{fig:tildeP} and the constant $d_{2}$ is given in \eqref{eq-matching-g2-theta}. It is evident that $\tilde{P}$ satisfies the following RH problem.

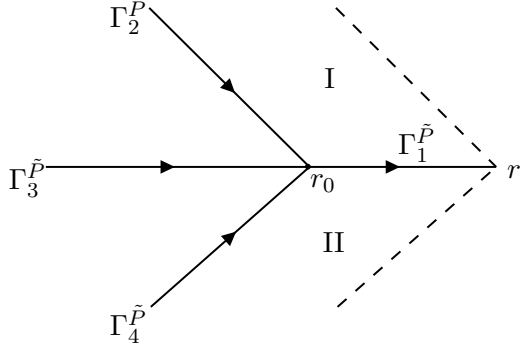
\begin{figure}[t]
\begin{center}
\tikzset{every picture/.style={line width=0.75pt}} 
\begin{tikzpicture}[x=0.75pt,y=0.75pt,yscale=-0.8,xscale=0.8]

\draw    (72,149) -- (223,149) -- (353,149) ;
\draw [shift={(152.5,149)}, rotate = 180] [fill={rgb, 255:red, 0; green, 0; blue, 0 }  ][line width=0.08]  [draw opacity=0] (8.93,-4.29) -- (0,0) -- (8.93,4.29) -- cycle    ;
\draw [shift={(293,149)}, rotate = 180] [fill={rgb, 255:red, 0; green, 0; blue, 0 }  ][line width=0.08]  [draw opacity=0] (8.93,-4.29) -- (0,0) -- (8.93,4.29) -- cycle    ;
\draw    (136.5,49) -- (236.5,149) ;
\draw [shift={(190.04,102.54)}, rotate = 225] [fill={rgb, 255:red, 0; green, 0; blue, 0 }  ][line width=0.08]  [draw opacity=0] (8.93,-4.29) -- (0,0) -- (8.93,4.29) -- cycle    ;
\draw    (137.5,237) -- (236.5,149) ;
\draw [shift={(190.74,189.68)}, rotate = 138.37] [fill={rgb, 255:red, 0; green, 0; blue, 0 }  ][line width=0.08]  [draw opacity=0] (8.93,-4.29) -- (0,0) -- (8.93,4.29) -- cycle    ;
\draw  [fill={rgb, 255:red, 0; green, 0; blue, 0 }  ,fill opacity=1 ] (236.67,148.83) .. controls (236.67,148.19) and (236.15,147.66) .. (235.5,147.66) .. controls (234.85,147.66) and (234.33,148.19) .. (234.33,148.83) .. controls (234.33,149.48) and (234.85,150) .. (235.5,150) .. controls (236.15,150) and (236.67,149.48) .. (236.67,148.83) -- cycle ;
\draw  [dash pattern={on 4.5pt off 4.5pt}]  (253,49) -- (353,149) ;
\draw  [dash pattern={on 4.5pt off 4.5pt}]  (254,237) -- (353,149) ;

\draw (235,152) node [anchor=north west][inner sep=0.75pt]   [align=left] {$r_0$};
\draw (358,146) node [anchor=north west][inner sep=0.75pt]   [align=left] {$r$};
\draw (110,40) node [anchor=north west][inner sep=0.75pt]   [align=left] {$\Gamma_{2}^{\tilde{P}}$};
\draw (47,142) node [anchor=north west][inner sep=0.75pt]   [align=left] {$\Gamma_{3}^{\tilde{P}}$};
\draw (110,234) node [anchor=north west][inner sep=0.75pt]   [align=left] {$\Gamma_{4}^{\tilde{P}}$};
\draw (244,85) node [anchor=north west][inner sep=0.75pt]   [align=left] {I};
\draw (243,188) node [anchor=north west][inner sep=0.75pt]   [align=left] {II};
\draw (290,120) node [anchor=north west][inner sep=0.75pt]   [align=left] {$\Gamma_{1}^{\tilde{P}}$};

\end{tikzpicture}
\caption{Regions I and II and the jump contours of the RH problem for $\tilde{P}$.}
   \label{fig:tildeP}
\end{center}
\end{figure}

\paragraph{RH problem for $\tilde{P}$}
\begin{enumerate}
\item[(a)] $\tilde{P}(\eta)$ is analytic in $\mathbb{C} \setminus \Gamma^{\tilde{P}}$, where $\Gamma^{\tilde{P}}:=\Gamma_{1}^{\tilde{P}}\cup\Gamma_{2}^{\tilde{P}}\cup\Gamma_{3}^{\tilde{P}}\cup\Gamma_{4}^{\tilde{P}}$ is shown in Figure \ref{fig:tildeP}.
\item[(b)] On $\Gamma^{\tilde{P}}$, the limiting values $\tilde{P}_{\pm}(\eta)$ exist and satisfy the jump condition
\begin{equation}
\tilde{P}_+(\eta)=\tilde{P}_+(\eta)\begin{cases}
\begin{pmatrix}
1 & e^{-2\lambda^{\frac{4k+3}{2}} g_2(\eta)}\\
0 & 1
\end{pmatrix}, \quad & \eta \in \Gamma^{\tilde{P}}_{1},\\
\begin{pmatrix}
1 & 0\\
e^{2\lambda^{\frac{4k+3}{2}} g_2(\eta)} & 1
\end{pmatrix}, \quad & \eta \in \Gamma^{\tilde{P}}_{2} \cup \Gamma_{4}^{\tilde{P}},\\
\begin{pmatrix}
0 & 1\\
-1 & 0
\end{pmatrix}, \quad & \eta \in \Gamma_{3}^{\tilde{P}}.
\end{cases}
\end{equation}
\item [(c)] As $\eta\to r$, we have $\tilde{P}(\eta)=\mathcal{O}(\log{(\eta-r)})$.
\item[(d)] As $\eta \to \infty$, we have 
\begin{equation}
\tilde{P}(\eta)=\left(I + \Boh(\eta^{-1})\right)(\lambda\eta)^{-\frac 14 \sigma_3}N,
\end{equation}
where $N$ is defined in \eqref{eq-def-N}.
\end{enumerate}

Note that 
\begin{equation}\label{def:g_2y}
    g_{2}(|y|^{\frac{1}{2k+1}}\eta)=|y|^{\frac{4k+3}{4k+2}}g(\eta), \qquad r_{0}=|y|^{\frac{1}{2k+1}}z_{0},
\end{equation}
where $g$ is defined in \eqref{eq:gfunction-appendix} and $z_{0}$ is specified in \eqref{eq:zzero}. These facts, together with \cite[Proposition 2.5 and Corollary 2.6]{Claeys-2012}, yield that $\Re g_2(\eta) < 0$ for $\eta \in \Gamma^{\tilde{P}}_2 \cup \Gamma^{\tilde{P}}_4$ and $\Re g_2(\eta) > 0$ for $\eta \in \Gamma^{\tilde{P}}_1$, provided that $\eta\neq r_{0}$. Moreover, we can show that there exist two fixed constants $M_{3},\tilde{\rho}_{2}>0$ such that
\begin{align}
\Re g_{2}(\eta)&\leq
-M_{3}, \quad \eta\in\Gamma_{2}^{\tilde{P}}\cup\Gamma_{4}^{\tilde{P}}, \label{g2-prop1}\\
\Re g_{2}(\eta)&\geq
M_{3},\qquad \eta\in\Gamma_{1}^{\tilde{P}}, \label{g2-prop2}
\end{align}
provided that $|\eta-r_{0}|\geq \tilde{\rho}_{2}$. This implies that, as $\lambda \to \infty$, the jump matrices of $\tilde{P}$ converge exponentially fast to the identity matrix, uniformly for $\eta\in(\Gamma^{\tilde{P}}_{1}\cup\Gamma^{\tilde{P}}_{2}\cup \Gamma^{\tilde{P}}_{4})\setminus U(r_{0};\tilde{\rho}_{2})$, where $U(r_{0}; \tilde{\rho}_{2})$ is a fixed neighborhood of $\eta=r_{0}$.

\paragraph{Global parametrix}
In view of the fact that $\alpha_{k}r^{2k+1}+y\in[0,C_{1}\lambda^{-k-1+\delta}]$, it follows from \eqref{eq-def-r0} that $r-r_{0}=\mathcal{O}(\lambda^{-k-1+\delta})$ as $\lambda\to+\infty$. This indicates that $r$ lies within the open disk $U(r_{0};\tilde{\rho}_{2})$ for large $\lambda$. Hence, we proceed to define the global parametrix in this case by
\begin{equation}\label{def-tildeP-infty}
\tilde{P}^{(\infty)}(\eta)=(\lambda(\eta-r_{0}))^{-\frac 14 \sigma_3} N, 
\end{equation}
which solves the following RH problem.

\paragraph{RH problem for $\tilde{P}^{(\infty)}$}
\begin{enumerate}
    \item [(a)] $\tilde{P}^{(\infty)}(\eta)$ is analytic in $\mathbb{C}\setminus(-\infty,r_{0}]$.
    \item [(b)] $\tilde{P}^{(\infty)}$ satisfies the jump condition
    \begin{equation}
        \tilde{P}^{(\infty)}_{+}(\eta)=\tilde{P}^{(\infty)}_{-}(\eta)\left(\begin{matrix}
            0&1\\-1&0
        \end{matrix}\right), \qquad \eta\in(-\infty,r_{0}).
    \end{equation}
    \item [(c)] As $\eta\to\infty$, we have 
    \begin{equation}
       \tilde{P}^{(\infty)}(\eta)=(I+\mathcal{O}(\eta^{-1}))(\lambda\eta)^{-\frac{1}{4} \sigma_3}N. 
    \end{equation}
\end{enumerate}

\paragraph{Local parametrix}
Since $\tilde P$ and $\tilde P^{(\infty)}$ are not uniformly close to each other in $U(r_{0}; \tilde{\rho}_2)$, it is necessary to construct a local parametrix that satisfies the following RH problem.

\paragraph{RH problem for $\tilde{P}^{(r_{0})}$}
\begin{enumerate}
\item[(a)] $\tilde{P}^{(r_{0})}(\eta)$ is analytic in $U(r_{0};\tilde{\rho}_2) \setminus \Gamma^{\tilde{P}}$.
\item[(b)] On  $U(r_{0};\tilde{\rho}_2) \cap \Gamma^{\tilde{P}}$, the limiting values $\tilde{P}_{\pm}^{(r_{0})}$ exist and satisfy the jump condition
\begin{equation}
\tilde{P}_{+}^{(r_{0})}(\eta)=\tilde{P}_{+}^{(r_{0})}(\eta)\begin{cases}
\begin{pmatrix}
1 & e^{-2\lambda^{\frac{4k+3}{2}} g_2(\eta)}\\
0 & 1
\end{pmatrix}, \quad & \eta \in \Gamma^{\tilde{P}}_{1}\cap U(r_{0};\tilde{\rho}_2),\\
\begin{pmatrix}
1 & 0\\
e^{2\lambda^{\frac{4k+3}{2}} g_2(\eta)} & 1
\end{pmatrix}, \quad & \eta \in (\Gamma^{\tilde{P}}_{2} \cup \Gamma_{4}^{\tilde{P}})\cap U(r_{0};\tilde{\rho}_2),\\
\begin{pmatrix}
0 & 1\\
-1 & 0
\end{pmatrix}, \quad & \eta \in \Gamma_{3}^{\tilde{P}} \cap  U(r_{0};\tilde{\rho}_{2}).
\end{cases}
\end{equation}
\item[(c)] As $\lambda \to\infty$, $\tilde P^{(r_{0})}$ satisfies the matching condition 
\begin{equation}
\label{eq-matching-Pr0-Pinfty}
    \tilde{P}^{(r_{0})}(\eta)= (I+o(1))\tilde{P}^{(\infty)}(\eta), \qquad \eta \in \partial U(r_{0};\tilde{\rho}_{2}),
\end{equation}
where $\tilde{P}^{(\infty)}$ is given in \eqref{def-tildeP-infty}.
\end{enumerate}

To solve this RH problem, we introduce the function
\begin{equation}\label{eq-def-tilde-f2}
f_{3}(\eta;r_{0})=\left(\frac{3}{2} g_2(\eta)\right)^{\frac 23},
\end{equation}
where $g_2(\eta)$ is defined in \eqref{def:g_2}. Clearly, $f_{3}(\eta;r_{0})$ is analytic in $U(r_{0};\tilde{\rho}_{2})$. Indeed, we have 
\begin{equation}\label{eq-def-f2(eta)}
f_{3}(\eta;r_{0})=\left(\frac{3p_{2}(r_{0})}{2}\right)^{\frac{2}{3}}(\eta-r_{0}) + \Boh((\eta-r_{0})^2), \qquad  \eta \to r_{0},
\end{equation}
where $p_{2}(r_{0})>0$ by \eqref{eq-def-p2(eta)}. 
Thus, $f_{3}(\eta;r_{0})$ is a conformal mapping for $\eta\in U(r_{0},\tilde{\rho}_{2})$. 
We next define
\begin{equation}\label{def:tildeP-cs}
\tilde{P}^{(r_{0})}(\eta)=\tilde{E}_2(\eta) \tilde{\Phi}(\lambda^{\frac{4k+3}{3}} f_{3}(\eta;r_{0}); \lambda^{\frac{4k+3}{3}} f_{3}(r;r_{0}))\exp\left\{\lambda^{\frac{4k+3}{2}}g_2(\eta)\sigma_3\right\},
\end{equation}
where
\begin{equation}\label{def:tildeE2}
\tilde{E}_2(\eta):=(\lambda(\eta-r_{0}))^{-\frac{1}{4} \sigma_3}(\lambda^{\frac{4k+3}{3}} f_{3}(\eta;r_{0}))^{\frac{1}{4} \sigma_3}e^{-\frac{\pi i}{4} \sigma_3}
\end{equation}
is an analytic prefactor in $U(r_{0};\tilde{\rho_{2}})$ and $\tilde{\Phi}(\zeta;x)$ is a variant of the  $\mathrm{P_{34}}$ 
parametrix defined in \eqref{eq-def-tilde-Phi}. From the RH problem  for $\tilde{\Phi}$ given in Appendix \ref{sec-appendix-p34-parametrix}, one can easily check that that $\tilde{P}^{(r_{0})}$ in \eqref{def:tildeP-cs} satisfies items (a) and (b) in the RH problem for $\tilde{P}^{(r_{0})}$. To show the matching condition \eqref{eq-matching-Pr0-Pinfty}, we see from \eqref{eq-tilde-Phi-asym-infty}, \eqref{eq-tilde-Phi-asym-uniform}, \eqref{def-tildeP-infty} and \eqref{def:tildeP-cs} that, as $\lambda \to +\infty$,
\begin{align}
&\tilde{P}^{(r_{0})}(\eta)\tilde{P}^{(\infty)}(\eta)^{-1} 
\nonumber \\
&=\left(\frac{\lambda(\eta-r_{0})}{\lambda^{\frac{4k+3}{3}} f_{3}(\eta;r_{0})}\right)^{-\frac{1}{4}\sigma_{3}}\left(I + \Boh(\lambda^{-\frac{4k+3}{3}}f_{3}(\eta;r_{0})^{-1})\right)\left(\frac{\lambda(\eta-r_{0})}{\lambda^{\frac{4k+3}{3}} f_{3}(\eta;r_{0})}\right)^{\frac{1}{4}\sigma_{3}} \nonumber \\
&=I+\Boh(\lambda^{-\frac{2k}{3}-1}), \qquad \eta\in\partial U(r;\tilde{\rho}_{2}),
\end{align}
uniformly for $\alpha_{k}r^{2k+1}+y\in[0,C_{1}\lambda^{-k-1+\delta}]$.

\paragraph{Final transformation} 
The final transformation is defined by
\begin{equation}\label{def-R-case-II+}
R(\eta)=\begin{cases}
\tilde{P}(\eta)\tilde{P}^{(r_{0})}(\eta)^{-1}, \quad & \eta \in U(r_{0}; \tilde{\rho}_{2}),\\
\tilde{P}(\eta)\tilde{P}^{(\infty)}(\eta)^{-1}, \quad & \eta \in \mathbb{C} \setminus U(r_{0}; \tilde{\rho}_{2}).
\end{cases}
\end{equation}
From the global and local parametrices built in \eqref{def-tildeP-infty} and \eqref{def:tildeP-cs} it is easy to check that $R$ satisfies the following RH problem.
\begin{figure}
\centering

\tikzset{every picture/.style={line width=0.75pt}} 

\begin{tikzpicture}[x=0.75pt,y=0.75pt,yscale=-1,xscale=1]

\draw    (198,79) -- (255.53,133.16) ;
\draw [shift={(229.53,108.69)}, rotate = 223.27] [fill={rgb, 255:red, 0; green, 0; blue, 0 }  ][line width=0.08]  [draw opacity=0] (7.14,-3.43) -- (0,0) -- (7.14,3.43) -- cycle    ;
\draw    (256.53,169.16) -- (199.28,222.09) ;
\draw [shift={(231.8,192.03)}, rotate = 137.24] [fill={rgb, 255:red, 0; green, 0; blue, 0 }  ][line width=0.08]  [draw opacity=0] (7.14,-3.43) -- (0,0) -- (7.14,3.43) -- cycle    ;
\draw   (248.14,150.55) .. controls (248.14,136.74) and (259.34,125.55) .. (273.14,125.55) .. controls (286.95,125.55) and (298.14,136.74) .. (298.14,150.55) .. controls (298.14,164.35) and (286.95,175.55) .. (273.14,175.55) .. controls (259.34,175.55) and (248.14,164.35) .. (248.14,150.55) -- cycle ;
\draw  [fill={rgb, 255:red, 0; green, 0; blue, 0 }  ,fill opacity=1 ] (272.14,150.55) .. controls (272.14,149.99) and (272.59,149.55) .. (273.14,149.55) .. controls (273.69,149.55) and (274.14,149.99) .. (274.14,150.55) .. controls (274.14,151.1) and (273.69,151.55) .. (273.14,151.55) .. controls (272.59,151.55) and (272.14,151.1) .. (272.14,150.55) -- cycle ;
\draw    (270.14,125.55) -- (278.14,125.55) ;
\draw [shift={(277.94,125.55)}, rotate = 180] [fill={rgb, 255:red, 0; green, 0; blue, 0 }  ][line width=0.08]  [draw opacity=0] (7.14,-3.43) -- (0,0) -- (7.14,3.43) -- cycle    ;
\draw  [fill={rgb, 255:red, 0; green, 0; blue, 0 }  ,fill opacity=1 ] (285.14,150.55) .. controls (285.14,149.99) and (285.59,149.55) .. (286.14,149.55) .. controls (286.69,149.55) and (287.14,149.99) .. (287.14,150.55) .. controls (287.14,151.1) and (286.69,151.55) .. (286.14,151.55) .. controls (285.59,151.55) and (285.14,151.1) .. (285.14,150.55) -- cycle ;

\draw (283,153) node [anchor=north west][inner sep=0.75pt]   [align=left] {$r$};
\draw (268,153) node [anchor=north west][inner sep=0.75pt]   [align=left] {$r_0$};

\end{tikzpicture}
\caption{The jump contour of the RH problem for $R$ in Section \ref{sec:RH-analysis-case-II+}}
\label{fig-Gamma-R}
\end{figure}

\paragraph{RH problem for $R$}
\begin{itemize}
\item[(a)] $R(\eta)$ is defined and analytic in $\mathbb{C} \setminus \Gamma^R$, where $
\Gamma^R:=  \left(\Gamma^{\tilde{P}}_{2}\cup \Gamma^{\tilde{P}
}_{4} \cup \partial U(r_{0}; \tilde{\rho}_{2})\right)\setminus U(r_{0}; \tilde{\rho}_{2})$ is shown in Figure \ref{fig-Gamma-R}.
\item[(b)] On $\Gamma^{R}$, the limiting values $R_{\pm}(\eta)$ exist and satisfy the jump condition
\begin{align}
R_+(\eta)=R_-(\eta) v_R(\eta), 
\end{align}
where
\begin{align}\label{jump:R1-cs}
v_R(\eta) = \begin{cases}
I+\mathcal{O}(\lambda^{-\frac{2k}{3}-1}), & \qquad \eta \in \partial U(r_{0}; \tilde{\rho}_{2}),
\\
I+\mathcal{O}(\lambda^{-\frac{1}{2}}e^{-M_{3}\lambda^{\frac{4k+3}{2}}}), & \qquad \eta \in \Gamma_R \setminus \partial U(r_{0}; \tilde{\rho}_{2}),
\end{cases}
\end{align}
as $\lambda\to+\infty$.
\item[(c)] As $\eta \to \infty$, we have $R(\eta) = I + \Boh(\eta^{-1}).$
\end{itemize}

By the standard argument of the small norm RH problem, we conclude that, as $\lambda\to+\infty$,
\begin{align}\label{eq-approximation-R}
R(\eta)=I+\Boh(\lambda^{-\frac{2k}{3}-1}), \qquad 
R'(r)=\Boh(\lambda^{-\frac{2k}{3}-1}),
\end{align}
uniformly for $\alpha_{k}r^{2k+1}+y\in[0,C_{1}\lambda^{-k-1+\delta}]$.

\subsection{Proof of Lemma \ref{lem-dF/dtau-case-II}}
\label{sec:case-II-proof}
We split the proof into two parts, corresponding to the analysis performed in Sections \ref{sec:RH-analysis-case-II-} and \ref{sec:RH-analysis-case-II+}, respectively. 

\subsubsection*{The case when  $\alpha_{k}r^{2k+1}+y\in[-C_{1}\lambda^{-k-1+\delta},0]$}

This part is an outcome of the analysis carried out in Section \ref{sec:RH-analysis-case-II-}. From Lemma \ref{lem-differential-indentity-2} and the definition of $P$ in \eqref{eq-def-P(eta)-case-II}, we obtain
\begin{equation}
\frac{\partial F}{\partial s}(s;x)=\left.\frac{1}{2\pi i\lambda}(P(\eta)^{-1}P'(\eta))_{21}\right|_{\eta\to r}.
\end{equation}
By inverting the transformation in \eqref{eq-R(eta)-by-P}, it follows from the above equation that
\begin{multline}
    \frac{\partial F}{\partial s}(s;x)
    \\ =\left.\frac{1}{2\pi i\lambda}\left[(P^{(r)}(\eta)^{-1}P^{(r)'}(\eta))_{21}+(P^{(r)}(\eta)^{-1} (\lambda \rho_{2})^{-\frac{1}{4}\sigma_{3}} R(\eta)^{-1}R'(\eta) (\lambda \rho_{2})^{\frac{1}{4}\sigma_{3}} P^{(r)}(\eta))_{21}\right]\right|_{\eta\to r}.
\end{multline}
Substituting $P^{(r)}$ in \eqref{eq-def-P0(eta)} into the above formula, we obtain from the local behavior of $\Phi$ near the origin given in \eqref{eq-appendix-Phi-zeto-to-0} and \eqref{eq-Phi-0} that
\begin{equation}\label{eq-approx-dF/dtau-three-parts-1}
\begin{split}
\frac{\partial F}{\partial s}(s;x)=\Theta_{1}(r)+\Theta_{2}(r)+\Theta_{3}(r),
\end{split}
\end{equation}
where
\begin{align}\label{eq-approx-dF/dtau-three-parts}
\Theta_{1}(r)&=\frac{\lambda^{\frac{4k+3}{3}}}{2\pi i\lambda}f_{2}'(r)\left(\Phi_{1}(\lambda^{\frac{4k+3}{3}}\kappa_{0})\right)_{21},\\
\Theta_{2}(r)&=\frac{1}{2\pi i\lambda}\left(\Phi_{0}(\lambda^{\frac{4k+3}{3}}\kappa_{0})^{-1}\Xi_{1}(r)\Phi_{0}(\lambda^{\frac{4k+3}{3}}\kappa_{0})\right)_{21},\\
\Theta_{3}(r)&=\frac{1}{2\pi i\lambda}\left(\Phi_{0}(\lambda^{\frac{4k+3}{3}}\kappa_{0})^{-1}\Xi_{2}(r)\Phi_{0}(\lambda^{\frac{4k+3}{3}}\kappa_{0})\right)_{21},
\end{align}
with $\Xi_{1}(\eta)=E_{2}(\eta)^{-1}E_{2}'(\eta)$, $\Xi_{2}(\eta)=E_{2}(\eta)^{-1}(\lambda \rho_{2})^{-\frac{1}{4}\sigma_{3}}R(\eta)^{-1}R'(\eta)(\lambda \rho_{2})^{\frac{1}{4}\sigma_{3}}E_{2}(\eta)$ and $\kappa_0$ given in \eqref{eq-def-kappa0}. 

We next estimate $\Theta_i(r)$, $i=1,2,3$, in \eqref{eq-approx-dF/dtau-three-parts-1}, respectively. In view of the definition of $E_{2}$ in \eqref{eq-def-E2(eta)}, it is readily seen that $E_{2}(r)$ is diagonal and
\begin{equation}
\Xi_{1}(r)=\mathcal{O}(1), \qquad \lambda \to+\infty.
\end{equation}
A further appeal to the estimate of $R(r)$ and $R'(r)$ in \eqref{eq-R(r)-R'(r)-case-II-} gives us  
\begin{equation}
\Xi_{2}(r)=\mathcal{O}(\lambda^{\frac{2k}{3}+\frac{1}{2}}), \qquad \lambda \to+\infty.
\end{equation}
It is then readily seen from the above two estimates and large $\lambda$ behavior of $\Phi_{0}(-\lambda)$ in \eqref{eq-approx-Phi0(-xi)} that as $\lambda \to +\infty$,
\begin{equation}\label{eq-approx-Phi20r}
\Theta_{2}(r)=
\mathcal{O}(\lambda^{\frac{k}{3}-1+\delta}) \quad \textrm{and} \quad \Theta_{3}(r)=
\mathcal{O}(\lambda^{k-\frac{1}{2}+\delta}),
\end{equation}
uniformly for $\alpha_{k}r^{2k+1}+y\in[-C_{1}\lambda^{-k-1+\delta},0]$. To estimate $\Theta_1$, we see from \eqref{eq-approx-dF/dtau-three-parts} and \eqref{eq-f2'0} that 
\begin{align}\label{eq-approx-Phi10r}
\Theta_{1}(r)&=\frac{\lambda^{\frac{4k+3}{3}}}{2\pi i\lambda}\frac{\partial\kappa_{0}}{\partial r}\left(\Phi_{1}(\lambda^{\frac{4k+3}{3}}\kappa_{0})\right)_{21}\left(1+\mathcal{O}(\lambda^{-k-1+\delta})\right) \nonumber \\
&=\frac{1}{2\pi i}\frac{\partial \chi(s;x)}{\partial s}\left(\Phi_{1}(\chi(s;x)\right)_{21}\left(1+\mathcal{O}(\lambda^{-k-1+\delta})\right), \qquad \lambda \to +\infty,
\end{align}
where $\chi(s;x)$ is defined in \eqref{eq-def-chi(tau)} with $s=r\lambda$ and $x=y\lambda^{2k+1}$. 

Finally, substituting \eqref{eq-approx-Phi20r} and \eqref{eq-approx-Phi10r} into \eqref{eq-approx-dF/dtau-three-parts-1}, we obtain the  desired result in \eqref{eq-dF/dtau-case-II} by making use of the fact in \eqref{eq-Airy-kernel-PII-Hamiltonian} and by setting $\delta=1/6$ and $y\neq 0$ fixed. 

\subsubsection*{The case when  $\alpha_{k}r^{2k+1}+y\in[0, C_{1}\lambda^{-k-1+\delta}]$} 
This part is an outcome of the analysis carried out in
Section \ref{sec:RH-analysis-case-II+}. Similar to the derivation of \eqref{eq-approx-dF/dtau-three-parts-1}, one can show in this case that 
\begin{eqnarray}
    \frac{\partial F}{\partial s}(s;x)&=&\left.\frac{1}{2 \pi i\lambda}\left(e^{\lambda^{\frac{4k+3}{2}}g_2(\eta)\sigma_3}\tilde{P}(\eta)^{-1}\tilde{P}(\eta)'e^{-\lambda^{\frac{4k+3}{2}}g_2(\eta)\sigma_3}\right)_{21}\right|_{\eta \to r}
    \label{eq-tilde-approx-dF/dtau-three-parts} \\
&=&\tilde\Theta_{1}(r)+\tilde\Theta_{2}(r)+\tilde\Theta_{3}(r), \nonumber
\end{eqnarray}
where
\begin{align}
\tilde\Theta_{1}(r)&=\frac{\lambda^{\frac{4k+3}{3}}}{2\pi i\lambda}
f_3'(r;r_0)\left(\Phi_{1}\left(\lambda^{\frac{4k+3}{3}}f_{3}(r;r_{0})\right)\right)_{21},\\
\tilde\Theta_{2}(r)&=\frac{1}{2\pi i\lambda}\left(\tilde{\Phi}_{0}\left(\lambda^{\frac{4k+3}{3}}f_{3}(r;r_{0}))\right)^{-1}\tilde{\Xi}_{1}(r)\tilde{\Phi}_{0}\left(\lambda^{\frac{4k+3}{3}}f_{3}(r;r_{0})\right)\right)_{21},\\
\tilde\Theta_{3}(r)&=\frac{1}{2\pi i\lambda}\left(\tilde{\Phi}_{0}\left(\lambda^{\frac{4k+3}{3}}f_{3}(r;r_{0}))\right)^{-1}\tilde{\Xi}_{2}(r)\tilde{\Phi}_{0}\left(\lambda^{\frac{4k+3}{3}}f_{3}(r;r_{0})\right)\right)_{21}.
\end{align}
Here, $\Phi_1$, $\tilde{\Phi}_0$, $f_3$ are given in \eqref{eq-tilde-Phi-0}, \eqref{eq-tilde-Phi0(x)} and \eqref{eq-def-tilde-f2}, respectively, $\tilde\Xi_{1}(\eta)=\tilde{E}_{2}(\eta)^{-1}\tilde{E}_{2}'(\eta)$ and $\tilde\Xi_{2}(\eta)=\tilde{E}_{2}(\eta)^{-1}R(\eta)^{-1}R'(\eta)\tilde{E}_{2}(\eta)$.

In view of the definition of $\tilde{E}_{2}$ in \eqref{def:tildeE2} and the estimates of $R(r)$ and $R'(r)$ in \eqref{eq-approximation-R}, it follows that
\begin{equation}
\tilde{\Xi}_{1}(r)=\mathcal{O}(1),\qquad \tilde{\Xi}_{2}(r)=\mathcal{O}(\lambda^{-1}), \qquad \lambda\to+\infty.
\end{equation}
This, together with large $\lambda$ asymptotics of $\tilde{\Phi}_{0}(\lambda)$ in \eqref{eq-asym-tilde-Phi0} and the fact that $f_{3}(r;r_{0})=\mathcal{O}(\lambda^{-k-1+\delta})$, implies that
\begin{equation}\label{eq-approx-tilde-Theta-2-3}
\tilde{\Theta}_{2}(r)=\mathcal{O}\left(\lambda^{\frac{k}{6}+\frac{\delta}{2}-1}\right),\qquad \tilde{\Theta}_{3}(r)=\mathcal{O}\left(\lambda^{\frac{k}{6}+\frac{\delta
}{2}-2}\right)
\end{equation}
as $\lambda \to +\infty$, uniformly for $\alpha_{k}r^{2k+1}+y\in[0,C_{1}\lambda^{-k-1+\delta}]$.

From the definition of $f_{3}$ in \eqref{eq-def-tilde-f2}, it is readily seen that 
\begin{equation}\label{eq:changvariable}
    \lambda^{\frac{4k+3}{3}}f_{3}(r;r_{0})=f_{3}(s;s_{0}) \quad \textrm{and} \quad \lambda^{\frac{4k}{3}} f_{3}'(r;r_{0})= f_{3}'(s;s_{0}),
\end{equation}
where $s=r\lambda$ and $s_{0}=r_{0}\lambda$ with $r_{0}$ and $s_{0}$ given in \eqref{eq-def-r0} and \eqref{eq-def-s0} respectively. Inserting \eqref{eq-approx-tilde-Theta-2-3} into \eqref{eq-tilde-approx-dF/dtau-three-parts}, we thus obtain \eqref{eq-dF/dtau-case-II} by \eqref{eq:changvariable}, \eqref{eq-Airy-kernel-PII-Hamiltonian}, and by setting $\delta=1/6$ and $y\neq 0$ fixed.

\section{Asymptotic analysis in the exponential decay region} \label{Sec:Exp}
In this section, we establish the large $\lambda$  asymptotics of $X(\lambda\eta)$ for $\alpha_{k}r^{2k+1}+y\in[C_{1}\lambda^{-k-1+\delta},+\infty)$ with $\delta\in(0,\frac{1}{4})$, which finally leads to the proof of Lemma \ref{lem-dF/dtau-case-III}. As discussed at the end of Section \ref{sec:diffident}, this case corresponds to the exponential decay region. It comes out the first transformation and the global parametrix are exactly the same as those given in Section \ref{sec:RH-analysis-case-II+}. The difference lies in the construction of local parametrix. Due to our assumption on $r$ and $y$, we have $|r-r_0|\geq C_{1}\lambda^{-k-1+\delta}$, which means that $r$ tends to $r_{0}$ slower than the case in Section \ref{sec:RH-analysis-case-II+}. Hence, we construct local parametrices around $\eta=r_0$ and $\eta=r$, respectively. In particular, the one near $\eta=r_0$ involves the Airy parametrix. 

\subsection{Riemann-Hilbert analysis when $\alpha_{k}r^{2k+1}+y\in[C_{1}\lambda^{-k-1+\delta},+\infty)$}
\label{sec:RH-analysis-case-III}

\paragraph{First transformation and global parametrix} 
We define $Q(\eta)=\tilde{P}(\eta)$, where $\tilde{P}$ is given in \eqref{def-P}. It then follows that $Q$ satisfies the following RH problem.

\paragraph{RH problem for $Q$}
\begin{enumerate}
\item[(a)] $Q(\eta)$ is analytic in $\mathbb{C} \setminus \Gamma^{Q}$, where $\Gamma^{Q}=\Gamma_{1}^{Q}\cup \Gamma_{2}^{Q}\cup \Gamma_{3}^{Q}\cup \Gamma_{4}^{Q}$ is shown in Fig \ref{fig:Q}.
\item[(b)] On $\Gamma_{Q}$, the limiting values $Q_{\pm}(\eta)$ exist and satisfy $Q_+(\eta)=Q_+(\eta)v_{Q}(\eta)$, where
\begin{eqnarray}\label{def:vQ}
v_{Q}(\eta):=
\begin{cases}
\begin{pmatrix}
1 & e^{-2\lambda^{\frac{4k+3}{2}} g_2(\eta)}\\
0 & 1
\end{pmatrix}, \quad & \eta \in \Gamma_1^{Q},\\
\begin{pmatrix}
1 & 0\\
e^{2\lambda^{\frac{4k+3}{2}} g_2(\eta)} & 1
\end{pmatrix}, \quad & \eta \in \Gamma_2^{Q} \cup \Gamma_4^{Q},\\
\begin{pmatrix}
0 & 1\\
-1 & 0
\end{pmatrix}, \quad & \eta \in \Gamma_3^{Q},
\end{cases}
\end{eqnarray}
with $g_2$ given in \eqref{def:g_2}. 
\item [(c)] As $\eta\to r$, we have $Q(\eta)=\mathcal{O}(\log{(\eta-r)})$.
\item[(d)] As $\eta \to \infty$, we have
\begin{equation}\label{eq:AsyQ}
Q(\eta)=\left(I + \Boh(\eta^{-1})\right)(\lambda\eta)^{-\frac 14 \sigma_3}N,
\end{equation}
where $N$ is defined in \eqref{eq-def-N}.

\end{enumerate}

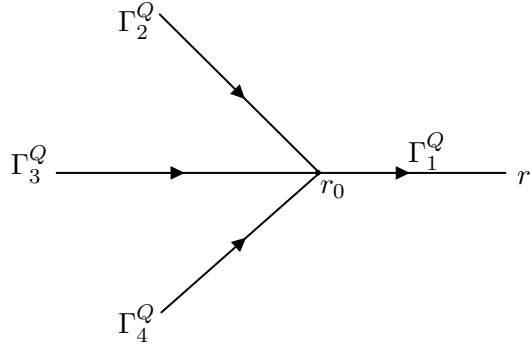
\begin{figure}[t]
\begin{center}
\tikzset{every picture/.style={line width=0.75pt}} 
\begin{tikzpicture}[x=0.75pt,y=0.75pt,yscale=-0.8,xscale=0.8]

\draw    (72,149) -- (223,149) -- (353,149) ;
\draw [shift={(152.5,149)}, rotate = 180] [fill={rgb, 255:red, 0; green, 0; blue, 0 }  ][line width=0.08]  [draw opacity=0] (8.93,-4.29) -- (0,0) -- (8.93,4.29) -- cycle    ;
\draw [shift={(293,149)}, rotate = 180] [fill={rgb, 255:red, 0; green, 0; blue, 0 }  ][line width=0.08]  [draw opacity=0] (8.93,-4.29) -- (0,0) -- (8.93,4.29) -- cycle    ;
\draw    (136.5,49) -- (236.5,149) ;
\draw [shift={(190.04,102.54)}, rotate = 225] [fill={rgb, 255:red, 0; green, 0; blue, 0 }  ][line width=0.08]  [draw opacity=0] (8.93,-4.29) -- (0,0) -- (8.93,4.29) -- cycle    ;
\draw    (137.5,237) -- (236.5,149) ;
\draw [shift={(190.74,189.68)}, rotate = 138.37] [fill={rgb, 255:red, 0; green, 0; blue, 0 }  ][line width=0.08]  [draw opacity=0] (8.93,-4.29) -- (0,0) -- (8.93,4.29) -- cycle    ;
\draw  [fill={rgb, 255:red, 0; green, 0; blue, 0 }  ,fill opacity=1 ] (236.67,148.83) .. controls (236.67,148.19) and (236.15,147.66) .. (235.5,147.66) .. controls (234.85,147.66) and (234.33,148.19) .. (234.33,148.83) .. controls (234.33,149.48) and (234.85,150) .. (235.5,150) .. controls (236.15,150) and (236.67,149.48) .. (236.67,148.83) -- cycle ;

\draw (235,152) node [anchor=north west][inner sep=0.75pt]   [align=left] {$r_0$};
\draw (358,146) node [anchor=north west][inner sep=0.75pt]   [align=left] {$r$};
\draw (109,40) node [anchor=north west][inner sep=0.75pt]   [align=left] {$\Gamma_{2}^{Q}$};
\draw (42,130) node [anchor=north west][inner sep=0.75pt]   [align=left] {$\Gamma_{3}^{Q}$};
\draw (109,230) node [anchor=north west][inner sep=0.75pt]   [align=left] {$\Gamma_{4}^{Q}$};
\draw (290,122) node [anchor=north west][inner sep=0.75pt]   [align=left] {$\Gamma_{1}^{Q}$};

\end{tikzpicture}
\caption{The jump contours of the RH problem for $Q$, where $r_{0}$ is defined in \eqref{eq-def-r0}.}
   \label{fig:Q}
\end{center}
\end{figure}

As mentioned in Section \ref{sec:RH-analysis-case-II+}, we know that $\Re g_2(\eta) < 0$ when $\eta \in \Gamma^{Q}_2 \cup \Gamma^{Q}_4$ and $\Re g_2(\eta) > 0$ when $\eta \in \Gamma^{Q}_1$. Moreover, from \eqref{eq-def-p2(eta)}, we see that $p_{2}(r_{0})>0$ is fixed, then it follows from \eqref{def:g_2} that there exist a constant $M_{4}>0$ such that
\begin{align}
\lambda^{\frac{4k+3}{2}}\Re g_{2}(\eta)&\leq-M_{4}\lambda^{\frac{k}{2}+\frac{3}{2}\delta},\quad \eta \in \Gamma_{2}^{Q} \cup \Gamma_4^{Q}, \label{g2-prop1-1} \\
\lambda^{\frac{4k+3}{2}}\Re g_{2}(\eta)&\geq M_{4}\lambda^{\frac{k}{2}+\frac{3}{2}\delta},\qquad \eta \in \Gamma_{1}^{Q} \label{g2-prop1-2}
\end{align}
provided that $|\eta-r_{0}|\geq\rho_{3}$, where $\rho_{3} = \Boh(\lambda^{-k-1+\delta})$ is a shrinking radius given in \eqref{eq-def-rho3}. Although the above approximation is slightly weaker than those in \eqref{g2-prop1} and \eqref{g2-prop2}, it is sufficient to guarantee 
$v_Q(\eta)\to I$ exponentially fast uniformly for $\eta\in\Gamma_{1}^{Q}\cup\Gamma_{2}^{Q}\cup\Gamma_{4}^{Q}$ and $|\eta-r_{0}|\geq\rho_{3}$. We thus have the global parametrix 
\begin{equation}\label{def-Q-infty}
Q^{(\infty)}(\eta)=(\lambda(\eta-r_{0}))^{-\frac 14 \sigma_3} N, \qquad \eta\in\mathbb{C}\setminus (-\infty, r_0],
\end{equation}
which satisfies the asymptotics \eqref{eq:AsyQ} and the jump condition
\begin{equation}
        Q^{(\infty)}_{+}(\eta)=Q^{(\infty)}_{-}(\eta)\left(\begin{matrix}
            0&1\\-1&0
        \end{matrix}\right), \qquad \eta\in(-\infty,r_{0}).
    \end{equation}

\paragraph{Local parametrices near $\eta=r_{0}$ and $\eta=r$}
Since $Q$ and $Q^{(\infty)}$ are not uniformly close to each other near $\eta=r_0$ and  $Q$ admits a logarithm singularity near $\eta=r$, we need to construct local parametrices around these two points. More precisely, let
\begin{equation}\label{eq-def-rho3}
   \rho_{3}=C_{3}\lambda^{-k-1+\delta},
\end{equation}
where  $C_{3}>0$ is a small positive constant, the local parametrices $Q^{(r_{0})}$ and $Q^{(r)}$ will be built in two shrinking disks $U(r_{0};\rho_{3})$ and $U(r;\rho_{3})$, respectively. Note that for large $\lambda$ and appropriately chosen $C_3$,  $U(r_{0};\rho_{3})\cap U(r;\rho_{3})= \emptyset$. 

The local parametrix in $U(r_{0}; \rho_3)$ reads as follows.
\paragraph{RH problem for $Q^{(r_{0})}$}
\begin{enumerate}
\item[(a)] $Q^{(r_{0})}(\eta)$ is analytic in $U(r_{0};\rho_{3}) \setminus \Gamma^{Q}$.
\item[(b)] On $U(r_{0};\rho_{3}) \cap \Gamma^{Q}$, the limiting values $Q^{(r_{0})}_{\pm}(\eta)$ exist and satisfy the jump condition
\begin{equation}
Q_{+}^{(r_{0})}(\eta)=Q_{-}^{(r_{0})}(\eta)\begin{cases}
\begin{pmatrix}
1 & e^{-2\lambda^{\frac{4k+3}{2}} g_2(\eta)}\\
0 & 1
\end{pmatrix}, \quad & \eta \in \Gamma^{Q}_{1}\cap U(r_{0};\rho_{3}),\\
\begin{pmatrix}
1 & 0\\
e^{2\lambda^{\frac{4k+3}{2}} g_2(\eta)} & 1
\end{pmatrix}, \quad & \eta \in (\Gamma^{Q}_{2} \cup \Gamma_{4}^{Q})\cap U(r_{0};\rho_{3}),\\
\begin{pmatrix}
0 & 1\\
-1 & 0
\end{pmatrix}, \quad & \eta \in \Gamma_{3}^{Q} \cap  U(r_{0};\rho_{3}).
\end{cases}
\end{equation}
\item[(c)] As $\lambda \to\infty$, $Q^{(r_{0})}$ satisfies the matching condition 
\begin{equation}
\label{eq-matching-Qr0-Qinfty}
    Q^{(r_{0})}(\eta)= (I+o(1))Q^{(\infty)}(\eta), \qquad \eta \in \partial U(r_{0};\rho_{3}),
\end{equation}
where $Q^{(\infty)}$ is given in \eqref{def-Q-infty}. 
\end{enumerate}
The conformal mapping we used here is the same as that used in Section \ref{sec:RH-analysis-case-II+}, {\it i.e.}
\begin{equation}
f_{3}(\eta;r_{0})=\left(\frac{3}{2} g_2(\eta)\right)^{\frac 23},
\end{equation}
which behaves like
\begin{equation}\label{eq-def-f3(eta)}
f_{3}(\eta;r_{0})=\left(\frac{3p_{2}(r_{0})}{2}\right)^{\frac{2}{3}}(\eta-r_{0}) + \Boh((\eta-r_{0})^2), \qquad \textrm{as} \quad \eta \to r_{0},
\end{equation}
with $p_{2}(r_{0})>0$. The local parametrix $Q^{(r_{0})}(\eta)$ is given by
\begin{equation}\label{def:Q-cs}
Q^{(r_{0})}(\eta)=E_3(\eta) \Phi^{(\Ai)}(\lambda^{\frac{4k+3}{3}} f_{3}(\eta;r_{0}))\exp\left\{\lambda^{\frac{4k+3}{2}}g_2(\eta)\sigma_3\right\},
\end{equation}
where
\begin{equation}
E_3(\eta)=(\lambda(\eta-r_{0}))^{-\frac{1}{4} \sigma_3}e^{-\frac{\pi i}{4} \sigma_3}(\lambda^{\frac{4k+3}{3}} f_{3}(\eta;r_{0}))^{\frac{1}{4} \sigma_3}
\end{equation}
and $\Phi^{(\Ai)}$ is the classical Airy parametrix (cf. \cite{DKMVZ99}). Recall that $\Phi^{(\Ai)}$ satisfies the jump condition
\begin{equation}\label{eq:Airyjump}
\Phi^{(\Ai)}_+(z)=\Phi^{(\Ai)}_-(z)\begin{cases}
\begin{pmatrix}
0 & 1\\
-1 & 0
\end{pmatrix}, \quad & z<0,\\
\begin{pmatrix}
1 & 1\\
0 & 1
\end{pmatrix}, \quad & z>0,\\
\begin{pmatrix}
1 & 0\\
1 & 1
\end{pmatrix}, \quad & z\in e^{\frac{2\pi i}{3}}(0,+\infty), \\
\begin{pmatrix}
1 & 0\\
1 & 1
\end{pmatrix}, \quad & z\in e^{\frac{-2\pi i}{3}}(0,+\infty),
\end{cases}
\end{equation}
and the asymptotic behavior
\begin{align}\label{infty:Ai}
\Phi^{({\Ai})}(z) = \frac{1}{\sqrt{2}} z^{-\frac14 \sigma_3}\begin{pmatrix}
1 & i\\
i & 1
\end{pmatrix}\left(I + \Boh(z^{-\frac 32})\right)e^{-\frac 23 z^{3/2} \sigma_3}, \qquad z\to \infty,
\end{align}
one can check that $Q^{(r_{0})}$ in \eqref{def:Q-cs} indeed solves the RH problem for $Q^{(r_{0})}$ by noting that the prefactor $E_3(\eta)$ is analytic in $U(r_0,\rho_3)$. 

Similarly, we need to solve the following local parametrix in $U(r;\rho_3)$.
\paragraph{RH problem for $Q^{(r)}$}
\begin{enumerate}
    \item [(a)] $Q^{(r)}(\eta)$ is analytic in $U(r;\rho_{3}) \setminus  [r_0,r]$.
    \item [(b)]  On $U(r;\rho_{3})$, the limiting values $Q^{(r)}_{\pm}(\eta)$ exist and satisfy the jump condition 
    \begin{eqnarray}
        Q_{+}^{(r)}(\eta)=Q_{-}^{(r)}(\eta)\left(\begin{matrix}
            1&e^{-2\lambda^{\frac{4k+3}{2}}g_{2}(\eta)}\\0&1
        \end{matrix}\right). 
    \end{eqnarray}
    \item [(c)] As $\eta\to r$, we have $Q^{(r)}(\eta)=\mathcal{O}(\log{(\eta-r)})$.
    \item [(d)] As $\lambda\to+\infty$, $Q^{(r)}$ satisfies the matching condition 
    \begin{eqnarray}
      Q^{(r)}(\eta)=(I+o(1))Q^{(\infty)}(\eta), \qquad \eta \in \partial U(r;\rho_{3}).
    \end{eqnarray}
\end{enumerate}

It is straightforward to check that 
\begin{equation}\label{eq-def-Qr(eta)}
    Q^{(r)}(\eta)=(\lambda(\eta-r_{0}))^{-\frac{\sigma_{3}}{4}}N\left(\begin{matrix}
        1&\mathcal{C}_{1}(\eta,\lambda)\\0&1
    \end{matrix}\right)
\end{equation}
solves the RH problem for $Q^{(r)}$, where 
\begin{equation}   \mathcal{C}_{1}\left(\eta,\lambda\right)=\frac{1}{2\pi i}\int_{\frac{r+r_{0}}{2}}^{r}\frac{e^{-2\lambda^{\frac{4k+3}{2}}g_{2}(\xi)}}{\xi-\eta}d\xi.
\end{equation}

\paragraph{Final transformation} As usual, the final transformation is defined by
\begin{equation}\label{def-R-case-III}
R(\eta)=\begin{cases}
Q(\eta)Q^{(r_{0})}(\eta)^{-1}, \quad & \eta \in U(r_{0}; \rho_3),\\
Q(\eta)Q^{(r)}(\eta)^{-1}, \quad & \eta \in U(r; \rho_3),\\
Q(\eta)Q^{(\infty)}(\eta)^{-1}, \quad & \eta \in \mathbb{C} \setminus \left(U(r_{0}; \rho_3)\cup U(r; \rho_3) \right).
\end{cases}
\end{equation}
Then it is readily seen that $R$ satisfies the following RH problem.
\paragraph{RH problem for $R$}
\begin{itemize}
\item[(a)] $R(\eta)$ is defined and analytic in $\mathbb{C} \setminus \Gamma^R$, where
$\Gamma^R$ is shown in Figure \ref{fig:R-case-III}.
\item[(b)] On $\Sigma^{R}$, the limiting values $R_{\pm}(\eta)$ exist and satisfy the jump condition
\begin{align}
R_+(\eta)=R_-(\eta) v_R(\eta), \qquad \eta\in \Gamma^R,
\end{align}
where
\begin{align}\label{jump:R-cs}
v_R(\eta) = \begin{cases}
Q^{(r_{0})}(\eta)Q^{(\infty)}(\eta)^{-1}, & \qquad \eta \in \partial U(r_{0}; \rho_3),\\
Q^{(r)}(\eta)Q^{(\infty)}(\eta)^{-1}, & \qquad \eta \in \partial U(r; \rho_3),\\
Q^{(\infty)}(\eta)v_{Q}(\eta)Q^{(\infty)}(\eta)^{-1}, & \qquad \eta \in \Gamma^R \setminus (\partial U(r_{0}; \rho_3)\cup \partial U(r; \rho_3)),
\end{cases}
\end{align}
with $v_Q$ given in \eqref{def:vQ}. 
\item[(c)] As $\eta \to \infty$, we have
\begin{align}
R(\eta) = I + \Boh(\eta^{-1}).
\end{align}
\end{itemize}

When $\eta \in \Gamma^R \setminus \partial U(r_{0}; \rho_3)$, we have from \eqref{g2-prop1-1} and \eqref{g2-prop1-2} that
\begin{equation}
v_R (\eta)=I + \Boh(\lambda^{\frac{k-\delta}{2}}e^{-M_{4}\lambda^{\frac{k}{2}+\frac{3}{2}\delta}}), \qquad \lambda \to +\infty,
\end{equation}
for some constant $M_{4}>0$. For $\eta \in \partial U(r_{0}; \rho_3)$, substituting the large $z$ asymptotic expansion of $\Phi^{({\Ai})}(z)$ \eqref{infty:Ai} into \eqref{def:Q-cs} and \eqref{jump:R-cs}, we have
\begin{align}
v_R (\eta)&=Q^{(r_{0})}Q^{(\infty)}(\eta)^{-1} \nonumber \\
&=(\lambda(\eta-r_{0}))^{-\frac{1}{4}\sigma_{3}}\left(I + \Boh(\lambda^{-\frac{4k+3}{2}}f_{3}(\eta;r_{0})^{-\frac{3}{2}})\right)(\lambda(\eta-r_{0}))^{-\frac{1}{4}\sigma_{3}}
\nonumber \\
&=I+\Boh(\lambda^{-2k-2}(\eta-r_{0})^{-2}) =I+\Boh(\lambda^{-2\delta}), \qquad \lambda\to+\infty.
\end{align}
By a standard argument of the small norm RH problem, we conclude that, as $\lambda\to +\infty$,
\begin{align}\label{eq:Rexp-S6}
R(\eta)=I+\Boh(\lambda^{-2\delta}), \quad R'(\eta)=\Boh(\lambda^{k+1-3\delta}),\quad \eta\in \mathbb{C}\setminus \partial U(r_{0};\rho_{3}).
\end{align}

\begin{figure}
\centering

\tikzset{every picture/.style={line width=0.75pt}} 

\begin{tikzpicture}[x=0.75pt,y=0.75pt,yscale=-1,xscale=1]

\draw   (135.83,165.43) .. controls (135.83,154.46) and (144.27,145.57) .. (154.69,145.57) .. controls (165.1,145.57) and (173.54,154.46) .. (173.54,165.43) .. controls (173.54,176.4) and (165.1,185.29) .. (154.69,185.29) .. controls (144.27,185.29) and (135.83,176.4) .. (135.83,165.43) -- cycle ;
\draw    (52.54,117) -- (141.6,151.23) ;
\draw    (52.54,217.47) -- (140.6,179.23) ;
\draw   (242.69,166.14) .. controls (242.69,156.2) and (251.32,148.14) .. (261.97,148.14) .. controls (272.62,148.14) and (281.26,156.2) .. (281.26,166.14) .. controls (281.26,176.08) and (272.62,184.14) .. (261.97,184.14) .. controls (251.32,184.14) and (242.69,176.08) .. (242.69,166.14) -- cycle ;
\draw    (173.54,165.43) -- (242.69,166.14) ;
\draw  [fill={rgb, 255:red, 0; green, 0; blue, 0 }  ,fill opacity=1 ] (152.11,165.43) .. controls (152.11,164.7) and (152.68,164.11) .. (153.4,164.11) .. controls (154.11,164.11) and (154.69,164.7) .. (154.69,165.43) .. controls (154.69,166.16) and (154.11,166.75) .. (153.4,166.75) .. controls (152.68,166.75) and (152.11,166.16) .. (152.11,165.43) -- cycle ;
\draw  [fill={rgb, 255:red, 0; green, 0; blue, 0 }  ,fill opacity=1 ] (262,166.31) .. controls (262,165.58) and (262.58,164.99) .. (263.29,164.99) .. controls (264,164.99) and (264.58,165.58) .. (264.58,166.31) .. controls (264.58,167.03) and (264,167.62) .. (263.29,167.62) .. controls (262.58,167.62) and (262,167.03) .. (262,166.31) -- cycle ;
\draw    (52.54,117) -- (94.27,133.04) ;
\draw [shift={(97.07,134.12)}, rotate = 201.03] [fill={rgb, 255:red, 0; green, 0; blue, 0 }  ][line width=0.08]  [draw opacity=0] (8.93,-4.29) -- (0,0) -- (8.93,4.29) -- cycle    ;
\draw    (52.54,217.47) -- (93.82,199.55) ;
\draw [shift={(96.57,198.35)}, rotate = 156.53] [fill={rgb, 255:red, 0; green, 0; blue, 0 }  ][line width=0.08]  [draw opacity=0] (8.93,-4.29) -- (0,0) -- (8.93,4.29) -- cycle    ;
\draw    (154.69,145.57) -- (161.88,147.4) ;
\draw [shift={(164.79,148.14)}, rotate = 194.28] [fill={rgb, 255:red, 0; green, 0; blue, 0 }  ][line width=0.08]  [draw opacity=0] (8.93,-4.29) -- (0,0) -- (8.93,4.29) -- cycle    ;
\draw    (205.77,165.62) -- (211.11,165.79) ;
\draw [shift={(214.11,165.88)}, rotate = 181.82] [fill={rgb, 255:red, 0; green, 0; blue, 0 }  ][line width=0.08]  [draw opacity=0] (8.93,-4.29) -- (0,0) -- (8.93,4.29) -- cycle    ;
\draw    (265.6,148.62) -- (271.45,150.75) ;
\draw [shift={(274.27,151.78)}, rotate = 200.07] [fill={rgb, 255:red, 0; green, 0; blue, 0 }  ][line width=0.08]  [draw opacity=0] (8.93,-4.29) -- (0,0) -- (8.93,4.29) -- cycle    ;

\draw (154.71,158.49) node [anchor=north west][inner sep=0.75pt]   [align=left] {{\footnotesize {\fontfamily{ptm}\selectfont $r_{0}$}}};
\draw (264.65,158.49) node [anchor=north west][inner sep=0.75pt]   [align=left] {{\fontfamily{ptm}\selectfont {\footnotesize $r$}}};

\end{tikzpicture}

\label{fig:R-case-III}
\caption{The jump contour of the RH problem for $R$ in Section \ref{sec:RH-analysis-case-III}.}
\end{figure}
\subsection{Proof of Lemma \ref{lem-dF/dtau-case-III}} 
\label{sec:case-III-proof}
Since $\zeta=\lambda\eta$, $s=r\lambda$, we find that $\zeta\to s$ yields $\eta\to r$. As mentioned at the beginning of this section, we take $Q(\eta) = \tilde{P}(\eta)$. It then follows from \eqref{eq-tilde-approx-dF/dtau-three-parts} that
\begin{equation}\label{eq-dF/ds-Q}
\begin{split}
\frac{\partial F}{\partial s}(s;x)=&\left.\frac{1}{2 \pi i\lambda}\left(e^{\lambda^{\frac{4k+3}{2}}g_2(\eta)\sigma_3}Q(\eta)^{-1}Q(\eta)'e^{-\lambda^{\frac{4k+3}{2}}g_2(\eta)\sigma_3}\right)_{21}\right|_{\eta \to r}\\
=&\left.\frac{1}{2 \pi i\lambda}e^{-2\lambda^{\frac{4k+3}{2}}g_{2}(\eta)}\tilde{Q}_{21}(\eta)\right|_{\eta\to r},
\end{split}
\end{equation}
where 
\begin{equation*}
\tilde{Q}(\eta)=
Q^{(r)}(\eta)^{-1}R (\eta)^{-1}R (\eta)'Q^{(r)}(\eta)+Q^{(r)}(\eta)^{-1}Q^{(r)}(\eta)'.
\end{equation*}
Making use of the explicit expression of $Q^{(r)}(\eta)$ in \eqref{eq-def-Qr(eta)} and the approximations of $R(\eta),R'(\eta)$ in \eqref{eq:Rexp-S6}, one can easily find that $\tilde{Q}_{21}(r)$ exhibits only algebraic growth as $\lambda\to+\infty$ and $g_{2}(r)>0$. Hence, we derive from \eqref{eq-dF/ds-Q} that
\begin{equation}
\frac{\partial F}{\partial s}(s; x)=\Boh\left(e^{-\lambda^{\frac{4k+3}{2}}g_{2}(r)}\right)
\end{equation}
as $\lambda\to+\infty$. Observing that $x=y\lambda^{2k+1}, s=r\lambda$ and the definitions of $s_{0}, r_{0}$ in \eqref{eq-def-s0}, \eqref{eq-def-r0} respectively, one can easy to find that $s_{0}=\lambda r_{0}$. Moreover, by fixing $y\neq 0$, we can further obtain 
\begin{equation}\label{dF/ds-exponential-case-III}
\frac{\partial F}{\partial s}(s; x)=\Boh\left(e^{-My^{\frac{4k+3}{4k+2}}\lambda^{\frac{4k+3}{2}}(r-r_{0})^{\frac{3}{2}}}\right)
\end{equation}
as $\lambda\to+\infty$ for some fixed constant $M>0$. This implies \eqref{eq-dF/ds-case-III} immediately. With this, the proof of Lemma \ref{lem-dF/dtau-case-III} is concluded.

\appendix

\section{The Painlev\'{e} XXXIV ($\mathrm{P_{34}}$)  parametrix}
\label{sec-appendix-p34-parametrix}

The $\mathrm{P_{34}}$ parametrix used in the present work is defined by the following RH problem.
\subsubsection*{RH problem for $\Phi$}
\begin{enumerate}
\item [(a)] $\Phi(\zeta;x)$ is analytic for $\zeta\in\mathbb{C}\setminus\Gamma$, where $x\in \mathbb{R}$ and $ \Gamma:=\cup_{j=1}^{3}\Gamma_{j}$ is shown in Figure \ref{fig:Phi}. 
\item [(b)] On $\Gamma$, the limiting values $\Phi_{\pm}(\zeta;x)$ exist and satisfy the  jump condition
\begin{equation}\label{eq-jump-Phi}
\Phi_{+}(\zeta;x)=\Phi_{-}(\zeta;x)\begin{cases}
\begin{pmatrix}
1&0\\1&1
\end{pmatrix},& \zeta\in \Gamma_{1}\cup\Gamma_{3},
\\
\begin{pmatrix}
0&1\\-1&0
\end{pmatrix},&\zeta\in \Gamma_2.
\end{cases}
\end{equation} 
\item [(c)] As $\zeta\to\infty$, we have 
\begin{equation}\label{eq-Phi-asym-infty}
\Phi(\zeta;x)=\left(I+\frac{\Phi_{-1}}{\zeta}+\mathcal{O}(\zeta^{-2})\right)\zeta^{-\frac{1}{4}\sigma_{3}}\frac{I+i\sigma_{1}}{\sqrt{2}}e^{-\left(\frac{2}{3}\zeta^{\frac{3}{2}}+x\zeta^{\frac{1}{2}}\right)\sigma_{3}},
\end{equation}
where $\Phi_{-1}$ is independent of $\zeta$. 
\item [(d)] As $\zeta\to 0$, we have
\begin{equation}\label{eq-appendix-Phi-zeto-to-0}
\Phi(\zeta;x)=\Phi^{(0)}(\zeta;x)\left(\begin{matrix}
1&\frac{\log{\zeta}}{2\pi i}\\0&1
\end{matrix}\right)\begin{cases}
I,&\zeta\in \mathrm{I},\\
\left(\begin{matrix}
1&0\\-1&1
\end{matrix}\right),&\zeta\in \mathrm{II},\\
\left(\begin{matrix}
1&0\\1&1
\end{matrix}\right),&\zeta\in \mathrm{III},
\end{cases}
\end{equation}
where regions I--III are shown in Figure \ref{fig:Phi} and $\Phi^{(0)}$ is analytic near the origin satisfying
\begin{equation}\label{eq-Phi-0}
\Phi^{(0)}(\zeta;x)=\Phi_{0}(x)(I+\Phi_{1}(x)\zeta+\mathcal{O}(\zeta^2)), \qquad \zeta \to 0. 
\end{equation}
\end{enumerate}

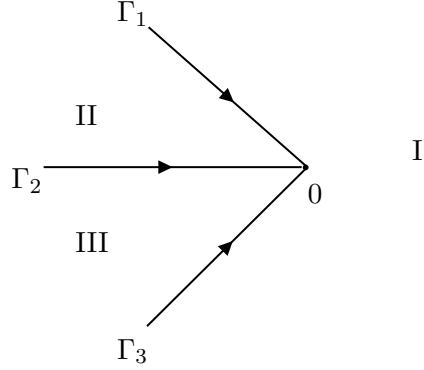
\begin{figure}[t]
\begin{center}
\tikzset{every picture/.style={line width=0.75pt}} 
\begin{tikzpicture}[x=0.75pt,y=0.75pt,scale=0.8]

\draw    (72,149) -- (233,149);
\draw [shift={(152.5,149)}, rotate = 180] [fill={rgb, 255:red, 0; green, 0; blue, 0 }  ][line width=0.08]  [draw opacity=0] (8.93,-4.29) -- (0,0) -- (8.93,4.29) -- cycle    ;
\draw    (136.5,49) -- (236.5,149) ;
\draw [shift={(190.04,102.54)}, rotate = 225] [fill={rgb, 255:red, 0; green, 0; blue, 0 }  ][line width=0.08]  [draw opacity=0] (8.93,-4.29) -- (0,0) -- (8.93,4.29) -- cycle    ;
\draw    (137.5,237) -- (236.5,149) ;
\draw [shift={(190.74,189.68)}, rotate = 138.37] [fill={rgb, 255:red, 0; green, 0; blue, 0 }  ][line width=0.08]  [draw opacity=0] (8.93,-4.29) -- (0,0) -- (8.93,4.29) -- cycle    ;
\draw  [fill={rgb, 255:red, 0; green, 0; blue, 0 }  ,fill opacity=1 ] (236.67,148.83) .. controls (236.67,148.19) and (236.15,147.66) .. (235.5,147.66) .. controls (234.85,147.66) and (234.33,148.19) .. (234.33,148.83) .. controls (234.33,149.48) and (234.85,150) .. (235.5,150) .. controls (236.15,150) and (236.67,149.48) .. (236.67,148.83) -- cycle ;

\draw (235,140) node [anchor=north west][inner sep=0.75pt]   [align=left] {$0$};
\draw (116,43) node [anchor=north west][inner sep=0.75pt]   [align=left] {$\Gamma_3$};
\draw (50,150) node [anchor=north west][inner sep=0.75pt]   [align=left] {$\Gamma_2$};
\draw (116,255) node [anchor=north west][inner sep=0.75pt]   [align=left] {$\Gamma_1$};

\draw (90,190) node [anchor=north west][inner sep=0.75pt]   [align=left] {II};
\draw (90,110) node [anchor=north west][inner sep=0.75pt]   [align=left] {III};
\draw (300,168) node [anchor=north west][inner sep=0.75pt]   [align=left] {I};

\end{tikzpicture}
\caption{The jump contour $\Gamma$ of the RH problem for $\Phi$.}
   \label{fig:Phi}
\end{center}
\end{figure}

The above RH problem is actually a special case of the general $\mathrm{P_{34}}$ parametrix which depends on two parameters. By \cite{ikj2008,ikj2009,XZ2011}, $\Phi$ exists uniquely. Moreover, with $\Phi_{-1}$ given in  \eqref{eq-Phi-asym-infty},  
$$
u(x):=-\frac{x}{2}-i(\Phi_{-1})_{12}'(x)
$$
satisfies the $\mathrm{P_{34}}$ equation
$$
u''(x)=4u(x)^2+2su(x)+\frac{u'(x)^2}{2u(x)}
$$
and is pole-free on the real axis.

It is worthwhile noting that $\Phi$ is also closely related to the Airy determinant. Indeed, by \cite[Equation (2.17)]{CIK10}, it follows that (up to a shift) 
\begin{equation}
\label{eq-differential-identity-P34} 
\frac{d}{dx}\log\det(I-\mathbb{K}_{x}^{\mathrm{Ai}})=\frac{1}{2\pi i}\lim\limits_{\zeta\to 0}\left(\Phi(\zeta;x)^{-1}\frac{\partial}{\partial \zeta}\Phi(\zeta;x)\right)_{21}=\frac{1}{2\pi i}(\Phi_{1}(x))_{21},
\end{equation}
where the second equality follows from \eqref{eq-Phi-0}. On the other hand, it is well-known that 
\begin{equation}\label{eq-Airy-kernel-PII-Hamiltonian}
\log\det(I-\mathbb{K}_{x}^{\mathrm{Ai}})=-\int_{x}^{+\infty}H_{\mathrm{P_{II}}}(\xi)d\xi,
\end{equation}
where $H_{\mathrm{P_{II}}}(x)$ is the Hamiltonian (also known as the Jimbo-Miwa-Okamoto $\sigma$ form) associated with the Hastings-McLeod solution $q$ of Painlev\'{e} II equation \eqref{def:PII}; cf. \cite{FW2001,TW94}. A combination of the above two formulas gives us
\begin{equation}\label{eq-tracy-widom-formula}
\begin{split}
\log\det(I-\mathbb{K}_{x}^{\mathrm{Ai}})=-\int_{x}^{+\infty}\frac{1}{2\pi i}(\Phi_{1}(\xi))_{21}d\xi=-\int_{x}^{+\infty}H_{\mathrm{P_{II}}}(\xi)d\xi.
\end{split}
\end{equation}

We conclude this section by introducing a variant of the $\mathrm{P_{34}}$ parametrix, which will be used in Section  \ref{sec:RH-analysis-case-II+}. It is defined by
\begin{equation}\label{eq-def-tilde-Phi}
\tilde{\Phi}(\zeta;x)=\left(\begin{matrix}
1&0\\ix^2/4&1
\end{matrix}\right)\Phi(\zeta-x;x)
\begin{cases}
I,& \zeta\in \mathrm{I},\\
\left(\begin{matrix}
1&0\\1&1
\end{matrix}\right), & \zeta\in \mathrm{II'},\\
\left(\begin{matrix}
1&0\\-1&1
\end{matrix}\right), &  \zeta\in \mathrm{III'},
\end{cases}
\end{equation}
where regions I, $\mathrm{II'}$ and $\mathrm{III'}$ are illustrated in Figure \ref{fig:tilde{Phi}} and we choose $x>0$. It is then readily seen that $\tilde{\Phi}$ satisfies the following RH problem. 

\subsubsection*{RH problem for $\tilde{\Phi}$}
\begin{enumerate}
\item [(a)] $\tilde{\Phi}(\zeta;x)$ is analytic for $\zeta\in\mathbb{C}\setminus\tilde{\Sigma}$, where $\tilde{\Sigma}:=\cup_{j=1}^4 \tilde{\Gamma}_{j} $ is shown in Figure \ref{fig:tilde{Phi}}.
\item [(b)] On $\tilde{\Sigma}$, the limiting values $\tilde{\Phi}_{\pm}(\zeta;x)$ exist and satisfy the jump condition
\begin{equation}
\tilde{\Phi}_{+}(\zeta;x)=\tilde{\Phi}_{-}(\zeta;x)
\begin{cases}
\left(\begin{matrix}
1&1\\0&1
\end{matrix}\right), &\zeta\in\tilde{\Gamma}_{1},\\
\left(\begin{matrix}
1&0\\1&1
\end{matrix}\right),&\zeta\in\tilde{\Gamma}_{2}\cup \tilde{\Gamma}_{4},\\
\left(\begin{matrix}
0&1\\-1&0
\end{matrix}\right), &\zeta\in\tilde{\Gamma}_{3}.
\end{cases}
\end{equation}

\item [(c)] As $\zeta\to\infty$, we have
\begin{equation}\label{eq-tilde-Phi-asym-infty}
\tilde{\Phi}(\zeta;x)=\left(I+\mathcal{O}(\zeta^{-1})\right)\zeta^{-\frac{1}{4}\sigma_{3}}\frac{I+i\sigma_{1}}{\sqrt{2}}e^{-\frac{2}{3}\zeta^{\frac{3}{2}}\sigma_{3}},
\end{equation}
uniformly for any positive bounded $x$.

\item [(d)] As $\zeta\to x$, we have
\begin{equation}\label{eq-appendix-tilde-Phi-zeto-to-0}
\tilde{\Phi}(\zeta;x)=\tilde\Phi^{(0)}(\zeta;x)\left(\begin{matrix}
1&\frac{\log{(\zeta-x)}}{2\pi i}\\0&1
\end{matrix}\right),
\end{equation}
and
\begin{equation}\label{eq-tilde-Phi-0}
\tilde{\Phi}^{(0)}(\zeta;x)=\tilde{\Phi}_{0}(x)(I+\Phi_{1}(x)(\zeta-x)+\mathcal{O}((\zeta-x)^2)).
\end{equation}
Here,
\begin{equation}\label{eq-tilde-Phi0(x)}
\tilde{\Phi}_{0}(x)=\left(\begin{matrix}
1&0\\ ix^2/4 & 1
\end{matrix}\right)\Phi_{0}(x), 
\end{equation} 
and $\Phi_{i}$, $i=0,1$, are given in \eqref{eq-Phi-0}.

\end{enumerate}

\begin{figure}[t]
\begin{center}
\tikzset{every picture/.style={line width=0.75pt}} 
\begin{tikzpicture}[x=0.75pt,y=0.75pt,yscale=-0.8,xscale=0.8]

\draw    (72,149) -- (223,149) -- (353,149) ;
\draw [shift={(152.5,149)}, rotate = 180] [fill={rgb, 255:red, 0; green, 0; blue, 0 }  ][line width=0.08]  [draw opacity=0] (8.93,-4.29) -- (0,0) -- (8.93,4.29) -- cycle    ;
\draw [shift={(293,149)}, rotate = 180] [fill={rgb, 255:red, 0; green, 0; blue, 0 }  ][line width=0.08]  [draw opacity=0] (8.93,-4.29) -- (0,0) -- (8.93,4.29) -- cycle    ;
\draw    (136.5,49) -- (236.5,149) ;
\draw [shift={(190.04,102.54)}, rotate = 225] [fill={rgb, 255:red, 0; green, 0; blue, 0 }  ][line width=0.08]  [draw opacity=0] (8.93,-4.29) -- (0,0) -- (8.93,4.29) -- cycle    ;
\draw    (137.5,237) -- (236.5,149) ;
\draw [shift={(190.74,189.68)}, rotate = 138.37] [fill={rgb, 255:red, 0; green, 0; blue, 0 }  ][line width=0.08]  [draw opacity=0] (8.93,-4.29) -- (0,0) -- (8.93,4.29) -- cycle    ;
\draw  [fill={rgb, 255:red, 0; green, 0; blue, 0 }  ,fill opacity=1 ] (236.67,148.83) .. controls (236.67,148.19) and (236.15,147.66) .. (235.5,147.66) .. controls (234.85,147.66) and (234.33,148.19) .. (234.33,148.83) .. controls (234.33,149.48) and (234.85,150) .. (235.5,150) .. controls (236.15,150) and (236.67,149.48) .. (236.67,148.83) -- cycle ;
\draw  [dash pattern={on 4.5pt off 4.5pt}]  (253,49) -- (353,149) ;
\draw  [dash pattern={on 4.5pt off 4.5pt}]  (254,237) -- (353,149) ;

\draw (235,152) node [anchor=north west][inner sep=0.75pt]   [align=left] {$0$};
\draw (358,146) node [anchor=north west][inner sep=0.75pt]   [align=left] {$x$};
\draw (388,86) node [anchor=north west][inner sep=0.75pt]   [align=left] {$\mathrm{I}$};
\draw (116,43) node [anchor=north west][inner sep=0.75pt]   [align=left] {$\tilde\Gamma_2$};
\draw (53,142) node [anchor=north west][inner sep=0.75pt]   [align=left] {$\tilde\Gamma_3$};
\draw (116,234) node [anchor=north west][inner sep=0.75pt]   [align=left] {$\tilde\Gamma_4$};
\draw (244,85) node [anchor=north west][inner sep=0.75pt]   [align=left] {$\mathrm{II'}$};
\draw (243,188) node [anchor=north west][inner sep=0.75pt]   [align=left] {$\mathrm{III'}$};
\draw (104,85) node [anchor=north west][inner sep=0.75pt]   [align=left] {$\mathrm{II}$};
\draw (103,188) node [anchor=north west][inner sep=0.75pt]   [align=left] {$\mathrm{III}$};
\draw (295,126) node [anchor=north west][inner sep=0.75pt]   [align=left] {$\tilde\Gamma_1$};

\end{tikzpicture}
\caption{The jump contour $\tilde{\Sigma}$ of the RH problem for $\tilde{\Phi}$.}
   \label{fig:tilde{Phi}}
\end{center}
\end{figure}
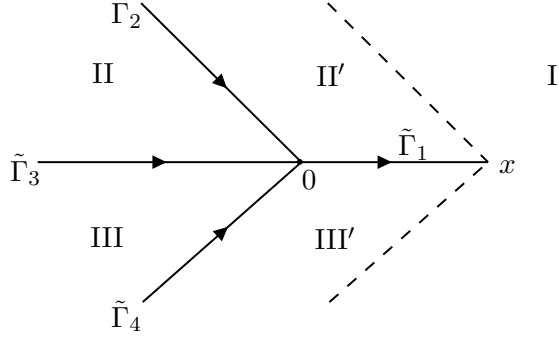

\section{Uniform asymptotics of the $\mathrm{P_{34}}$ parametrix}
\label{sec:uniform-asym-scaled-P34}

The main purpose of this section is to extend the asymptotics in \eqref{eq-Phi-asym-infty} and \eqref{eq-tilde-Phi-asym-infty} for bounded $x$ to the case where both $\zeta$ and $x$ are large. More precisely, as $\zeta \to \infty$, we will establish asymptotics of $\Phi(\zeta;x)$ and $\tilde \Phi(\zeta;x)$ for large negative and positive $x$, respectively. They are crucial in the Riemann-Hilbert analysis in Section \ref{sec:RH-analysis-case-II-} and \ref{sec:RH-analysis-case-II+}. The Riemann-Hilbert analysis for the $\mathrm{P_{34}}$ parametrix has already been documented in \cite{BCI-2016,ikj2009, XZ2011}, but the specific results we enumerated are not explicitly presented in the literature. To facilitate the reader’s understanding, we state the results in the form of lemmas and provide sketch of proofs.


\begin{lemma}\label{lem-appendix-1}
Let $\Phi$ be the $\mathrm{P_{34}}$ parametrix introduced in Appendix \ref{sec-appendix-p34-parametrix}. 
There exists a fixed constant $L>0$ such that, as $x\to-\infty$, 
\begin{equation}\label{eq-asym-PII-large-xi-negative}
\Phi(\zeta;x)=
\zeta^{-\frac{1}{4}\sigma_{3}}\left(I+\left(\begin{matrix}
    \mathcal{O}(x^{-2}\zeta^{-1}) & \mathcal{O}(x^{-1}\zeta^{-1/2})\\
    \mathcal{O}(x^{-3}\zeta^{-3/2}) & \mathcal{O}(x^{-2}\zeta^{-1})
\end{matrix}\right)\right)\frac{I+i\sigma_{1}}{\sqrt{2}}e^{-\left(\frac{2}{3}\zeta^{\frac{3}{2}}+x\zeta^{\frac{1}{2}}\right)\sigma_{3}},
\end{equation}
uniformly for all $|\zeta|\geq L|x|$.
\end{lemma}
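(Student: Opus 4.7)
The plan is to perform the Deift--Zhou nonlinear steepest descent analysis of the RH problem for $\Phi(\zeta;x)$ in the regime $x\to-\infty$, following the approach of \cite{BCI-2016,ikj2009,XZ2011}, while tracking the $\zeta$-dependence in the far field $|\zeta|\geq L|x|$ with sufficient precision to recover the asymmetric entry-wise bounds in \eqref{eq-asym-PII-large-xi-negative}.

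First I would rescale the spectral variable by setting $\zeta=|x|w$, so that the phase becomes $\theta(\zeta;x)=|x|^{3/2}\widetilde\theta(w)$ with $\widetilde\theta(w)=\tfrac{2}{3}w^{3/2}-w^{1/2}$, whose unique real turning point lies at $w_{0}=1/2$. The transformation
\begin{equation*}
T(w)=|x|^{\sigma_{3}/4}\Phi(|x|w;x)\,e^{|x|^{3/2}\widetilde\theta(w)\sigma_{3}}
\end{equation*}
normalizes the problem at infinity to $T(w)\to w^{-\sigma_{3}/4}\tfrac{I+i\sigma_{1}}{\sqrt{2}}$. After a standard lens opening around the segment $(0,w_{0})$ and around the contours $\Gamma_{1},\Gamma_{3}$, the jumps of $T$ decay exponentially in $|x|^{3/2}$ away from fixed neighborhoods of $w_{0}$ and the origin, while the residual jump on $(-\infty,0)$ remains the constant permutation matrix.

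Next I would construct the global parametrix $T^{(\infty)}(w)=w^{-\sigma_{3}/4}\tfrac{I+i\sigma_{1}}{\sqrt{2}}$ (with branch cut along $(-\infty,0)$), an Airy-type local parametrix in a fixed disk $U(w_{0};\rho)$ around the turning point, and a Bessel-type local parametrix in a small disk $U(0;\rho)$ accommodating the logarithmic singularity of $\Phi$ at $\zeta=0$. Both local constructions match $T^{(\infty)}$ on their boundaries with errors of size $|x|^{-3/2}$. For $L$ sufficiently large the far field $|w|\geq L$ lies strictly outside these two disks and outside the opened lenses, so $\Phi(\zeta;x)$ is there entirely controlled by $|x|^{-\sigma_{3}/4}R(w)T^{(\infty)}(w)e^{-\theta\sigma_{3}}$, where $R(w)=T(w)T^{(\infty)}(w)^{-1}$ solves a small-norm RH problem with jumps concentrated on $\partial U(w_{0};\rho)\cup\partial U(0;\rho)$.

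The final step is to convert the small-norm estimates on $R$ into the claimed $\zeta$-dependent bounds. The identity $(I+\tilde E)=w^{\sigma_{3}/4}R(w)w^{-\sigma_{3}/4}$ together with the Cauchy-integral representation of the Neumann expansion $R(w)-I=\sum_{j\geq 1}R_{j}(w)|x|^{-3j/2}$ yields $R_{j}(w)=O(1/w)$ uniformly for $|w|\geq L$; the $w^{\pm\sigma_{3}/4}$-conjugation then spreads this uniform $w^{-1}$ decay into the entry-wise weights $\zeta^{-1/2},\zeta^{-1},\zeta^{-3/2}$ after reverting to $\zeta=|x|w$, while the accompanying $|x|$-powers are dictated by the parity structure of the Airy and Bessel expansion coefficients (odd-order anti-diagonal, even-order diagonal) combined with the refined $\mathrm{P_{I}^{2k}}$-compatible Painlev\'{e} $\mathrm{P_{34}}$ asymptotic expansion of the special solution $u(x)=-x/2+\mathcal{O}(|x|^{-2})$ fixed by the RH problem for $\Phi$. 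The main obstacle is precisely the extraction of these sharp, non-symmetric entry-wise sizes: the standard operator-norm small-norm bound only delivers a symmetric $O(|x|^{-3/2})$ estimate, so one must exploit the parity structure of the Airy/Bessel correction matrices together with the differential-equation relations connecting the entries of $\Phi_{-1}$ to $u(x)$ and its Hamiltonian to recover the advertised rates uniformly as $|x|\to\infty$ and $|w|\to\infty$.
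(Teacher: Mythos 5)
Your rescaling $\zeta = |x|w$, the phase normalization, the Bessel parametrix at the origin, and above all the observation that the asymmetric entry-wise rates must come from the diagonal/anti-diagonal parity of the local-parametrix correction coefficients (rather than from a crude operator-norm small-norm bound) are the right ideas and match the paper's argument closely. But the deformation you set up contains a genuine structural error: you propose opening lenses around the segment $(0, w_{0})$ and placing an Airy parametrix at the stationary point $w_{0} = 1/2$ of $\widetilde\theta(w) = \tfrac{2}{3}w^{3/2} - w^{1/2}$. Neither step applies here. The RH problem for $\Phi$ has no jump on $(0, +\infty)$: its contour system consists of the two rays $\Gamma_{1}, \Gamma_{3}$ in the left half-plane and $\Gamma_{2} = (-\infty, 0)$, so there is no band to open a lens around, and $w_{0} = 1/2$ lies on no contour, so a local Airy parametrix centered there has no jump condition to match and the construction does not go through. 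In fact on $\Gamma_{1}, \Gamma_{3}$ both contributions to $\Re\widetilde\theta(w) = \tfrac{2}{3}|w|^{3/2}\cos\tfrac{3\arg w}{2} - |w|^{1/2}\cos\tfrac{\arg w}{2}$ are already negative for every $|w| > 0$ when $|\arg w| \in (\pi/2, \pi)$, so the jumps of the normalized function decay exponentially away from $w = 0$ with no contour deformation at all.

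The paper's proof therefore needs only a single local Bessel parametrix, in a fixed disk $U(0;\rho)$ with $\rho < 1/2$, built with the conformal map $f(\eta) = \bigl(\eta^{1/2} - \tfrac{2}{3}\eta^{3/2}\bigr)^2 = \eta\bigl(1 - \tfrac{2}{3}\eta\bigr)^2$, which is the square of the full rescaled phase and vanishes linearly at $\eta = 0$; this absorbs the stationary point at $1/2$ (and the zero of the phase at $3/2$) without any Airy step. The Bessel expansion coefficients $J_{j}$ alternate anti-diagonal ($j$ odd) and diagonal ($j$ even); propagating this parity through the Neumann series for $R$ gives $R_{1}$ strictly upper-triangular, $R_{2}$ diagonal, $R_{3}$ anti-diagonal, each of size $O(\eta^{-1})$ for $|\eta| \geq L$, and the conjugation by $\zeta^{\sigma_{3}/4}$ then produces exactly the weights $x^{-1}\zeta^{-1/2}$, $x^{-2}\zeta^{-1}$, $x^{-3}\zeta^{-3/2}$ claimed in the lemma, uniformly for $|\zeta| \geq L|x|$. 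Finally, your closing appeal to the $\mathrm{P_{34}}$ transcendent asymptotics $u(x) = -x/2 + O(|x|^{-2})$ plays no role: the lemma is a pure small-norm RH estimate and never uses the Painlev\'{e} solution itself.
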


\begin{proof}
Make the scaling $\zeta=|x|\eta$. Let $U(0;\rho)$ be a fixed small disk centered at $\eta=0$ with radius $\rho>0$. We define
\begin{equation}\label{eq-def-appendix-R}
R(\eta):=\begin{cases}
\Phi(|x|\eta;-|x|)e^{|x|^{\frac{3}{2}}\left(\frac{2}{3}\eta^{\frac{3}{2}}-\eta^{\frac{1}{2}}\right)\sigma_{3}}P^{(\infty)}(\eta)^{-1},&\eta \in \mathbb{C}\setminus U(0;\rho),\\
\Phi(|x|\eta;-|x|)e^{|x|^{\frac{3}{2}}\left(\frac{2}{3}\eta^{\frac{3}{2}}-\eta^{\frac{1}{2}}\right)\sigma_{3}}P^{(0)}(\eta)^{-1},&\eta \in U(0;\rho),
\end{cases}
\end{equation}
where 
\begin{equation}
\label{eq-appendix-P-infty}
P^{(\infty)}(\eta)=(|x|\eta)^{-\frac{1}{4}\sigma_{3}}\frac{I+i\sigma_{1}}{\sqrt{2}}
\end{equation}
and 
\begin{equation}
\label{eq-appendix-P-0}
P^{(0)}(\eta)=(|x|\eta)^{-\frac{1}{4}\sigma_{3}}(|x|^3 f(\eta))^{\frac{1}{4}\sigma_{3}}e^{\frac{\pi i}{4}\sigma_{3}}\Phi^{(\mathrm{Bes})}(|x|^3f(\eta))e^{|x|^{\frac{3}{2}}\left(\frac{2}{3}\eta^{\frac{3}{2}}-\eta^{\frac{1}{2}}\right)\sigma_{3}},
\end{equation}
In \eqref{eq-appendix-P-0}, 
\begin{equation}\label{def:f}
f(\eta):=(\eta^{\frac{1}{2}}-\frac{2}{3}\eta^{\frac{3}{2}})^2=\eta(1-\frac{2}{3}\eta)^{2}
\end{equation}
and $\Phi^{(\mathrm{Bes})}$ is the Bessel parametrix characterized by the jump condition  \eqref{eq:PhiBesjump} and asymptotic behaviors \eqref{eq-asym-bessel}--\eqref{eq:asym-Bessel-parametrix-near-0}. 

Since 
\begin{equation}
P^{(\infty)}_+(\eta)=P^{(\infty)}_-(\eta)\begin{pmatrix}
    0 & 1
    \\
    -1 & 0
\end{pmatrix}, \qquad \eta \in (-\infty, 0),
\end{equation}
it is readily seen from \eqref{eq-def-appendix-R} and RH problem for $\Phi$ that $R$ satisfies the following RH problem. 
\subsubsection*{RH problem for $R$}
\begin{enumerate}
\item [(a)] $R$ is analytic in $\mathbb{C}\setminus \Sigma^{R}$, where $\Sigma_{R}$ is illustrated in Figure \ref{fig-R-bessel-appendix-B1}. 
\item [(b)] On $\Sigma^{R}$, the limiting values $R_{\pm}(\eta)$ exist and satisfy the following jump condition $$R_{+}(\eta)=R_{-}(\eta)v_{R}(\eta),$$ where
\begin{equation}\label{eq:vRest}
v_{R}(\eta)=\begin{cases}
I+\mathcal{O}(e^{-Cx^{\frac{3}{2}}}), \quad &\eta\in \Sigma^{R}\setminus(\partial U(0;\rho)),\\
I+\mathcal{O}\left(\frac{1}{|x|}\right),\quad &\eta\in \partial U(0;\rho),
\end{cases}
\end{equation}
for some positive $C$ and large $|x|$.
\item [(c)] As $\eta\to\infty$, we have $R(\eta)=I+\mathcal{O}(\eta^{-1})$. 
\end{enumerate}
From the large $z$ expansion of $\Phi^{(\mathrm{Bes})}$ in $\eqref{eq-asym-bessel}$, we actually have 
\begin{equation}\label{eq:vRexp}
v_{R}(\eta)\sim I+\left(|x|\eta\right)^{-\frac{1}{4}\sigma_{3}}\left(\sum\limits_{j=1}^{\infty}\frac{J_{j}}{(|x|^{3}f(\eta))^{\frac{j}{2}}}\right)\left(|x|\eta\right)^{\frac{1}{4}\sigma_{3}}, \qquad \eta \in \partial U(0;\rho),
\end{equation}
as $|x|\to+\infty$, where $J_{2j}$ and $J_{2j+1}$ are diagonal and anti-diagonal constant matrices, respectively. 

\begin{figure}

\centering

\tikzset{every picture/.style={line width=0.75pt}} 

\begin{tikzpicture}[x=0.75pt,y=0.75pt,yscale=-1,xscale=1]

\draw    (178,82) -- (235.53,136.16) ;
\draw [shift={(209.53,111.69)}, rotate = 223.27] [fill={rgb, 255:red, 0; green, 0; blue, 0 }  ][line width=0.08]  [draw opacity=0] (7.14,-3.43) -- (0,0) -- (7.14,3.43) -- cycle    ;
\draw    (236.53,172.16) -- (179.28,225.09) ;
\draw [shift={(211.8,195.03)}, rotate = 137.24] [fill={rgb, 255:red, 0; green, 0; blue, 0 }  ][line width=0.08]  [draw opacity=0] (7.14,-3.43) -- (0,0) -- (7.14,3.43) -- cycle    ;
\draw   (228.14,153.55) .. controls (228.14,139.74) and (239.34,128.55) .. (253.14,128.55) .. controls (266.95,128.55) and (278.14,139.74) .. (278.14,153.55) .. controls (278.14,167.35) and (266.95,178.55) .. (253.14,178.55) .. controls (239.34,178.55) and (228.14,167.35) .. (228.14,153.55) -- cycle ;
\draw  [fill={rgb, 255:red, 0; green, 0; blue, 0 }  ,fill opacity=1 ] (252.14,153.55) .. controls (252.14,152.99) and (252.59,152.55) .. (253.14,152.55) .. controls (253.69,152.55) and (254.14,152.99) .. (254.14,153.55) .. controls (254.14,154.1) and (253.69,154.55) .. (253.14,154.55) .. controls (252.59,154.55) and (252.14,154.1) .. (252.14,153.55) -- cycle ;
\draw    (250.14,128.55) -- (258.14,128.55) ;
\draw [shift={(257.94,128.55)}, rotate = 180] [fill={rgb, 255:red, 0; green, 0; blue, 0 }  ][line width=0.08]  [draw opacity=0] (7.14,-3.43) -- (0,0) -- (7.14,3.43) -- cycle    ;

\draw (250,155) node [anchor=north west][inner sep=0.75pt]   [align=left] {0};

\end{tikzpicture}
\caption{The jump contour of the RH problem for $R$ in the proof of Lemma \ref{lem-appendix-1}.} 
\label{fig-R-bessel-appendix-B1}
\end{figure}

By a standard argument of the small norm RH problem, it then follows from \eqref{eq:vRest} and \eqref{eq:vRexp} that
\begin{equation}\label{eq:Rexp}
    R(\eta)\sim|x|^{-\frac{1}{4}\sigma_{3}}\left(I+\sum\limits_{j=1}^{\infty}\frac{R_{j}(\eta)}{|x|^{\frac{3j}{2}}}\right)|x|^{\frac{1}{4}\sigma_{3}}
\end{equation}
as $|x|\to+\infty$, where $R_{j}(\eta)$ is analytic in $\mathbb{C}\setminus \partial U(0;\rho)$. In particular, the structures of $J_j$ in \eqref{eq:vRexp} imply that for $\eta\in \mathbb{C}\setminus \overline {U(0;\rho)}$, $R_i$, $i=1,2,3$, takes the following form:
\begin{equation}\label{R1-R2-eta-appendix-B}
    R_{1}(\eta)=\frac{\left(\begin{matrix}
        0& \star \\0&0
    \end{matrix}\right)}{\eta}, \qquad R_{2}(\eta)= \frac{\left(\begin{matrix}
       \star & 0\\0& \star
    \end{matrix}\right)}{\eta}
\end{equation}
and 
\begin{eqnarray}\label{R3-eta-appendix-B}
     R_{3}(\eta)= \frac{\left(\begin{matrix}
        0& \star \\ \star &0
    \end{matrix}\right)}{\eta}+\frac{\left(\begin{matrix}
        0& \star \\0&0
    \end{matrix}\right)}{\eta^2},
\end{eqnarray}
where $\star$ stands for a constant independent of $\eta$ and $x$.

By inverting the transformation in \eqref{eq-def-appendix-R} and noting that $\zeta=|x|\eta$, we obtain \eqref{eq-asym-PII-large-xi-negative} from the combination of \eqref{eq:Rexp}, \eqref{R1-R2-eta-appendix-B}, \eqref{R3-eta-appendix-B} and \eqref{eq-appendix-P-infty}. 
\end{proof}

\begin{lemma}\label{lem-appendix-2}
Let $\tilde{\Phi}$ be a variant of the $\mathrm{P_{34}}$ parametrix defined in \eqref{eq-def-tilde-Phi}. 
There exists a fixed constant $L>1$ such that,
as $x\to+\infty$,
\begin{equation}
\label{eq-tilde-Phi-asym-uniform}
\tilde{\Phi}(\zeta;x)=\left(I+\mathcal{O}\left(\frac{1}{\zeta}\right)\right)\zeta^{-\frac{1}{4}\sigma_{3}}\frac{I+i\sigma_{1}}{\sqrt{2}}e^{-\frac{2}{3}\zeta^{\frac{3}{2}}\sigma_{3}},
\end{equation}
uniformly for all $|\zeta|\geq L|x|$.
\end{lemma}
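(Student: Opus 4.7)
The plan is to perform a Deift-Zhou steepest descent analysis of $\tilde{\Phi}$ in the scaled variable $\eta=\zeta/x$, similar in spirit to the proof of Lemma~\ref{lem-appendix-1}, but with the Bessel parametrix replaced by an Airy parametrix and supplemented by an explicit local parametrix for the logarithmic singularity at $\eta=1$. First I would rescale by $\zeta=x\eta$ and normalize by the anticipated exponential via
\[
M(\eta):=\tilde{\Phi}(x\eta;x)\exp\left\{\tfrac{2}{3}(x\eta)^{3/2}\sigma_{3}\right\}.
\]
Using the principal branch of $\zeta^{3/2}$, conjugation by the exponential turns the jump of $\tilde{\Phi}$ across $\tilde{\Gamma}_{1}=(0,+\infty)$ into $\begin{pmatrix}1 & e^{-\frac{4}{3}(x\eta)^{3/2}}\\ 0 & 1\end{pmatrix}$, the jump across $\tilde{\Gamma}_{2}\cup\tilde{\Gamma}_{4}$ into $\begin{pmatrix}1 & 0\\ e^{\frac{4}{3}(x\eta)^{3/2}} & 1\end{pmatrix}$ with $\re(x\eta)^{3/2}<0$ on these rays, and (since $g_{+}(\eta)+g_{-}(\eta)=0$ across the negative real axis with $g(\eta)=\tfrac{2}{3}(x\eta)^{3/2}$) preserves the constant jump $\begin{pmatrix}0 & 1\\ -1 & 0\end{pmatrix}$ on $\tilde{\Gamma}_{3}$. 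All non-constant jumps thus decay exponentially as $x\to+\infty$ once $\eta$ is bounded away from $0$, while $M$ inherits a logarithmic singularity at $\eta=1$ from that of $\tilde{\Phi}$ at $\zeta=x$.

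Next I would build three parametrices. The global parametrix handling the constant cut on $(-\infty,0]$ is $P^{(\infty)}(\eta)=(x\eta)^{-\sigma_{3}/4}N$ with $N=(I+i\sigma_{1})/\sqrt{2}$. The local parametrix on a fixed disk $U(0;\rho)$ is the Airy parametrix
\[
P^{(0)}(\eta):=\Phi^{(\Ai)}(x\eta)\exp\left\{\tfrac{2}{3}(x\eta)^{3/2}\sigma_{3}\right\},
\]
whose constant Airy-type jumps turn, under the same conjugation, into exactly the jumps of $M$ near the origin, and which matches $P^{(\infty)}$ on $\partial U(0;\rho)$ with error $\mathcal{O}(x^{-3/2})$ by \eqref{infty:Ai}. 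The local parametrix on a fixed disk $U(1;\rho')$, modeled on \eqref{eq-def-Qr(eta)}, is
\[
P^{(1)}(\eta):=(x\eta)^{-\sigma_{3}/4}N\begin{pmatrix}1 & \mathcal{C}(\eta;x)\\ 0 & 1\end{pmatrix},\qquad \mathcal{C}(\eta;x):=\frac{1}{2\pi i}\int_{1-\rho'}^{1+\rho'}\frac{e^{-\frac{4}{3}(x\xi)^{3/2}}}{\xi-\eta}\,d\xi.
\]
By Plemelj's formula $\mathcal{C}_{+}-\mathcal{C}_{-}=e^{-\frac{4}{3}(x\xi)^{3/2}}$, so $P^{(1)}$ reproduces the exponentially small jump of $M$ across $\tilde{\Gamma}_{1}\cap U(1;\rho')$ and carries the correct logarithm at $\eta=1$. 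Since the integrand is of order $e^{-cx^{3/2}}$ for some fixed $c>0$ uniformly on its support, one has $\mathcal{C}(\eta;x)=\mathcal{O}(e^{-cx^{3/2}})$ on $\partial U(1;\rho')$, yielding the matching $P^{(1)}[P^{(\infty)}]^{-1}=I+\mathcal{O}(e^{-cx^{3/2}})$ there.

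Finally I would set $R(\eta):=M(\eta)[P^{(\cdot)}(\eta)]^{-1}$ region by region. This $R$ is analytic off $\partial U(0;\rho)\cup\partial U(1;\rho')$ together with the unbounded pieces of the Airy-type rays, and the jumps of $R$ are $\mathcal{O}(x^{-3/2})$ on $\partial U(0;\rho)$ and exponentially small elsewhere. A standard small-norm iteration then gives $R(\eta)-I=\mathcal{O}(x^{-3/2})$ uniformly and, by Cauchy's integral formula applied to $R-I$ on a contour enclosing the two disks, $R(\eta)-I=\mathcal{O}(x^{-3/2}\eta^{-1})=\mathcal{O}(\zeta^{-1})$ for $|\eta|\geq L>1$ chosen large enough to stay away from both disks. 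Unwinding the normalization yields
\[
\tilde{\Phi}(\zeta;x)=R(\zeta/x)\,\zeta^{-\sigma_{3}/4}\,\frac{I+i\sigma_{1}}{\sqrt{2}}\,e^{-\frac{2}{3}\zeta^{3/2}\sigma_{3}}=\left(I+\mathcal{O}(\zeta^{-1})\right)\zeta^{-\frac{1}{4}\sigma_{3}}\frac{I+i\sigma_{1}}{\sqrt{2}}e^{-\frac{2}{3}\zeta^{3/2}\sigma_{3}},
\]
which is exactly \eqref{eq-tilde-Phi-asym-uniform}. The main obstacle is the construction and matching of the log parametrix $P^{(1)}$: unlike the setting of Section~\ref{sec:RH-analysis-case-III}, the logarithmic singularity sits at an interior point of the jump contour $\tilde{\Gamma}_{1}$ rather than at a free endpoint, so one must verify carefully that the Cauchy integral $\mathcal{C}(\eta;x)$ reproduces both the correct log coefficient of $M$ at $\eta=1$ and the exponentially small matching error on $\partial U(1;\rho')$.
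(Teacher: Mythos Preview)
Your overall strategy---rescale $\zeta=x\eta$, normalize by $e^{\frac23(x\eta)^{3/2}\sigma_3}$, use an Airy parametrix near $\eta=0$ and a Cauchy-integral parametrix for the logarithmic point, then close with a small-norm argument---is exactly the paper's approach. The gap is in your reading of the contour $\tilde{\Gamma}_1$ and hence in the construction of $P^{(1)}$.

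The jump contour $\tilde{\Gamma}_1$ is the bounded segment $(0,x)$, not $(0,+\infty)$; see Figure~\ref{fig:tilde{Phi}} and the endpoint behavior~\eqref{eq-appendix-tilde-Phi-zeto-to-0}. Thus after scaling, $M$ has its upper-triangular jump only on $(0,1)$ and the logarithmic singularity sits at the \emph{endpoint} $\eta=1$, not at an interior point. Your Cauchy integral $\mathcal{C}(\eta;x)=\frac{1}{2\pi i}\int_{1-\rho'}^{1+\rho'}\frac{e^{-\frac43(x\xi)^{3/2}}}{\xi-\eta}\,d\xi$ therefore fails on two counts: it creates a spurious jump for $P^{(1)}$ on $(1,1+\rho')$ where $M$ has none, and---more seriously---since the density is smooth and $\eta=1$ lies in the interior of the integration interval, $\mathcal{C}$ is analytic at $\eta=1$ and does \emph{not} reproduce the logarithm of $M$ there. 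The ratio $R=M(P^{(1)})^{-1}$ would then retain a $\log(\eta-1)$ singularity, wrecking the small-norm step.

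The fix is the one-sided integral the paper uses,
\[
\mathcal{C}_2(\eta;x)=\frac{1}{2\pi i}\int_{1/2}^{1}\frac{e^{-\frac43(x\xi)^{3/2}}}{\xi-\eta}\,d\xi,
\]
whose endpoint at $1$ supplies precisely the $\frac{1}{2\pi i}\log(\eta-1)$ behavior required by~\eqref{eq-appendix-tilde-Phi-zeto-to-0}, while its jump on $(1/2,1)$ matches that of $M$. With this correction the ``main obstacle'' you flag simply disappears, and the rest of your argument coincides with the paper's proof. (A minor side remark: the Airy matching error on $\partial U(0;\rho)$ is $\mathcal{O}(x^{-1})$ rather than $\mathcal{O}(x^{-3/2})$ once you account for the conjugation by $(x\eta)^{\pm\sigma_3/4}$, but this does not affect the conclusion.)
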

\begin{proof}
Let $U(0;\rho)$ and $U(1;\rho)$ be two fixed disks centered at $\eta=0$ and $\eta=1$ with radius $\rho\in(0,\frac{1}{2})$, respectively. Similar to \eqref{eq-def-appendix-R}, we define 
\begin{equation}\label{eq-appendix-transform-S-A}
R(\eta)=\begin{cases}
\tilde{\Phi}(x\eta;x)e^{\frac{2}{3}x^{\frac{3}{2}}\eta^{\frac{3}{2}}\sigma_{3}}S^{(\infty)}(\eta)^{-1}, &\eta\in\mathbb{C}\setminus \left(\overline{U(0;\rho)} \cup \overline{U(1;\rho)} \right), 
\\
\tilde{\Phi}(x\eta;x)e^{\frac{2}{3}x^{\frac{3}{2}}\eta^{\frac{3}{2}}\sigma_{3}}S^{(0)}(\eta)^{-1}, &\eta\in U(0;\rho),\\
\tilde{\Phi}(x\eta;x)e^{\frac{2}{3}x^{\frac{3}{2}}\eta^{\frac{3}{2}}\sigma_{3}}S^{(1)}(\eta)^{-1}, &\eta\in U(1;\rho),
\end{cases}
\end{equation}
where
\begin{align}
S^{(\infty)}(\eta)&=(x\eta)^{-\frac{1}{4}\sigma_{3}}\frac{I+i\sigma_{1}}{\sqrt{2}}, \label{eq-appendix-B2-Sinfty}
\\
S^{(0)}(\eta)&=\Phi^{(\mathrm{
Ai})}(x\eta)e^{\frac{2}{3}x^{\frac{3}{2}}\eta^{\frac{3}{2}}\sigma_{3}},
\label{eq-appendix-B2-S0}
\\
S^{(1)}(\eta)&=(x\eta)^{-\frac{1}{4}\sigma_{3}}\frac{I+i\sigma_{1}}{\sqrt{2}}\left(\begin{matrix}
1&\mathcal{C}_{2}(\eta;x)\\ 0&1
\end{matrix}\right).
\label{eq-appendix-B2-S1}
\end{align}
Here, $\Phi^{(\mathrm{Ai})}$ is the Airy parametrix characterized by the jump condition \eqref{eq:Airyjump} and the asymptotic behavior \eqref{infty:Ai}, and 
\begin{equation}\label{eq-def-mathcal-C2}
\mathcal{C}_{2}(\eta;x)=\frac{1}{2\pi i}\int_{\frac{1}{2}}^{1}\frac{e^{-\frac{4}{3}(x\xi)^{\frac{3}{2}}}}{\xi-\eta}d\xi, 
\qquad \eta\in\mathbb{C}\setminus \Big[\frac{1}{2},1 \Big].
\end{equation}
The functions $S^{(\infty)}$ and $S^{(i)}$, $i=0,1$, serve as global and local parametrices for 
$\tilde{\Phi}(x\eta;x)e^{\frac{2}{3}x^{\frac{3}{2}}\eta^{\frac{3}{2}}\sigma_{3}}$, respectively. 

It is then straightforward to check from RH problem for $\tilde{\Phi}$ that $R$ satisfies the following RH problem.
\begin{figure}
\centering

\tikzset{every picture/.style={line width=0.75pt}} 

\begin{tikzpicture}[x=0.75pt,y=0.75pt,yscale=-1,xscale=1]

\draw   (135.83,165.43) .. controls (135.83,154.46) and (144.27,145.57) .. (154.69,145.57) .. controls (165.1,145.57) and (173.54,154.46) .. (173.54,165.43) .. controls (173.54,176.4) and (165.1,185.29) .. (154.69,185.29) .. controls (144.27,185.29) and (135.83,176.4) .. (135.83,165.43) -- cycle ;
\draw    (52.54,117) -- (141.6,151.23) ;
\draw    (52.54,217.47) -- (140.6,179.23) ;
\draw   (242.69,166.14) .. controls (242.69,156.2) and (251.32,148.14) .. (261.97,148.14) .. controls (272.62,148.14) and (281.26,156.2) .. (281.26,166.14) .. controls (281.26,176.08) and (272.62,184.14) .. (261.97,184.14) .. controls (251.32,184.14) and (242.69,176.08) .. (242.69,166.14) -- cycle ;
\draw    (173.54,165.43) -- (242.69,166.14) ;
\draw  [fill={rgb, 255:red, 0; green, 0; blue, 0 }  ,fill opacity=1 ] (152.11,165.43) .. controls (152.11,164.7) and (152.68,164.11) .. (153.4,164.11) .. controls (154.11,164.11) and (154.69,164.7) .. (154.69,165.43) .. controls (154.69,166.16) and (154.11,166.75) .. (153.4,166.75) .. controls (152.68,166.75) and (152.11,166.16) .. (152.11,165.43) -- cycle ;
\draw  [fill={rgb, 255:red, 0; green, 0; blue, 0 }  ,fill opacity=1 ] (262,166.31) .. controls (262,165.58) and (262.58,164.99) .. (263.29,164.99) .. controls (264,164.99) and (264.58,165.58) .. (264.58,166.31) .. controls (264.58,167.03) and (264,167.62) .. (263.29,167.62) .. controls (262.58,167.62) and (262,167.03) .. (262,166.31) -- cycle ;
\draw    (52.54,117) -- (94.27,133.04) ;
\draw [shift={(97.07,134.12)}, rotate = 201.03] [fill={rgb, 255:red, 0; green, 0; blue, 0 }  ][line width=0.08]  [draw opacity=0] (8.93,-4.29) -- (0,0) -- (8.93,4.29) -- cycle    ;
\draw    (52.54,217.47) -- (93.82,199.55) ;
\draw [shift={(96.57,198.35)}, rotate = 156.53] [fill={rgb, 255:red, 0; green, 0; blue, 0 }  ][line width=0.08]  [draw opacity=0] (8.93,-4.29) -- (0,0) -- (8.93,4.29) -- cycle    ;
\draw    (154.69,145.57) -- (161.88,147.4) ;
\draw [shift={(164.79,148.14)}, rotate = 194.28] [fill={rgb, 255:red, 0; green, 0; blue, 0 }  ][line width=0.08]  [draw opacity=0] (8.93,-4.29) -- (0,0) -- (8.93,4.29) -- cycle    ;
\draw    (205.77,165.62) -- (211.11,165.79) ;
\draw [shift={(214.11,165.88)}, rotate = 181.82] [fill={rgb, 255:red, 0; green, 0; blue, 0 }  ][line width=0.08]  [draw opacity=0] (8.93,-4.29) -- (0,0) -- (8.93,4.29) -- cycle    ;
\draw    (265.6,148.62) -- (271.45,150.75) ;
\draw [shift={(274.27,151.78)}, rotate = 200.07] [fill={rgb, 255:red, 0; green, 0; blue, 0 }  ][line width=0.08]  [draw opacity=0] (8.93,-4.29) -- (0,0) -- (8.93,4.29) -- cycle    ;

\draw (154.71,158.49) node [anchor=north west][inner sep=0.75pt]   [align=left] {{\footnotesize {\fontfamily{ptm}\selectfont $0$}}};
\draw (264.65,158.49) node [anchor=north west][inner sep=0.75pt]   [align=left] {{\fontfamily{ptm}\selectfont {\footnotesize $1$}}};

\end{tikzpicture}

\caption{The jump contour of the RH problem for $R$ in the proof of Lemma \ref{lem-appendix-2}.}
\label{fig:R-appendixB-2}
\end{figure}

\subsubsection*{RH problem for $R$}
\begin{enumerate}
\item [(a)] $R(\eta)$ is analytic in $\mathbb{C}\setminus \Sigma^{R}$, where $\Sigma^{R}$ is shown in Figure \ref{fig:R-appendixB-2}.
\item [(b)] On $\Sigma^{R}$, the limiting values $R_{\pm}(\eta)$ exist and satisfy the jump condition $$R_{+}(\eta)=R_{-}(\eta)v_{R}(\eta),$$ where
\begin{equation}
v_{R}(\eta)=\begin{cases}
I+\mathcal{O}(e^{-Cx^{\frac{3}{2}}}), \quad &\eta\in \Sigma^{R}\setminus(\partial U(0;\rho)),\\
I+\mathcal{O}\left(\frac{1}{|x|}\right),\quad &\eta\in \partial U(0;\rho),
\end{cases}
\end{equation}
for some positive $C$ and large $|x|$.
\item [(c)] As $\eta\to\infty$, we have $R(\eta)=I+\mathcal{O}(\eta^{-1})$ 
\end{enumerate}
The small norm RH problem implies that there exists a positive constant $L>1$ such that $R(\eta)=I+\mathcal{O}\left(1/(x\eta)\right)$ as $x\to\infty$ uniformly for all $|\eta|\geq L$. By inverting the transformations \eqref{eq-appendix-transform-S-A} and noting that $\zeta=x\eta$, we obtain \eqref{eq-tilde-Phi-asym-uniform}.
\end{proof}

\begin{remark}
A combination of \eqref{eq-tilde-Phi-asym-uniform} and \eqref{eq-def-tilde-Phi} yields 
\begin{equation}
\label{eq-Phi-asym-uniform-positive}
\Phi(\zeta;x)=\left(\begin{matrix}
    1&0\\-ix^2/4 & 1
\end{matrix}\right)
\left(I+\mathcal{O}\left(\frac{1}{\zeta}\right)\right)(\zeta+x)^{-\frac{1}{4}\sigma_{3}}\frac{I+i\sigma_{1}}{\sqrt{2}}e^{-\frac{2}{3}(\zeta+x)^{\frac{3}{2}}\sigma_{3}}.
\end{equation}
as $x\to+\infty$ uniformly for all $|\zeta|\geq L|x|$. It is much different from the asymptotic behavior of $\Phi(\zeta;x)$ as $x\to-\infty$
given in \eqref{eq-asym-PII-large-xi-negative}. 
\end{remark}

From the proofs of Lemmas \ref{lem-appendix-1} and \ref{lem-appendix-2}, we also have the following asymptotics for $\Phi_{0}$ and $\tilde{\Phi}_{0}$ defined in \eqref{eq-Phi-0} and \eqref{eq-tilde-Phi-0} respectively. 

\begin{lemma}
As $x\to+\infty$, we have
\begin{equation}\label{eq-approx-Phi0(-xi)}
\Phi_{0}(-x)=\left(I+\mathcal{O}\left(\frac{1}{x}\right)\right)(x\pi)^{\frac{1}{2}\sigma_{3}} \left(\begin{matrix}
1&\varphi_{0}(x)\\0&1\end{matrix}\right),
\end{equation}
and 
\begin{equation}\label{eq-asym-tilde-Phi0}
\tilde{\Phi}_{0}(x)=\left(I+\mathcal{O}\left(\frac{1}{x}\right)\right)x^{-\frac{1}{4}\sigma_{3}}\frac{I+i\sigma_{1}}{\sqrt{2}}  e^{-\frac{2}{3}x^{\frac{3}{2}}\sigma_{3}}\left(\begin{matrix}
1&\tilde{\varphi}_{0}(x)\\0&1\end{matrix}\right),
\end{equation}
where $\varphi_{0}(x)$ and $\tilde{\varphi}_{0}(x)$ are two functions of $x$ whose explicit expressions are not needed.
\end{lemma}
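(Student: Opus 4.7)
The plan is to extract both asymptotics directly from the Riemann--Hilbert analyses carried out in the proofs of Lemmas \ref{lem-appendix-1} and \ref{lem-appendix-2}, by inverting the final transformation inside the relevant shrinking/fixed disk and reading off the local expansion at the singular point. The key input is that, in each case, the local parametrix is built out of a model function (Bessel or Airy) whose behavior near its singularity is explicit, and that the remaining transformation $R(\eta)$ is $I+\mathcal{O}(1/x)$ after suitable conjugation, as established in those proofs.

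For \eqref{eq-approx-Phi0(-xi)}, I invert \eqref{eq-def-appendix-R} inside $U(0;\rho)$ and write
\begin{equation*}
\Phi(|x|\eta;-|x|)=R(\eta)\,P^{(0)}(\eta)\,e^{-|x|^{3/2}(\frac{2}{3}\eta^{3/2}-\eta^{1/2})\sigma_3}.
\end{equation*}
Substituting the small-$z$ expansion \eqref{eq:asym-Bessel-parametrix-near-0} of $\Phi^{(\mathrm{Bes})}$ into \eqref{eq-appendix-P-0}, and using that $f(\eta)/\eta=(1-\tfrac23\eta)^2\to 1$ as $\eta\to 0$, the prefactor $(|x|\eta)^{-\frac14\sigma_3}(|x|^3f(\eta))^{\frac14\sigma_3}e^{\frac{\pi i}{4}\sigma_3}$ collapses at $\eta=0$ to $|x|^{\frac12\sigma_3}$, which combined with the $(\sqrt{\pi}/e^{\pi i/4})^{\sigma_3}$ factor from the Bessel asymptotics produces exactly the $(x\pi)^{\frac12\sigma_3}$ in \eqref{eq-approx-Phi0(-xi)}. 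Matching this with the local structure \eqref{eq-appendix-Phi-zeto-to-0} of $\Phi$ at $\zeta=0$ identifies $\Phi_0(-x)$ up to a unit upper-triangular factor (whose $12$-entry is absorbed into $\varphi_0(x)$). The error $I+\mathcal{O}(1/x)$ then follows from evaluating $R(0)$: the expansion \eqref{eq:Rexp} together with the anti-diagonal/diagonal structure of $R_{2j-1}, R_{2j}$ noted after \eqref{R1-R2-eta-appendix-B} shows that after the conjugation by $|x|^{\frac14\sigma_3}$ and multiplication by the leading $(x\pi)^{\frac12\sigma_3}$, every entry of the residual matrix is $I+\mathcal{O}(1/x)$.

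For \eqref{eq-asym-tilde-Phi0}, I proceed analogously, inverting \eqref{eq-appendix-transform-S-A} in $U(1;\rho)$ to get
\begin{equation*}
\tilde{\Phi}(x\eta;x)=R(\eta)\,S^{(1)}(\eta)\,e^{-\frac{2}{3}x^{3/2}\eta^{3/2}\sigma_3}.
\end{equation*}
Evaluating at $\eta=1$ (which corresponds to $\zeta=x$) and inserting the explicit formula \eqref{eq-appendix-B2-S1}, the leading behaviour becomes $x^{-\frac14\sigma_3}\tfrac{I+i\sigma_1}{\sqrt{2}}\bigl(\begin{smallmatrix}1 & \mathcal{C}_2(\eta;x)\\0 & 1\end{smallmatrix}\bigr)e^{-\frac{2}{3}x^{3/2}\sigma_3}$. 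The logarithmic singularity of $\mathcal{C}_2$ at $\eta=1$, seen from \eqref{eq-def-mathcal-C2}, matches that of $\tilde{\Phi}$ in \eqref{eq-appendix-tilde-Phi-zeto-to-0}, so upon comparison with \eqref{eq-tilde-Phi-0} the regular part of $\mathcal{C}_2(\eta;x)-\log(\eta-1)/(2\pi i)$ at $\eta=1$ is absorbed into $\tilde{\varphi}_0(x)$. The error $I+\mathcal{O}(1/x)$ is again inherited from $R(1)=I+\mathcal{O}(1/x)$, established by the small-norm RH analysis in the proof of Lemma \ref{lem-appendix-2}.

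The main obstacle is bookkeeping: verifying that after all these prefactor manipulations the residual factor really is unit upper-triangular (so that the asymptotic formulas take the stated form), and that the logarithmic singularities from the model parametrix cancel exactly against those of $\Phi^{(0)}$ and $\tilde{\Phi}^{(0)}$. These verifications are ultimately consequences of the structural properties of the Bessel and Airy parametrices, together with the diagonal/anti-diagonal parity of the $R_j$'s noted in the proof of Lemma \ref{lem-appendix-1}; no further analytical estimate is needed beyond what has already been established in Appendix \ref{sec:uniform-asym-scaled-P34}.
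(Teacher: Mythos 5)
Your argument is essentially identical to the paper's own proof: invert the final transformation inside the local disk, substitute the small-argument expansion of the model parametrix (\eqref{eq:asym-Bessel-parametrix-near-0} in the Bessel case, and the definition of $\mathcal{C}_2$ in the second case), read off the unit upper-triangular structure by matching against \eqref{eq-appendix-Phi-zeto-to-0} and \eqref{eq-appendix-tilde-Phi-zeto-to-0}, and inherit the $I+\mathcal{O}(1/x)$ error from $R(0)$ resp.\ $R(1)$. The only small imprecision is the remark that the prefactor "collapses at $\eta=0$ to $|x|^{\frac12\sigma_3}$" — it actually collapses to $|x|^{\frac12\sigma_3}e^{\frac{\pi i}{4}\sigma_3}$, and the $e^{\frac{\pi i}{4}\sigma_3}$ cancels against the $(\sqrt{\pi}/e^{\pi i/4})^{\sigma_3}$ coming from the Bessel expansion — but this does not affect the conclusion.
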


\begin{proof}
To show \eqref{eq-approx-Phi0(-xi)}, we see from 
\eqref{eq-appendix-P-0} and \eqref{eq-def-appendix-R} that for $\eta\in U(0,\rho)$,
\begin{align}
\Phi(x\eta;-x)&=R(\eta)P^{(0)}(\eta)e^{-x^{\frac{3}{2}}\left(\frac{2}{3}\eta^{\frac{3}{2}}-\eta^{\frac{1}{2}}\right)\sigma_{3}} \nonumber \\
&=R(\eta)(x\eta)^{-\frac{1}{4}\sigma_{3}}(x^3 f(\eta))^{\frac{1}{4}\sigma_{3}}e^{\frac{\pi i}{4}\sigma_{3}}\Phi^{(\mathrm{Bes})}(x^3f(\eta)).
\end{align}
In view of the asymptotic behavior of $\Phi^{(\mathrm{Bes})}$ in \eqref{eq:asym-Bessel-parametrix-near-0} and \eqref{def:f}, one has  
\begin{equation}
\Phi^{(\mathrm{Bes})}(x^{3}f(\eta))\sim \left(\frac{\sqrt{\pi}}{e^{\frac{\pi i}{4}}}\right)^{\sigma_{3}}\left[\left(\begin{matrix}
1&\frac{\gamma_{\mathrm{E}}-\log{2}}{\pi i}\\0&1\end{matrix}\right)+\mathcal{O}\left(x^3f(\eta)\right)\right]\left(\begin{matrix}
1&\frac{\log{(x^3f(\eta))}}{2\pi i}\\0&1\end{matrix}\right), \quad \eta\to 0.
\end{equation}
This, together with the asymptotic behavior of $\Phi(\zeta;x)$ as $\zeta\to 0$ in \eqref{eq-appendix-Phi-zeto-to-0} and \eqref{eq:Rexp}, implies that as $x\to+\infty$,
\begin{equation}
\begin{split}
\Phi_{0}(-x)=&R(0)(x\pi)^{\frac{1}{2}\sigma_{3}}\left(\begin{matrix}
1&\frac{\gamma_{\mathrm{E}}+\log(x/2)}{\pi i}\\0&1\end{matrix}\right)=\left(I+\mathcal{O}\left(\frac{1}{x}\right)\right)(x\pi)^{\frac{1}{2}\sigma_{3}}\left(\begin{matrix}
1&\frac{\gamma_{\mathrm{E}}+\log(x/2)}{\pi i}\\0&1\end{matrix}\right),
\end{split}
\end{equation}
which is \eqref{eq-approx-Phi0(-xi)}. 

To show \eqref{eq-asym-tilde-Phi0}, we see from   \eqref{eq-appendix-B2-S1} and \eqref{eq-appendix-transform-S-A} that for $\eta\in U(1;\rho)$,
\begin{equation}
\tilde{\Phi}(x\eta;x)=R(\eta)(x\eta)^{-\frac{1}{4}\sigma_{3}}\frac{I+i\sigma_{1}}{\sqrt{2}}\left(\begin{matrix}
1&\mathcal{C}_{2}(\eta;x)\\0&1
\end{matrix}\right)e^{-\frac{2}{3}(x\eta)^{\frac{3}{2}}\sigma_{3}},
\end{equation}
where $\mathcal{C}_{2}(\eta;x)$ is defined in \eqref{eq-def-mathcal-C2}.
Comparing the above formula with \eqref{eq-appendix-tilde-Phi-zeto-to-0} and taking the limit $\eta\to 1$, we have
\begin{equation}\label{eq:tildephi0}
\tilde{\Phi}_{0}(x)=R(1)x^{-\frac{1}{4}\sigma_{3}}\frac{I+i\sigma_{1}}{\sqrt{2}} e^{-\frac{2}{3}x^{\frac{3}{2}}\sigma_{3}}\left(\begin{matrix}
1&\tilde{\varphi}_{0}(x)\\0&1\end{matrix}\right).
\end{equation}
In the above formula, we need the definition of $\mathcal{C}_{2}(\eta;x)$ in \eqref{eq-def-mathcal-C2} and its property $\mathcal{C}_{2}(\eta;x) = \frac{e^{- \frac{4}{3} x^{\frac{3}{2}}}}{2 \pi i} \log(\eta - 1) +\mathcal{O}(1)$ as $\eta \to 1$. This implies the limit 
\begin{equation}
    \tilde{\varphi}_{0}(x)=\lim\limits_{\eta\to 1}\left(e^{\frac{4}{3}(x\eta)^{\frac{3}{2}}}\mathcal{C}_{2}(\eta;x)-\frac{\log(x(\eta-1))}{2\pi i}\right)
\end{equation}
exists. Inserting the fact $R(1)=I+\mathcal{O}(x^{-1})$ as $x\to+\infty$ into \eqref{eq:tildephi0}, we arrive at \eqref{eq-asym-tilde-Phi0} immediately.
\end{proof}

\section{Asymptotics of the Hamiltonian $h$}
\label{PIasy}
Let $h(x)=h(x,t_1,\ldots,t_{2k-1})$ be the Hamiltonian associated with the special solution $\mathsf{q}$ of the Painlev\'{e} I hierarchy $\mathrm{P_{I}^{2k}}$. Following the asymptotics analysis carried out in \cite{Claeys-2012}, it is the aim of this section to prove \eqref{eq-h(x)-asym-expansio}, that is, if $t_{1}=t_{2}=\cdots=t_{2k-1}=0$,
\begin{equation}\label{eq-h(x)-asym-expansion}
h(x)=\frac{(2k+1)}{2(2k+2)}\alpha_k^{-\frac{1}{2k+1}} x^{\frac{2k+2}{2k+1}} +\frac{2k}{24(2k+1)x}+\mathcal{O}\left(x^{-\frac{6k+4}{2k+1}}\right),\qquad x\to\pm\infty,
\end{equation}
where $\alpha_k$ is defined in \eqref{eq-def-Ak}.


Suppose $t_1=\cdots=t_{2k-1}=0$, we begin the analysis with the following RH problem for $S$; see \cite[Section 2.3.3]{Claeys-2012}, where $m$ is replaced by $2k$ in the discussion below.
\paragraph{RH problem for $S$}
\begin{enumerate}
    \item [(a)] $S(\eta)$ is analytic in $\mathbb{C}\setminus\Gamma_{z_{0}}$, where $\Gamma_{z_{0}}:=\Gamma_{1}^{S}\cup\Gamma_{2}^{S}\cup\Gamma_{3}^{S}\cup\Gamma_{4}^{S}$ is shown in Figure \ref{fig:S-appendix-C}.
    \item [(b)] On $\Gamma_{z_0}$, the limiting values $S_{\pm}(\eta)$ exist and satisfy the jump condition 
\begin{equation}\label{eq:Sjump}
S_+(\eta)=S_-(\eta)\begin{cases}
\begin{pmatrix}
1 & e^{-2|x|^{\frac{4k+3}{4k+2}} g(\eta)}\\
0 & 1
\end{pmatrix}, \quad & \eta \in \Gamma^{S}_1,\\
\begin{pmatrix}
1 & 0\\
e^{2|x|^{\frac{4k+3}{4k+2}} g(\eta)} & 1
\end{pmatrix}, \quad & \eta \in \Gamma^{S}_2 \cup \Gamma^{S}_4,\\
\begin{pmatrix}
0 & 1\\
-1 & 0
\end{pmatrix}, \quad & \eta \in \Gamma^{S}_3.
\end{cases}
\end{equation}
\item [(c)] As $\eta\to\infty$, we have 
\begin{equation}
    S(\eta)=\left[I+\frac{B_{1}(x)}{\eta}+\mathcal{O}(\eta^{-2})\right]|x|^{-\frac{1}{8k+4}\sigma_{3}}\eta^{-\frac{1}{4}\sigma_{3}}N,
\end{equation}
where $N$ is given in \eqref{eq-def-N} and $B_1$ is independent of $\eta$. 
\end{enumerate}

\begin{figure}[t]
\begin{center}
\tikzset{every picture/.style={line width=0.75pt}} 
\begin{tikzpicture}[x=0.75pt,y=0.75pt,yscale=-0.8,xscale=0.8]

\draw    (72,149) -- (223,149) -- (353,149) ;
\draw [shift={(152.5,149)}, rotate = 180] [fill={rgb, 255:red, 0; green, 0; blue, 0 }  ][line width=0.08]  [draw opacity=0] (8.93,-4.29) -- (0,0) -- (8.93,4.29) -- cycle    ;
\draw [shift={(293,149)}, rotate = 180] [fill={rgb, 255:red, 0; green, 0; blue, 0 }  ][line width=0.08]  [draw opacity=0] (8.93,-4.29) -- (0,0) -- (8.93,4.29) -- cycle    ;
\draw    (136.5,49) -- (236.5,149) ;
\draw [shift={(190.04,102.54)}, rotate = 225] [fill={rgb, 255:red, 0; green, 0; blue, 0 }  ][line width=0.08]  [draw opacity=0] (8.93,-4.29) -- (0,0) -- (8.93,4.29) -- cycle    ;
\draw    (137.5,237) -- (236.5,149) ;
\draw [shift={(190.74,189.68)}, rotate = 138.37] [fill={rgb, 255:red, 0; green, 0; blue, 0 }  ][line width=0.08]  [draw opacity=0] (8.93,-4.29) -- (0,0) -- (8.93,4.29) -- cycle    ;
\draw  [fill={rgb, 255:red, 0; green, 0; blue, 0 }  ,fill opacity=1 ] (236.67,148.83) .. controls (236.67,148.19) and (236.15,147.66) .. (235.5,147.66) .. controls (234.85,147.66) and (234.33,148.19) .. (234.33,148.83) .. controls (234.33,149.48) and (234.85,150) .. (235.5,150) .. controls (236.15,150) and (236.67,149.48) .. (236.67,148.83) -- cycle ;

\draw (235,152) node [anchor=north west][inner sep=0.75pt]   [align=left] {$z_0$};
\draw (112,43) node [anchor=north west][inner sep=0.75pt]   [align=left] {$\Gamma^{S}_2$};
\draw (48,142) node [anchor=north west][inner sep=0.75pt]   [align=left] {$\Gamma^{S}_3$};
\draw (112,234) node [anchor=north west][inner sep=0.75pt]   [align=left] {$\Gamma^{S}_4$};
\draw (320,124) node [anchor=north west][inner sep=0.75pt]   [align=left] {$\Gamma^{S}_1$};

\end{tikzpicture}
\caption{The jump contour of the RH problem for $S$ in Appendix \ref{PIasy}}
   \label{fig:S-appendix-C}
\end{center}
\end{figure}
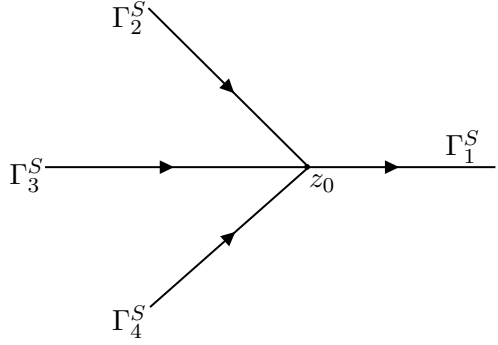

In \eqref{eq:Sjump}, the function $g$ is defined by 
\begin{equation}\label{eq:gfunction-appendix}
g(\eta)=(\eta-z_{0})^{\frac{3}{2}} p\left(\frac{\eta}{z_0}\right),
\end{equation} 
where 
\begin{equation}\label{eq:p}
 p(z)=\frac{4}{4k+3}z_0^{2k}\sum\limits_{j=0}^{2k}\frac{\Gamma\left(j+\frac{3}{2}\right)}{\Gamma\left(j+1\right)\Gamma\left(\frac{3}{2}\right)}z^{2k-j},
\end{equation}
and 
 \begin{equation}\label{eq:zzero}
z_{0}=-\sgn(x)\left[\frac{\Gamma\left(2k+2\right)\Gamma\left(\frac{3}{2}\right)}{2\Gamma\left(2k+\frac{3}{2}\right)}\right]^{\frac{1}{2k+1}}=-\sgn(x)\alpha_k^{-\frac{1}{2k+1}}.
\end{equation}   
On account of Equations (2.4), (2.35), (2.56) and (2.60) in  \cite{Claeys-2012}, it comes out that the Hamiltonian $h$ is related to the above RH problem through the formula
\begin{align}
  h(x)&=\frac{2k+1}{4(k+1)}\alpha_k^{-\frac{1}{2k+1}} |x|^{\frac{2k+2}{2k+1}}+|x|^{\frac{1}{2k+1}}(B_{1}(x))_{12}.  \label{eq:relation-h-B1-appendix}
\end{align}

To establish the large $|x|$ behavior of $B_1$, we need to construct global and local parametrices for the RH problem for $S$. The global parametrix reads
\begin{equation}\label{eq:Pinfty}
P^{(\infty)}(\eta)=(|x|^{\frac{1}{2k+1}}(\eta-z_{0}))^{-\frac 14 \sigma_3} N, \quad \eta\in\mathbb{C}\setminus(-\infty,z_{0}],
\end{equation}
which satisfies the jump condition
\begin{equation}
P^{(\infty)}_+(\eta)=P^{(\infty)}_-(\eta)\begin{pmatrix}
    0 & 1
    \\
    -1 & 0
\end{pmatrix}, \qquad \eta \in (-\infty, z_0). 
\end{equation}
For the local parametrix, we define 
\begin{equation}
  f(\eta)=\left(\frac{3}{2}g(\eta)\right)^{\frac{2}{3}}, 
\end{equation}
where $g$ is given in \eqref{eq:gfunction-appendix}. 
Note that, as $\eta \to z_0$, 
\begin{align}\label{eq:f(eta)-expand-z0}
 f(\eta)&=\left[\frac{3}{2}p(1)(\eta-z_{0})^{\frac{3}{2}}\left(1+\frac{p'(1)}{z_{0}p(1)}(\eta-z_{0})+\mathcal{O}((\eta-z_{0})^2)\right)\right]^{\frac{2}{3}} \nonumber \\
 &=\left(\frac{3}{2}\right)^{\frac{2}{3}}p(1)^{\frac{2}{3}}(\eta-z_{0})+\mathcal{O}((\eta-z_{0})^2),     
\end{align}
it is then readily seen that $f$ is conformal in a small fixed disk $U(z_{0};\rho)$ around $\eta=z_0$. We build the local parametrix by
\begin{equation}\label{eq:P0}
P^{(z_{0})}(\eta)=E(\eta) \Phi^{(\Ai)}(|x|^{\frac{4k+3}{3(2k+1)}} f(\eta))e^{|x|^{\frac{4k+3}{2(2k+1)}}g(\eta)\sigma_3}, \qquad \eta\in U(z_{0}; \rho),
\end{equation}
where
\begin{equation}
E(\eta):=(|x|^{\frac{1}{2k+1}}(\eta-z_{0}))^{-\frac{1}{4} \sigma_3}e^{-\frac{\pi i}{4} \sigma_3}(|x|^{\frac{4k+3}{3(2k+1)}} f(\eta))^{\frac{1}{4}\sigma_3},
\end{equation}
is an analytic prefactor in $U(z_{0}; \rho)$ and $\Phi^{(\Ai)}$ is the Airy parametrix characterized by the jump condition \eqref{eq:Airyjump} and the asymptotic behavior \eqref{infty:Ai}. 

As usual, by setting
\begin{equation}\label{def-R-appendix-C}
R(\eta)=\begin{cases}
S(\eta)P^{(z_{0})}(\eta)^{-1}, \quad & \eta \in U(z_{0}; \rho),\\
S(\eta)P^{(\infty)}(\eta)^{-1}, \quad & \eta \in \mathbb{C} \setminus \overline{U(z_{0}; \rho)},
\end{cases}
\end{equation}
it is straightforward to check that $R$ satisfies the following RH problem.
\paragraph{RH problem for $R$}
\begin{enumerate}
\item [(a)] $R(\eta)$ is analytic in $\mathbb{C}\setminus \Gamma^{R}$, where $\Gamma^{R}:=\Gamma_1^S \cup\Gamma_{2}^{S}\cup\Gamma_{4}^{S}\cup \partial U(z_0;\rho)\setminus U(z_0;\rho)$. 
\item [(b)] On $\Gamma^{R}$, the limiting values $R_{\pm}(\eta)$ exist and satisfy the jump condition $$R_{+}(\eta)=R_{-}(\eta)v_{R}(\eta),$$ 
where
\begin{eqnarray}
v_{R}(\eta)=
\begin{cases}
P^{(\infty)}(\eta)\left(\begin{matrix}
1 & e^{-2|x|^{\frac{4k+3}{4k+2}} g(\eta)}\\
0 & 1
\end{matrix}\right)P^{(\infty)}(\eta)^{-1}, &\eta\in\Gamma_{1}^{S} \setminus \partial U(z_0;\rho), \\
P^{(\infty)}(\eta)\left(\begin{matrix}
1&0\\e^{2|x|^{\frac{4k+3}{2}}g(\eta)}&1
\end{matrix}\right)P^{(\infty)}(\eta)^{-1},  &\eta\in\Gamma_{2}^{S}\cup\Gamma_{4}^{S} \setminus \partial U(z_0;\rho),\\
P^{(z_0)}(\eta)P^{(\infty)}(\eta)^{-1}, &\eta\in \partial U(z_0;\rho).
\end{cases}
\end{eqnarray}
\item [(c)] As $\eta\to\infty$, we have  
\begin{equation}\label{eq:Rexpand}
R (\eta)=I+\frac{R_1(x)}{\eta}+\Boh(\eta^{-2}),
\end{equation}
where
\begin{equation}\label{eq:relation-R1-B1-appendix}
    R_{1}(x)=B_{1}(x)-\frac{z_0}{4}\sigma_{3}.
\end{equation}
\end{enumerate}
As $x\to\pm\infty$,  $v_R(\eta)$ tends to the identity matrix exponentially fast for $\eta \in \Gamma^{R} \setminus \partial U(z_{0}; \rho)$.
For $\eta\in \partial U(z_{0}; \rho)$, we have from \eqref{eq:Pinfty}, \eqref{eq:P0} and \eqref{infty:Ai} that 
\begin{equation}\label{eq:vRexpand}
\begin{split}
v_R (\eta)
=|x|^{-\frac{1}{4(2k+1)}\sigma_3}\left(I +\frac{\Delta_1(\eta)}{|x|^{\frac{4k+3}{2(2k+1)}}}+\frac{\Delta_2(\eta)}{|x|^{\frac{4k+3}{2k+1}}}+ \Boh(|x|^{-\frac{3(4k+3)}{2(2k+1)}})\right)|x|^{\frac{1}{4(2k+1)}\sigma_3}, 
\end{split}
\end{equation}
as $x\to\pm\infty$, where  $\Delta_1$ is an off-diagonal matrix-valued function and $\Delta_2$ is a diagonal matrix-valued function with
\begin{equation}\label{eq:Delta}
(\Delta_1)_{12}=\frac{5}{48}\frac{1}{f(\eta)^{\frac{3}{2}}(\eta-z_0)^{\frac{1}{2}}}.
\end{equation}

By the standard argument for the small norm RH problem, it follows from \eqref{eq:vRexpand} that
\begin{equation}\label{eq:R12-appendix}
(R_1)_{12}=|x|^{-\frac{2(k+1)}{2k+1}}
\Res\left((\Delta_1)_{12}, z_0\right)+\Boh(|x|^{-\frac{6k+5}{2k+1}}).
\end{equation}
In view of \eqref{eq:f(eta)-expand-z0}, we have
\begin{equation}\label{eq:Res-Delta1-12}
\Res\left((\Delta_1)_{12}, z_0\right)=-\frac{5}{72z_0}\frac{p'(1)}{p(1)^2},
\end{equation}
where $p$ is defined in \eqref{eq:p}. To calculate $p(1)$ and $p'(1)$, we rewrite $g(\eta)$ in the form 
\begin{equation}
    g(\eta)=(\eta-z_{0})^{\frac{3}{2}}q(\eta)
\end{equation}
with 
\begin{equation}
q(\eta)=\sum\limits_{j=0}^{2k}q_{j}(\eta-z_{0})^{j}.
\end{equation}
It then follows that 
\begin{equation}\label{eq:relation-p-q}
    p(1)=q_{0}, \qquad p'(1)=z_{0}q_{1}.
\end{equation}
Note that (see \eqref{eq-matching-g2-theta} and \eqref{def:g_2})
\begin{equation} 
\label{eq:d-condition}
\theta(\eta;\sgn(x))-g(\eta)=\mathcal{O}(\eta^{-\frac{1}{2}}),\qquad \eta\to\infty,
\end{equation}
where $\theta(\zeta;x)$ is given in \eqref{def:thetaeta}.
By expanding the left hand side of the above formula in terms of powers of $\eta-z_0$,  
we obtain
\begin{equation}
    \begin{split}\label{eq:value-q0-q1}
q_{0}=\frac{2\Gamma\left(2k+\frac{3}{2}\right)}{\Gamma(2k+1)\Gamma\left(\frac{5}{2}\right)}z_{0}^{2k},\qquad
q_{1}=\frac{2\Gamma\left(2k+\frac{3}{2}\right)}{\Gamma(2k)\Gamma\left(\frac{7}{2}\right)}z_{0}^{2k-1}.  
    \end{split}
\end{equation}
This, together with \eqref{eq:relation-p-q} and  \eqref{eq:Res-Delta1-12}, implies that
\begin{equation}\label{eq:Delta1-12-explicit-value}
\Res((\Delta_{1})_{12},z_{0})=-\frac{5}{72}\frac{2k\cdot\Gamma(2k+1)\Gamma(\frac{5}{2})}{5\Gamma\left(2k+\frac{3}{2}\right)}z_{0}^{-2k-1}=\frac{\sgn(x)}{24}\frac{2k}{2k+1}.
\end{equation}
A combination of \eqref{eq:zzero}, \eqref{eq:relation-h-B1-appendix}, \eqref{eq:relation-R1-B1-appendix}, \eqref{eq:R12-appendix} and \eqref{eq:Delta1-12-explicit-value} finally leads to \eqref{eq-h(x)-asym-expansion}.

\section*{Acknowledgements}
Dan Dai was partially supported by grants from the Research Grants Council of the Hong Kong Special Administrative Region, China [Project No. CityU 11311622, CityU 11306723 and CityU 11301924]. Wen-Gao Long was partially supported by the National Natural Science Foundation of China [Grant No. 12401094], the Natural Science Foundation of Hunan Province [Grant No. 2024JJ5131] and the Outstanding Youth Fund of Hunan Provincial Department of Education [Grant No. 23B0454]. Shuai-Xia Xu was partially supported by the National Natural Science Foundation of China [Grant No. 12431008 and 12371257] and Guangdong Basic and Applied Basic Research Foundation [Grant No. 2022B1515020063]. Lu-Ming Yao was partially supported by the National Natural Science Foundation of China [Grant No. 12401316].  Lun Zhang was partially supported by National Natural Science Foundation of China [Grant No. 12271105] and ``Shuguang Program'' supported by Shanghai Education Development Foundation and Shanghai Municipal Education Commission.

\end{document}